\newtheorem{theorem}{Theorem}[section]
\newtheorem{theorem*}[theorem]{Theorem*}
\newtheorem{lemma}[theorem]{Lemma}
\newtheorem{corollary}[theorem]{Corollary}
\newtheorem{proposition}[theorem]{Proposition}
\newtheorem{assumption}{Assumption}
\theoremstyle{definition}
\newenvironment{example}
{\pushQED{\qed}\examplex}
{\popQED\endexamplex}
\newtheorem{definition}[theorem]{Definition}
\newtheorem{remark}[theorem]{Remark}
\theoremstyle{remark}
\def\sol{\mathrm{Sol}}
\title{\bf Four Lectures on Euler Integrals}
\author{Saiei-Jaeyeong Matsubara-Heo, Sebastian Mizera, Simon Telen}
\date{\today}
\begin{document}

\maketitle

\begin{abstract}
    These lecture notes provide a self-contained introduction to Euler integrals, which are frequently encountered in applications. In particle physics, they arise as Feynman integrals or string amplitudes. Our four selected topics demonstrate the diverse mathematical techniques involved in the study of Euler integrals, including polyhedral geometry, very affine varieties, differential equations, and computational algebra.
\end{abstract}

\hypersetup{linkcolor=black}
\tableofcontents
\hypersetup{linkcolor=orange!80!black}

\pagebreak

\addcontentsline{toc}{section}{Introduction}
\section*{Introduction} 
Consider $\ell$ Laurent polynomials $f_1, \ldots, f_\ell$ in $n$ variables $x = (x_1, \ldots, x_n)$ with complex coefficients. By an \emph{Euler integral}, we mean any integral of the following form: 
\begin{equation} \label{eq:integralintro}
\int_\Gamma \,  \frac{x_1^{\nu_1} \cdots x_n^{\nu_n}}{f_1^{s_1} \cdots f_\ell^{s_\ell}} \, \frac{{\rm d} x_1}{x_1} \wedge \cdots \wedge \frac{{\rm d} x_n}{x_n} \, = \, \int_\Gamma f^{-s} x^\nu \, \frac{{\rm d} x}{x} \, .
\end{equation}
The right-hand side is our shorthand notation. The first example is the Euler \emph{beta~function} 
\begin{equation} \label{eq:betafunc}
B(\nu, 1-s) \, = \, \int_0^1 \frac{x^{\nu}}{(1-x)^{s}} \, \frac{{\rm d} x}{x} \, \textcolor{RoyalBlue}{ = } \, \frac{\Gamma(\nu) \Gamma(1-s)}{\Gamma(\nu + 1-s)}, \quad \text{where} \quad \Gamma(u) \, = \, \int_0^\infty t^{u-1} e^{-t} \, {\rm d} t 
\end{equation}
is the \emph{gamma function}. The equality in blue will be derived below. Such integrals have been called many different names, depending on the context in which they are studied. They were called \emph{generalized Euler integrals} by Gelfand, Kapranov and Zelevinsky \cite{gelfand1990generalized}. This was motivated by Euler's integral representation of Gauss' hypergeometric function. In fact, the integral \eqref{eq:integralintro} represents a \emph{generalized} hypergeometric function, and the name \emph{hypergeometric integral} has appeared in the literature as well \cite{aomoto2011theory}. When $s_1 = \cdots = s_\ell = 1$ and $\Gamma = \mathbb{R}^n_+$, our integral is a function of $\nu$ called the \emph{Mellin transform} of $(f_1 \cdots f_\ell)^{-1}$ \cite{nilsson2013mellin}. This lead the authors of \cite{berkesch2014euler} to use the name \emph{Euler-Mellin integrals} for general $s$ and $\Gamma = \mathbb{R}^n_+$.
Seminal contributions like \cite{aomoto1975equations,gelfand1986general} justify the name \emph{Aomoto-Gelfand integrals}.
In physics, \emph{Feynman integrals} in quantum field theory and  \emph{string amplitudes} in superstring theory take the form \eqref{eq:integralintro} for particular choices of $f_i$. We elaborate on these specific polynomials below. In Bayesian statistics, Euler integrals appear as \emph{marginal likelihood integrals} \cite{borinsky2023bayesian}. 
In our title, we chose to use \emph{Euler integrals} as an umbrella term for all these instances of \eqref{eq:integralintro}. 

In different sections, we will view the integral \eqref{eq:integralintro} as a function of different sets of parameters. For instance, in Section~\ref{sec:1}, we will fix $\Gamma = \mathbb{R}^n_{+}$ and think of \eqref{eq:integralintro} as a function of $s$ and $\nu$. On the other hand, in Section \ref{sec:3}, we think of the integrand as an element of a cohomology vector space. Hence, the integral gives a linear function which sends $\Gamma$ to \eqref{eq:integralintro}. We will also consider the case where the coefficients of $f_i$ depend on some parameters $z$. In this case our integral is a function of $z$ satisfying some interesting differential equations, see Section \ref{sec:4}. 

As mentioned above, Euler integrals appear in particle physics. The first important example comes from quantum field theory, where \emph{Feynman integrals} are used to describe particle scattering processes. For a complete introduction to the subject, we refer to the recent book by Weinzierl \cite{weinzierl2022feynman}. In the \emph{Lee-Pomeransky} representation \cite{lee2013critical}, up to a prefactor involving gamma functions in $s, \nu$, the Feynman integral of a graph $G$ takes the form 
\begin{equation} \label{eq:feynman}
{\cal I}_G \, = \, \int_{\mathbb{R}^n_+} \frac{x^\nu}{({\cal U}_G + {\cal F}_G)^s} \, \frac{{\rm d}x}{x},
\end{equation}
where $n$ is the number of \emph{internal edges} of $G$, and ${\cal U}_G, {\cal F}_G$ are the first and second \emph{Symanzik polynomials} associated to the graph. We illustrate this with one of our running examples. 

\begin{example}\label{ex:triangle}
Consider the triangle diagram $G$ in \eqref{eq:trianglediag} with three massless internal edges. The internal edges carry variables $(x_1, x_2, x_3)$. The three external (open) edges attached to each vertex carry the \emph{kinematic parameters} $(t_1, t_2, t_3)$:
\begin{equation} \label{eq:trianglediag}
\begin{gathered}
\begin{tikzpicture}[scale=1,line width=1.2pt]
\coordinate (t1) at (0,0);
\coordinate (t2) at (0.5,0.866);
\coordinate (t3) at (1,0);
\draw[] (t1) --  (t2) node[midway, left] {\footnotesize $x_3$}  -- (t3) node[midway, right] {\footnotesize $x_1$} -- (t1) node[midway, below] {\footnotesize $x_2$};
\draw[] (t1) -- ++(210:0.4) node[left] {\footnotesize $t_1$};
\draw[] (t2) -- ++(90:0.4) node[above] {\footnotesize $t_2$};
\draw[] (t3) -- ++(-30:0.4) node[right] {\footnotesize $t_3$};
\end{tikzpicture}
\end{gathered}
\end{equation}
The polynomial $\mathcal{U}_G$ is the sum over spanning trees of $G$, with each term given by the $x_i$'s not present in the tree:
\begin{equation}
\mathcal{U}_G =
\begin{gathered}
\begin{tikzpicture}[scale=0.5,line width=1.2pt]
\coordinate (t1) at (0,0);
\coordinate (t2) at (0.5,0.866);
\coordinate (t3) at (1,0);
\draw[lightgray!50] (t1) -- (t2) -- (t3) -- (t1);
\draw[] (t3) -- (t1) -- (t2);
\draw[] (t1) -- ++(210:0.4);
\draw[] (t2) -- ++(90:0.4);
\draw[] (t3) -- ++(-30:0.4);
\end{tikzpicture}
\end{gathered}
+
\begin{gathered}
\begin{tikzpicture}[scale=0.5,line width=1.2pt]
\coordinate (t1) at (0,0);
\coordinate (t2) at (0.5,0.866);
\coordinate (t3) at (1,0);
\draw[lightgray!50] (t1) -- (t2) -- (t3) -- (t1);
\draw[] (t1) -- (t2) -- (t3);
\draw[] (t1) -- ++(210:0.4);
\draw[] (t2) -- ++(90:0.4);
\draw[] (t3) -- ++(-30:0.4);
\end{tikzpicture}
\end{gathered}
+
\begin{gathered}
\begin{tikzpicture}[scale=0.5,line width=1.2pt]
\coordinate (t1) at (0,0);
\coordinate (t2) at (0.5,0.866);
\coordinate (t3) at (1,0);
\draw[lightgray!50] (t1) -- (t2) -- (t3) -- (t1);
\draw[] (t2) -- (t3) -- (t1);
\draw[] (t1) -- ++(210:0.4);
\draw[] (t2) -- ++(90:0.4);
\draw[] (t3) -- ++(-30:0.4);
\end{tikzpicture}
\end{gathered}
= x_1 + x_2 + x_3.
\end{equation}
The $\mathcal{F}_G$ polynomial is given similarly as a sum of spanning two-forests (disjoint unions of two trees), each weighted with minus the corresponding kinematic variable:
\begin{equation}
\mathcal{F}_G =
\begin{gathered}
\begin{tikzpicture}[scale=0.5,line width=1.2pt]
\coordinate (t1) at (0,0);
\coordinate (t2) at (0.5,0.866);
\coordinate (t3) at (1,0);
\draw[lightgray!50] (t1) -- (t2) -- (t3) -- (t1);
\draw[] (t2) -- (t3);
\draw[] (t1) -- ++(210:0.4);
\draw[] (t2) -- ++(90:0.4);
\draw[] (t3) -- ++(-30:0.4);
\end{tikzpicture}
\end{gathered}
+
\begin{gathered}
\begin{tikzpicture}[scale=0.5,line width=1.2pt]
\coordinate (t1) at (0,0);
\coordinate (t2) at (0.5,0.866);
\coordinate (t3) at (1,0);
\draw[lightgray!50] (t1) -- (t2) -- (t3) -- (t1);
\draw[] (t3) -- (t1);
\draw[] (t1) -- ++(210:0.4);
\draw[] (t2) -- ++(90:0.4);
\draw[] (t3) -- ++(-30:0.4);
\end{tikzpicture}
\end{gathered}
+
\begin{gathered}
\begin{tikzpicture}[scale=0.5,line width=1.2pt]
\coordinate (t1) at (0,0);
\coordinate (t2) at (0.5,0.866);
\coordinate (t3) at (1,0);
\draw[lightgray!50] (t1) -- (t2) -- (t3) -- (t1);
\draw[] (t1) -- (t2);
\draw[] (t1) -- ++(210:0.4);
\draw[] (t2) -- ++(90:0.4);
\draw[] (t3) -- ++(-30:0.4);
\end{tikzpicture}
\end{gathered}
=
-t_1 \cdot x_2 x_3 - t_2 \cdot x_3 x_1 - t_3 \cdot x_1 x_2 .
\end{equation}
The associated integral is given by
\begin{equation} \label{eq:Itriangle}
{\cal I}_{G} \, = \, \int_{\mathbb{R}^3_+} \frac{x_1^{\nu_1} x_2^{\nu_2} x_3^{\nu_3}}{( x_1 + x_2 + x_3 - t_1 \cdot x_2 x_3 - t_2 \cdot x_1 x_3 - t_3 \cdot x_1 x_2)^s} \, \frac{{\rm d}x_1{\rm d}x_2{\rm d}x_3}{x_1x_2x_3}.
\end{equation}
The exponents $\nu_i$ are typically taken to be non-negative integers and $s = \mathrm{D}/2$ is half the space-time dimension $\mathrm{D}$. It is often convenient to think of $(\nu_1, \nu_2, \nu_3)$ and $s$ as generic parameters, which is referred to as \emph{analytic} and \emph{dimensional regularization} respectively.
\end{example}

The second application of Euler integrals in physics comes from scattering amplitudes in string theory. Instead of \emph{particles}, one computes the probability of \emph{strings} interacting with each other. See \cite{Mafra:2022wml} for a comprehensive review. This offers a nice immediate connection to algebraic geometry: the integration is on the moduli space ${\cal M}_{0,m}$ of genus zero curves with $m$ marked points. Equivalently, this is the space of configurations of $m$ distinct points on $\mathbb{P}^1$ up to its automorphisms ${\rm PSL}(2)$. We can represent these points as the columns of a $2 \times m$ matrix with nonzero $2 \times 2$ minors. Two such matrices $M_1, M_2$ represent equivalent configurations if there is an invertible $2 \times 2$ matrix $T$ and an $n \times n$ invertible diagonal matrix $D$ such that $T \cdot M_1 \cdot D = M_2$. We can use the action of $T$ and $D$ to fix 3 out of $m$ points, leaving $n = m-3$ degrees of freedom. Following \cite[Eq.~(1.5)]{arkani2021stringy}, we write a point of ${\cal M}_{0,m}$ as
\begin{equation} \label{eq:positiveparam}
    M \, = \,  \begin{pmatrix}
1 & 1 & 1 & 1 & \cdots & 1 & 0 \\
0 & 1 & 1+x_1 & 1+x_1+x_2 & \cdots & 1+x_1+ \cdots +x_n & 1
\end{pmatrix},
\end{equation}
where $n = m-3$ and the $2 \times 2$ minors $f_{ij} = M_{1i}M_{2j}-M_{1j}M_{2i}, i < j$ are nonzero. The genus zero contribution to the $m$-point string amplitude is given by an Euler integral depending on an extra parameter $\alpha'$:
\begin{equation} \label{eq:stringamplitude}
{\cal I}_{m} = (\alpha')^n \cdot \int_{{\cal M}_{0,m}^+} \frac{x_1^{\alpha' \nu_1} \cdots x_n^{\alpha' \nu_n}}{\prod_{1 < i+1<j<m} f_{ij}^{\alpha' s_{ij}}} \, \frac{{\rm d} x}{x}.
\end{equation}
The pairs $(i,j)$ excluded in the product in the denominator are those for which the minor $f_{ij}$ is either constant or one of the $x$-variables. The integration is over the \emph{positive part} ${\cal M}_{0,m}^+$ of ${\cal M}_{0,m}$, which is the subset of points $M$ satisfying $f_{ij} >0$, for all $1 \leq i < j \leq m$. Using the parameterization \eqref{eq:positiveparam}, one checks that this is $\mathbb{R}^n_+$.

There are two physically interesting limits: $\alpha' \rightarrow 0$ and $\alpha' \rightarrow \infty$. The first one is called the \emph{field theory limit} in which strings become particles, and the second is the \emph{high-energy limit}. We will see in Section \ref{sec:2} that both of them admit an elegant geometric description. 

\begin{example}[$m=4$]\label{ex:0.2}
The moduli space ${\cal M}_{0,4}$ has dimension 1. The four-point string amplitude \eqref{eq:stringamplitude} is ${\cal I}_4 = \alpha' \cdot B(\alpha' \nu, - \alpha' \nu + \alpha' s_{13})$, where $B$ is the beta function from \eqref{eq:betafunc}.
\end{example}

\begin{example}[$m=5$]
The matrix parameterizing $\mathcal{M}_{0,5}$ is
\begin{equation}
M \, = \,  \begin{pmatrix}
1 & 1 & 1 & 1 & 0 \\
0 & 1 & 1+x_1 & 1+x_1+x_2 & 1
\end{pmatrix},
\end{equation}
which has only $5$ ordered minors depending on the variables $(x_1,x_2)$:
\begin{equation}
f_{13} = 1 + x_1, \quad f_{14} = 1 + x_1 + x_2, \quad f_{23} = x_1, \quad f_{24} = x_1 + x_2, \quad f_{34} = x_2.
\end{equation}
The minors $f_{23}$ and $f_{34}$ are not included in the integrand of \eqref{eq:stringamplitude}, since they would only shift the exponents of $x_j^{\alpha' \nu_j}$. The five-point string amplitude is given by
\begin{equation} \label{eq:I5}
{\cal I}_{5} \, = \, (\alpha')^2 \cdot \int_{\mathbb{R}^2_+} \frac{x_1^{\alpha' \nu_1} x_2^{\alpha' \nu_2}}{(1+x_1)^{\alpha' s_{13}} (1 + x_1+x_2)^{\alpha' s_{14}} (x_1 + x_2)^{ \alpha' s_{24}} } \, \frac{{\rm d}x_1 {\rm d}x_2}{x_1x_2} .
\end{equation}
The parameters $(\nu_1, \nu_2, s_{13}, s_{14}, s_{24})$ describe momenta and angles of the $5$ strings involved in the scattering process.
\end{example}

Euler integrals have many other applications, including marginal likelihood integrals \cite{borinsky2023bayesian}, wave functions in cosmology \cite{Arkani-Hamed:2017fdk}, and correlation functions of conformal field theories \cite{Britto:2021prf,Dotsenko:1984nm}.

These notes present the basics on Euler integrals from different points of view. They provide a roadmap through the literature for a reader who is new to the subject. At the same time, we hope they serve as a helpful overview of important results for experts. Section \ref{sec:1} discusses convergence and meromorphic continuation, which leads us to study convex polytopes and polyhedral cones. Section \ref{sec:2} is about certain limits which are meaningful in physics applications. This brings in algebraic equations, very affine varieties and Euler characteristics. Section \ref{sec:3} develops the theory of (algebraic) twisted (co)homology on these very affine varieties. Section \ref{sec:4} identifies difference and differential equations satisfied by Euler integrals. Finally, Section \ref{sec:5} contains a list of open problems.

\section{Newton polytopes and convergence} \label{sec:1}
This section discusses convergence of the integral \eqref{eq:integralintro}, viewed as a function of the exponents $s,\nu$. The integration contour $\Gamma = \mathbb{R}^n_+$ is fixed throughout the section. We set 
\begin{equation} \label{eq:Eulermellin}
    {\cal I}(s,\nu) \, = \, \int_{\mathbb{R}^n_+} f^{-s} x^\nu \, \frac{{\rm d} x}{x}.
\end{equation}
To ensure that the integrand is finite on $\mathbb{R}^n_+$, we make the following assumption. 
\begin{assumption} \label{assum:poscoeffs}
The coefficients of $f_i$ are real, positive numbers. That is,
\begin{equation} \label{eq:fi} f_i \, = \, \sum_{\alpha \,  \in \, {\rm supp}(f_i)} c_{i,\alpha} \cdot x^\alpha, \quad i = 1, \ldots, \ell, 
\end{equation}
where $c_{i,\alpha} \in \mathbb{R}_+$, ${\rm supp}(f_i) \subset \mathbb{Z}^n$ is the \emph{support} of $f_i$ (see Definition \ref{def:newton}) and $x^\alpha = x_1^{\alpha_1} \cdots x_n^{\alpha_n}$. 
\end{assumption}
Before studying convergence, we should address \emph{how to evaluate} the integrand $f^{-s}x^\nu$. Since $s$ and $\nu$ are complex vectors, this function may be multi-valued. For instance, we have
\[ f_i(x)^{s_i} \, = \, \exp({s_i \, {\rm log} f_i(x)}), \]
and ${\rm log}$ is only defined up to translates by integer multiples of $2 \pi \sqrt{-1}$. When $s$ is not an integer, ${\rm exp}(s_i F) \neq {\rm exp}(s_i(F+ 2 \pi \sqrt{-1}k))$ for some integer $k$, i.e., there are multiple branches of $f_i(x)^{s_i}$. 
Assumption \ref{assum:poscoeffs} ensures that $f_i$ takes positive values on $\mathbb{R}^n_+$, so that there is precisely one positive branch of $\log f_i$ and $\log x_j$. In this section, our integrand is 
\[ f^{-s} x^{\nu} \, = \, \exp({-s_1 \log f_1 - \cdots - s_\ell \log f_\ell + \nu_1 \log x_1 + \cdots + \nu_n \log x_n}), \]
where the unique positive branches of $\log f_i$ and $\log x_j$ are intended. 

As it turns out, statements about convergence of \eqref{eq:Eulermellin} involve \emph{convex polytopes} and \emph{polyhedral cones}. We start by introducing these objects, and then switch to convergence results from \cite{arkani2021stringy,berkesch2014euler,nilsson2013mellin}. In \cite{berkesch2014euler,nilsson2013mellin}, \eqref{eq:Eulermellin} was called an \emph{Euler-Mellin integral} and weaker assumptions on $f_i$ are used. In this text, we stick with Assumption \ref{assum:poscoeffs} for simplicity.

\subsection{A little polyhedral geometry}
This section introduces properties of convex polytopes and polyhedral cones that we need later on. We omit most proofs, and refer the reader to the standard textbook \cite{ziegler2012lectures} for more details.
A subset $P \subset \mathbb{R}^n$ is called \emph{convex} if for any $p_1, p_2 \in P$, the line segment $p_1p_2$ is contained in $P$. The \emph{convex hull} of $A \subset \mathbb{R}^n$ is the smallest convex subset $P \subset \mathbb{R}^n$ such that $A \subset P$. We denote this by ${\rm conv}(A)$. A \emph{convex polytope} in $\mathbb{R}^n$ is the convex hull of finitely many points. Since we will not encounter any non-convex polytopes in this text, we will sometimes omit the adjective \emph{convex}. If $P$ is a polytope and $s$ is a nonnegative number, the \emph{$s$-dilation} of $P$~is the convex polytope
\[ s \cdot P \, = \, \{ s\cdot p \, : \, p \in P \}. \]
Here $s \cdot p$ is the usual scalar multiplication for vectors in $\mathbb{R}^n$. It is easy to check that $s \cdot P$ is indeed a convex polytope. The \emph{Minkowski sum} of two polytopes $P, Q$ is a new polytope
\[ P + Q \, = \, \{ p + q \, : \, p \in P, \, q \in Q \}. \]
This binary operation is commutative and associative. An example is shown in Figure \ref{fig:MS} (left), where we take the sum of three polytopes in $\mathbb{R}^2$. Each is the convex hull of the points represented by black bullets. The \emph{dimension} of a polytope is the dimension of the smallest affine space containing it. Figure \ref{fig:MS} (left) shows two polytopes of dimension two (these are also called \emph{polygons}), and two polytopes of dimension one (i.e., line segments). In the right part of that figure, we show a three-dimensional convex polytope in $\mathbb{R}^3$. 
\begin{figure}
    \centering
    \includegraphics[width = 10cm, valign=c]{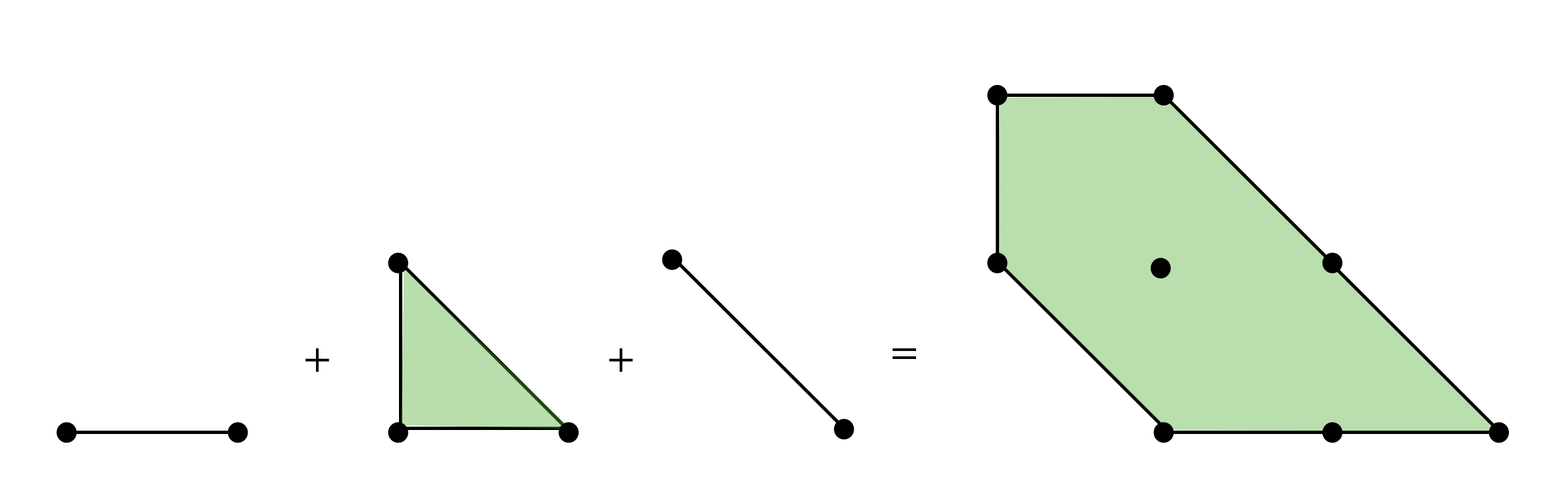}
    \adjustbox{valign=c}{\begin{tikzpicture}
    \node[] at (0,0) {\includegraphics[width = 4cm, valign=c]{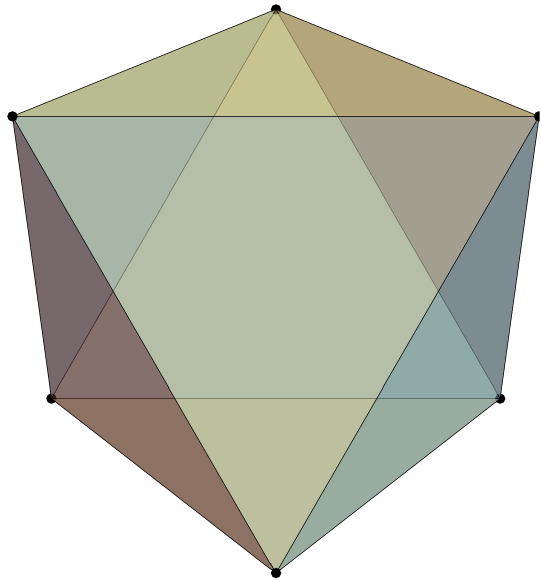}};
    \node[] at (0,-2.3) {$\scriptstyle (1,1,0)$};
    \node[] at (0,2.3) {$\scriptstyle (0,0,1)$};
    \node[] at (2.5,1) {$\scriptstyle (0,1,1)$};
    \node[] at (-2.5,1) {$\scriptstyle (1,0,1)$};
    \node[] at (2.2,-1) {$\scriptstyle (0,1,0)$};
    \node[] at (-2.2,-1) {$\scriptstyle (1,0,0)$};
    \end{tikzpicture}}
    \caption{Left: Minkowski sum of three polytopes in $\mathbb{R}^2$. Right: The polytope $\Delta(\mathcal{U}_G + \mathcal{F}_G)$ of the triangle Feynman diagram.}
    \label{fig:MS}
\end{figure}
The polytopes we will encounter in this text arise as the \emph{Newton polytope} of a Laurent polynomial. 
\begin{definition}[Newton polytope] \label{def:newton}
Let $f = \sum_{\alpha \in \mathbb{Z}^n} c_\alpha \, x^\alpha \in \mathbb{C}[x_1^{\pm 1}, \ldots, x_n^{\pm 1}]$ be a Laurent polynomial. The \emph{support} of $f$ is the set ${\rm supp}(f) = \{ \alpha \in \mathbb{Z}^n \, : \, c_\alpha \neq 0 \}$. The \emph{Newton polytope} $\Delta(f) \subset \mathbb{R}^n$ is defined as the convex hull of the support, i.e., $\Delta(f) = {\rm conv}({\rm supp}(f))$.
\end{definition}
At the level of Laurent polynomials, Minkowski addition corresponds to multiplication. That is, for two Laurent polynomials $f, g$, we have $\Delta(fg) = \Delta(f) + \Delta(g)$.
\begin{example}
    The polytopes in Figure \ref{fig:MS}, from left to right, are the Newton polytopes of 
    \[ 1 + x_1, \quad 1 + x_1 + x_2, \quad x_1 + x_2, \quad (1+x_1)(1+x_1+x_2)(x_1+x_2), \]
    and the polytope $\Delta({\cal U}_G + {\cal F}_G)$ for the denominator of \eqref{eq:Itriangle}.
\end{example}
A nonzero vector $y \in \mathbb{R}^n$ defines a \emph{face} $P_y$ of a polytope $P$ as follows:
\[ P_y = \{ p \in P \, : \, y \cdot p = \min_{q \in P} y \cdot q \}. \]
In particular, $P$ is a face of itself: $P_0 = P$. Every face $P_y$ of $P$ is a polytope itself, and a face of a face of $P$ is a face of $P$ itself. If $\dim P_y = \dim P - 1$, then $P_y$ is called a \emph{facet} of $P$. Faces of dimension 0 and 1 are called \emph{vertices} and \emph{edges} respectively. For example, the polygon in the middle of Figure \ref{fig:MS} has 5 vertices, 5 facets (or edges), and one 2-dimensional face.

The faces of $P$ divide up $\mathbb{R}^n$ into finitely many regions. For a given face $Q \subseteq P$, we set
\[ C_Q \, = \, \{ y \in \mathbb{R}^n \, : \, Q \subseteq P_y \}. \]
For any face $Q \subseteq P$, $C_Q$ is a \emph{polyhedral cone}. I.e., there is a finite set $A \subset \mathbb{R}^n$ such that 
\begin{equation} \label{eq:CQ}
    C_Q \, = \,  {\rm pos}(A) = \left \{ \sum_{r \in A} c_r \, r \, : \, c_r \in \mathbb{R}_{\geq 0}  \right \} . 
\end{equation} 
All our cones are polyhedral, so we will sometimes just refer to them as \emph{cones}. The dimension of a cone is the dimension of the smallest linear space containing it. For our cones $C_Q$, we have $\dim C_Q = n - \dim Q$. E.g., if $v \in P$ is a vertex, we have $\dim C_v = n$. If $\dim P = n$ and $Q$ is a facet, then $C_Q$ is a one-dimensional cone. These are called \emph{rays}. When $Q$ runs over all faces, the cones $C_Q$ tile up $\mathbb{R}^n$. The same is true for the vertices $v$. In symbols:
\begin{equation} \label{eq:complete}
    \mathbb{R}^n \, = \, \bigcup_{Q} C_Q \, = \, \bigcup_v C_v. 
\end{equation} 
A cone $C$ is called \emph{pointed} if $C \cap (-C) = \{0\}$. If $P \subset \mathbb{R}^n$ is \emph{full-dimensional}, i.e., $\dim P = n$, the cone $C_Q$ is pointed for each face $Q \subset P$. If $C_Q$ is pointed and $A \subset \mathbb{R}^n$ is the minimal subset such that \eqref{eq:CQ} holds, the elements of $A$ are called \emph{ray generators} of $C_Q$. The reason is that each $r \in A$ generates the ray $C_{Q'} \subset C_Q$ of a facet $Q'\supset Q$. A $k$-dimensional cone is called \emph{simplicial} if it has a set of $k$ ray generators. This always holds when $k \leq 2$.

The collection of cones $\Sigma_P = \{  C_Q \, : \, Q \text{ face of } P \}$ is closed under taking intersections. In fact, one can check that $C_{Q_1} \cap C_{Q_2} = C_{Q_{12}}$, where $Q_{12} \subset P$ is the smallest face of $P$ containing both $Q_1$ and $Q_2$. This fact, together with the observation that $\Sigma_P$ is closed under taking faces (we leave the definition of a \emph{face} of a cone to the reader), makes $\Sigma_P$ into a \emph{polyhedral fan}, called the \emph{normal fan} of $P$.

\begin{example}
A pentagon $P$ in $\mathbb{R}^2$ has five vertices. These give five pointed full-dimensional cones in its normal fan. The ray separating two neighboring cones $C_{v_1}$ and $C_{v_2}$ is the cone $C_{v_1v_2}$ corresponding to the edge containing $v_1$ and $v_2$. This is illustrated in Figure \ref{fig:normalfan}. The cone $C_P = \{0\}$ is the only zero-dimensional one. The normal fan $\Sigma_P$ is invariant under translations of $P$. I.e., $\Sigma_P = \Sigma_{P+w}$ for $w \in \mathbb{R}^n$. We encourage the reader to check this. 
    \begin{figure}[h!]
    \centering
    \includegraphics[width = 11cm]{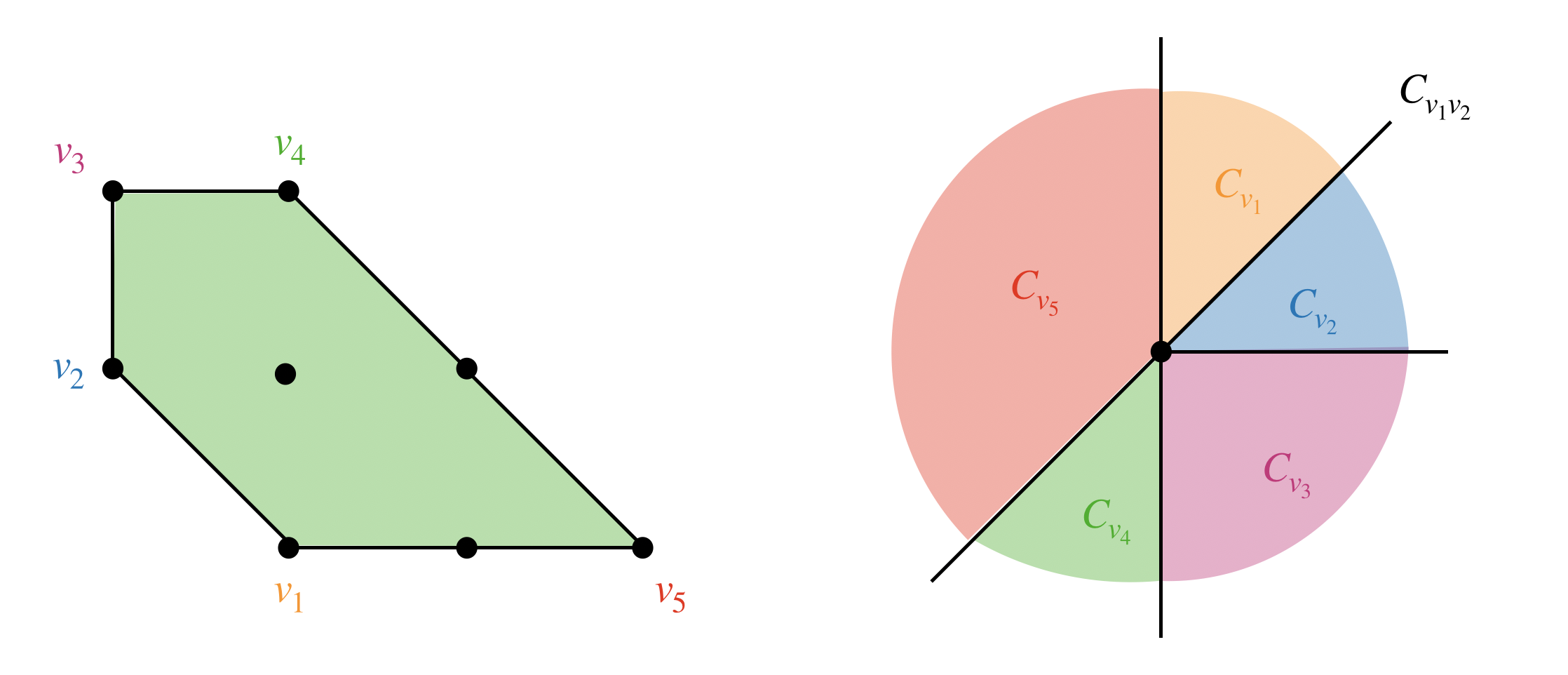}
    \caption{The normal fan of a pentagon has five two-dimensional cones.}
    \label{fig:normalfan}
\end{figure}
\end{example}
Our final construction is the \emph{polar dual} $P^\circ$ of a polytope $P \subset \mathbb{R}^n$. This is given by 
\[ P^\circ \, = \, \{ y \in \mathbb{R}^n \, : \, y \cdot p \geq -1, \text{ for all } p \in P \}. \]
If $P$ is full-dimensional and it contains the origin in its interior ${\rm int}(P)$, then $P^\circ$ is again a polytope.
Its vertices lie on the rays of the normal fan $\Sigma_P$. Hence, the normal fan induces a subdivision 
\begin{equation} \label{eq:dualdecomp}
    P^\circ \, =\, \bigcup_{v} B_v, \quad \text{where} \quad B_v \, = \, C_v \cap P^\circ. 
\end{equation}
Here $B_v$ is the convex polytope $\{ y \in C_v \, : \, y \cdot v \geq -1 \}.$
\begin{example}
The polar dual and its subdivision are illustrated in Figure \ref{fig:polardual}. To satisfy $0 \in {\rm int}(P)$, we translated our polytope so that $(0,0)$ is an interior lattice point.
    \begin{figure}[h!]
        \centering
        \includegraphics[width = 11cm]{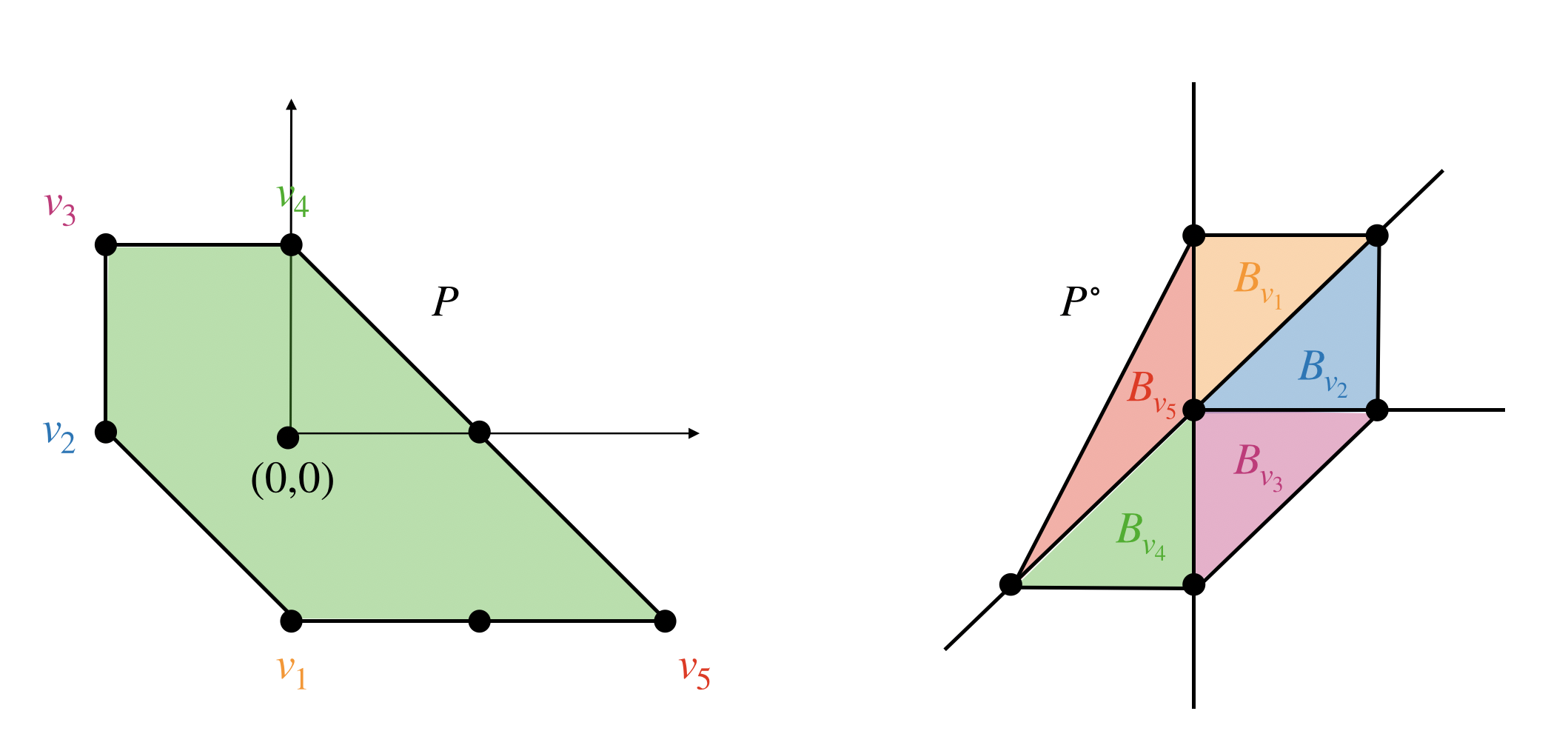}
        \caption{The polar dual of $P$ and its subdivision induced by the normal fan $\Sigma_P$. }
        \label{fig:polardual}
    \end{figure}
\end{example}
The following lemma will be useful in our discussion on convergence. 
\begin{lemma} \label{lem:allneg}
Let $P$ be a full-dimensional polytope in $\mathbb{R}^n$. We have $0 \in {\rm int}(P)$ if and only if for all vertices $v$ of $P$, $y \cdot v < 0$ for all $y \in C_v \setminus \{0\}$. 
\end{lemma}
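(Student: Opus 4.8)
The plan is to unwind the definition $C_v = \{\, y \in \mathbb{R}^n : y \cdot v = \min_{q \in P} y \cdot q \,\}$ and prove the two implications separately. In both directions the workhorses are two elementary facts: a linear functional attains its minimum over $P$ on a face $P_y$, and every nonempty face of a polytope contains a vertex of that polytope (which is again a vertex of $P$); together with the supporting/separating hyperplane theorem for convex bodies.

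For the forward implication, assume $0 \in \mathrm{int}(P)$. Fix a vertex $v$ and some $y \in C_v \setminus \{0\}$. Since $0 \in P$, we get $y \cdot v = \min_{q \in P} y \cdot q \le y \cdot 0 = 0$. Suppose for contradiction that $y \cdot v = 0$; then $y \cdot q \ge 0$ for every $q \in P$. But $0 \in \mathrm{int}(P)$ gives $-\delta y \in P$ for all sufficiently small $\delta > 0$, and $y \cdot (-\delta y) = -\delta \, \|y\|^2 < 0$, a contradiction. Hence $y \cdot v < 0$, as desired.

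For the reverse implication, assume $y \cdot v < 0$ for every vertex $v$ and every $y \in C_v \setminus \{0\}$. First I would show $0 \in P$: if not, the separating hyperplane theorem yields $y \neq 0$ and $c > 0$ with $y \cdot q \ge c$ for all $q \in P$; taking $v$ to be a vertex at which $\min_{q \in P} y \cdot q$ is attained gives $y \in C_v \setminus \{0\}$ with $y \cdot v \ge c > 0$, contradicting the hypothesis. Next, suppose $0 \in P \setminus \mathrm{int}(P)$, i.e.\ $0$ lies on the boundary of $P$. Since $P$ is full-dimensional and convex, there is a supporting hyperplane at $0$: some $y \neq 0$ with $y \cdot q \ge 0 = y \cdot 0$ for all $q \in P$. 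Then $\min_{q \in P} y \cdot q = 0$ and the face $P_y$ is a nonempty polytope (it contains $0$), hence has a vertex $v$, which is a vertex of $P$. Now $v \in P_y$ means $y \in C_v$, while $y \cdot v = 0$ contradicts the hypothesis. Therefore $0 \in \mathrm{int}(P)$.

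I expect the only delicate points to be the correct invocation of the supporting hyperplane theorem at a boundary point of the full-dimensional body $P$, and the remark that a nonempty face of $P$ contains a vertex of $P$; the remaining steps are one-line computations. An alternative route for the reverse direction would use the tiling $\mathbb{R}^n = \bigcup_v C_v$ from \eqref{eq:complete} and the subdivision \eqref{eq:dualdecomp} of $P^\circ$ directly, but the hyperplane argument above seems the cleanest and most self-contained.
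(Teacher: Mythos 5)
Your proof is correct, but it follows a genuinely different route from the one in the paper. The paper's (sketched) argument is purely polyhedral: it invokes the facet description $\mathrm{int}(P) = \{p : r_Q \cdot p > r_Q \cdot v_Q \text{ for all facets } Q\}$, observes that $0$ lies in this set exactly when all the right-hand sides $r_Q \cdot v_Q$ are negative, and then handles arbitrary $y \in C_v$ by writing $y = \sum_{v \in Q} c_Q\, r_Q$ with $c_Q \ge 0$, so that the sign condition on ray generators propagates to all of $C_v \setminus \{0\}$. You instead argue both directions with convex-analytic tools: for the forward direction you use $0 \in P$ to get $y \cdot v \le 0$ and rule out equality by moving slightly in the direction $-y$ inside the interior; for the reverse direction you use strict separation to force $0 \in P$, and a supporting hyperplane at a boundary point together with the facts that a linear functional attains its minimum over $P$ at a vertex and that a nonempty face of $P$ contains a vertex of $P$, to produce a pair $(v, y)$ with $y \in C_v \setminus \{0\}$ and $y \cdot v = 0$, contradicting the hypothesis. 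Both arguments are sound; yours is more self-contained (it avoids the facet description of the interior and the ray-generator decomposition of the normal cones $C_v$, facts the paper takes as known from polyhedral geometry), at the cost of being longer, while the paper's version is a compact deduction from standard structure theory of normal fans. One could indeed also phrase the reverse direction via the tiling $\mathbb{R}^n = \bigcup_v C_v$, as you note, but your hyperplane argument is complete as written.
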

\begin{proof}[Proof (sketch)]
The proof uses the \emph{facet description} of ${\rm int}(P)$: 
\[ {\rm int}(P) \, = \, \{ p \in \mathbb{R}^n \, : \, r_Q \cdot p > r_Q \cdot v_Q, \text{ for all facets } Q\subset P\}.\]
Here $r_Q$ is any ray generator of the ray $C_Q$, and $v_Q$ is any vertex contained in $Q$. If and only if all right-hand sides $r_Q \cdot v_Q$ are negative, $p = 0$ belongs to this set. For a given vertex $v$ of $P$, a vector $y \in C_v$ can be written as $y = \sum_{v \in Q} c_Q \, r_Q$, with $c_Q \geq 0$. The lemma follows.
\end{proof}

\subsection{\label{sec:convergence}Nilsson-Passare convergence}

The main theorem of this section identifies a region in $(s, \nu)$-space $\mathbb{C}^{\ell + n}$ in which the integral \eqref{eq:Eulermellin} converges. The result for $\ell = 1$ and $s = 1$ is due to Nilsson and Passare \cite{nilsson2013mellin}. This was generalized to integrals of the form \eqref{eq:Eulermellin} in \cite{berkesch2014euler}. The papers \cite{berkesch2014euler,nilsson2013mellin} use weaker assumptions on $f$. Our proof below is inspired by that of \cite[Section~2]{Panzer:2019yxl} and \cite[Claim 1]{arkani2021stringy}.
\begin{theorem} \label{thm:convergence}
Let ${\rm Re}(s_i) >0$ for $i = 1, \ldots, \ell$ and suppose that $\Delta(f_1) + \cdots + \Delta(f_\ell)$ has dimension $n$. The integral \eqref{eq:Eulermellin} with $f_i$ satisfying Assumption \ref{assum:poscoeffs} converges absolutely if and only if ${\rm Re}(\nu) \in {\rm int}(P(s))$, where $P(s) = {\rm Re}(s_1) \cdot \Delta(f_1) + \cdots + {\rm Re}(s_\ell) \cdot \Delta(f_\ell)$. 
\end{theorem}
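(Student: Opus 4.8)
The plan is to reduce the convergence question to a statement about the polytope $P(s)$ by decomposing $\mathbb{R}^n_+$ according to the asymptotic behavior of $f = f_1 \cdots$ (more precisely, of each $f_i$) in the various coordinate directions. Writing $x_j = e^{u_j}$, the integral becomes $\int_{\mathbb{R}^n} \exp\bigl(-\sum_i s_i \log f_i(e^u) + \nu \cdot u\bigr)\, {\rm d} u$; since $\mathrm{Re}(s_i) > 0$ and the coefficients of $f_i$ are positive, only the real parts $\mathrm{Re}(s_i)$, $\mathrm{Re}(\nu)$ matter for absolute convergence, so we may assume $s, \nu$ real. The key estimate is that for a real direction $y \in \mathbb{R}^n$ and $t \to +\infty$, $\log f_i(e^{ty + u_0}) = t\,(\min_{\alpha \in \mathrm{supp}(f_i)} y \cdot \alpha) + O(1)$, i.e. $\log f_i$ grows like the support function of $\Delta(f_i)$ in direction $y$. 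Hence the integrand decays like $\exp\bigl(t\,(\nu \cdot y - \sum_i s_i \min_{\alpha \in \mathrm{supp}(f_i)} y\cdot\alpha)\bigr) = \exp\bigl(t\,(\nu \cdot y - h_{P(s)}(y))\bigr)$, where $h_{P(s)}(y) = \min_{p \in P(s)} y \cdot p$ is the support function of the Minkowski sum $P(s) = \sum_i \mathrm{Re}(s_i)\,\Delta(f_i)$ (using that the support function of a Minkowski sum is the sum of support functions).

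Next I would make this into a genuine convergence criterion. Cover $\mathbb{R}^n$ by the maximal cones $C_v$ of the normal fan $\Sigma_{P(s)}$, indexed by the vertices $v$ of $P(s)$ (this is the tiling \eqref{eq:complete}; note $\dim P(s) = n$ is exactly the hypothesis that ensures these cones are pointed and full-dimensional). On the piece of $\mathbb{R}^n_+$ (in $u$-coordinates, on the cone $C_v$) the polynomial $f$ is dominated by the monomials supported at $v$, so $\log f_i = \mathrm{Re}(s_i)^{-1}\cdot(\,\text{vertex contribution}\,) + \text{bounded}$, and the integrand is comparable to $\exp((\nu - v)\cdot u)$ on that cone up to bounded factors. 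A cone integral $\int_{C_v} \exp(w \cdot u)\, {\rm d} u$ converges if and only if $w \cdot u < 0$ for all $u \in C_v \setminus \{0\}$ — this is a standard fact about exponential integrals over pointed cones, provable by choosing coordinates adapted to a simplicial refinement of $C_v$ and reducing to a product of one-dimensional integrals $\int_0^\infty e^{-ct}\,{\rm d} t$. Summing (finitely many) such estimates from above and below, the whole integral \eqref{eq:Eulermellin} converges absolutely if and only if, for every vertex $v$ of $P(s)$, $(\nu - v)\cdot y < 0$ for all $y \in C_v \setminus \{0\}$.

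Finally I would translate this cone condition into membership in $\mathrm{int}(P(s))$. The condition ``$(\nu - v)\cdot y < 0$ on $C_v \setminus \{0\}$ for every vertex $v$'' is exactly the statement of Lemma \ref{lem:allneg} applied to the translated polytope $P(s) - \nu$: its vertices are the $v - \nu$, with the same normal cones $C_v$ (normal fans are translation-invariant), and the lemma says $0 \in \mathrm{int}(P(s) - \nu)$ iff $y \cdot (v - \nu) < 0$ for all $y \in C_v\setminus\{0\}$ and all vertices $v$. Since $0 \in \mathrm{int}(P(s) - \nu) \iff \nu \in \mathrm{int}(P(s))$, this finishes the proof. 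I expect the main obstacle to be making the ``$O(1)$'' / ``bounded factor'' bookkeeping uniform: one must show the error between $\log f_i$ and its leading asymptotic term is bounded \emph{uniformly} over each cone $C_v$ (not just along individual rays), which requires a careful argument that on $C_v$ the ratio $f_i(x) / (\text{dominant vertex monomial})$ stays bounded above and bounded away from $0$ — this uses positivity of the coefficients (Assumption \ref{assum:poscoeffs}) essentially, since it guarantees no cancellation and lets one sandwich $f_i$ between constant multiples of its dominant monomial on the relevant region. The rest is routine polyhedral combinatorics plus the elementary cone-integral lemma.
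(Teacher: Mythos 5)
Your proposal follows essentially the same route as the paper's proof: the exponential change of variables, reduction to real exponents via absolute values, decomposition of $\mathbb{R}^n$ by the normal fan of $P(s)$, sandwiching each $f_i$ between constant multiples of its dominant vertex monomial using the positivity in Assumption \ref{assum:poscoeffs}, the pointed-cone exponential-integral criterion (Lemma \ref{lem:volB}), and Lemma \ref{lem:allneg} to translate the cone conditions into $\nu \in {\rm int}(P(s))$. The only thing to tidy is the orientation convention: with the paper's minimizing faces the vertex monomial dominates on $-C_v$ rather than $C_v$ (so your support-function ``min'' should be ``max'' for $t\to+\infty$), and the inequality you quote from Lemma \ref{lem:allneg} must be flipped consistently with whichever normal-fan convention you fix --- exactly the bookkeeping the paper handles by integrating over $-C_v$.
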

\begin{example} \label{ex:pentagon}
Ignoring the $\alpha'$ parameter for now, the string amplitude ${\cal I}_5$ from \eqref{eq:I5} is
\begin{equation} \label{eq:I5noalpha}
    {\cal I}_{5} \, = \,  \int_{\mathbb{R}^2_+} \frac{x_1^{\nu_1} x_2^{\nu_2}}{(1+x_1)^{s_{13}} (1 + x_1+x_2)^{s_{14}} (x_1 + x_2)^{ s_{34}} } \, \frac{{\rm d}x_1 {\rm d}x_2}{x_1x_2} .
\end{equation}
Suppose $(s_{13}, s_{14}, s_{34}) = (1,1,1)$. By Theorem \ref{thm:convergence}, the integral converges if $(\nu_1,\nu_2) \in {\rm int}(P)$, where $P$ is the pentagon in the middle of Figure \ref{fig:MS}. For $(\nu_1,\nu_2) = (1,1)$, we find using
\begin{minted}{julia}
Integrate[((1+x1)(1+x1+x2)(x1+x2))^(-1), {x1,0,Infinity}, {x2,0,Infinity}]
\end{minted}
in \texttt{Mathematica} that ${\cal I}_5 = \pi^2/6$. Multiplying the integrand with \texttt{x1*x2}, i.e., using $(\nu_1,\nu_2) = (2,2)$, the program prints a message saying that the integral does not converge.
\end{example}
The \emph{normalized volume} of a compact set $B\subset \mathbb{R}^n$ is defined as ${\rm Vol}(B) = n! \cdot \int_B 1 {\rm d} x$. Our proof of Theorem \ref{thm:convergence} bounds the Euler integral \eqref{eq:Eulermellin} in terms of the normalized volume of polytopes, see Equation \eqref{eqn:sandwich2} below. It uses Lemma \ref{lem:allneg}, as well as the following lemma. 
\begin{lemma} \label{lem:volB}
    Let $C \subset \mathbb{R}^n$ be an $n$-dimensional polyhedral cone and let $v \in \mathbb{R}^n$ be such that $y \cdot v < 0$ for all $y \in C \setminus \{0\}$. Then $B = \{y \in C \, : \, y \cdot v \geq -1 \}$ is a polytope with volume 
    \begin{equation}\label{eq:volume}
        {\rm Vol}(B) \, = \, \int_C {\rm exp}(y \cdot v) \, {\rm d} y.
    \end{equation} 
    If, instead, $y \cdot v \geq 0$ for some $y \in C \setminus \{0\}$, then the integral above diverges. 
\end{lemma}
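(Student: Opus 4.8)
The plan is to compute both sides of \eqref{eq:volume} by slicing the cone $C$ with the level sets of the linear functional $y \mapsto y\cdot v$ (a ``layer-cake'' argument), thereby reducing everything to $\int_0^{\infty} t^n e^{-t}\,\mathrm{d}t = \Gamma(n+1) = n!$; the divergent case will be handled separately by exhibiting a set of infinite volume on which $e^{y\cdot v}$ is bounded below. Before that I would record the shape of $B = \{y \in C : y\cdot v \ge -1\}$. As the intersection of the polyhedral cone $C$ with a closed half-space it is a polyhedron, and its recession cone is $\{w \in C : w\cdot v \ge 0\}$, which equals $\{0\}$ by hypothesis; hence $B$ is bounded, i.e.\ a polytope. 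Since $0 \in B$ and the inequality $y\cdot v \ge -1$ holds strictly near $0$, the set $B$ contains a full-dimensional neighbourhood of the origin inside $C$, so $\dim B = n$ and $0 < \mathrm{vol}(B) := \int_B 1\,\mathrm{d}y < \infty$; in particular $\mathrm{Vol}(B) = n!\,\mathrm{vol}(B)$ is well defined and finite.

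For the formula itself, I would write $e^{y\cdot v} = \int_{-y\cdot v}^{\infty} e^{-t}\,\mathrm{d}t$ (valid because $-y\cdot v \ge 0$ on $C$) and apply Tonelli's theorem, legitimate since the integrand is nonnegative:
\[
  \int_C e^{y\cdot v}\,\mathrm{d}y \;=\; \int_C \int_{-y\cdot v}^{\infty} e^{-t}\,\mathrm{d}t\,\mathrm{d}y \;=\; \int_0^{\infty} e^{-t}\,\mathrm{vol}\bigl(\{\,y \in C : y\cdot v \ge -t\,\}\bigr)\,\mathrm{d}t .
\]
The decisive observation is that $C$ is a cone: for $t>0$ one has $\{\,y\in C : y\cdot v \ge -t\,\} = t\cdot B$ exactly (rescale $y \mapsto y/t$), so this slice has volume $t^n\,\mathrm{vol}(B)$, and therefore
\[
  \int_C e^{y\cdot v}\,\mathrm{d}y \;=\; \mathrm{vol}(B)\int_0^{\infty} t^n e^{-t}\,\mathrm{d}t \;=\; n!\,\mathrm{vol}(B) \;=\; \mathrm{Vol}(B).
\]
As an alternative route one could instead triangulate $C$ into simplicial cones $\mathrm{pos}(r_1,\dots,r_n)$, substitute $y = \sum_i t_i r_i$, and check that both sides of \eqref{eq:volume} restrict to $|\det(r_1,\dots,r_n)|\big/\prod_i(-r_i\cdot v)$ on each piece; the slicing argument has the advantage of needing no triangulation.

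For the divergent case, suppose there is $y_0 \in C\setminus\{0\}$ with $y_0\cdot v \ge 0$. Pick a point $p$ in the interior of $C$, which is nonempty since $\dim C = n$, so that $B(p,\delta) \subset C$ for some $\delta > 0$; because $C$ is a convex cone, $B(p,\delta) + s\,y_0 \subset C$ for every $s \ge 0$, and hence $R := \bigcup_{s\ge 0}\bigl(B(p,\delta) + s\,y_0\bigr) \subset C$. Every point of $R$ has the form $p + w + s\,y_0$ with $|w| < \delta$ and $s \ge 0$, so $y\cdot v \ge p\cdot v - \delta\,\|v\| + s\,(y_0\cdot v) \ge p\cdot v - \delta\,\|v\| =: c$ on all of $R$. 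Since $R$ contains an $(n-1)$-dimensional disk (the vectors $w \perp y_0$ with $|w|<\delta$) swept along the ray $\mathbb{R}_{\ge 0}\,y_0$, it has infinite Lebesgue measure, whence $\int_C e^{y\cdot v}\,\mathrm{d}y \ge e^{c}\,\mathrm{vol}(R) = \infty$.

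I do not anticipate a genuine obstacle here: the one substantive ingredient is the scaling identity $\{y\in C : y\cdot v \ge -t\} = tB$ for the level sets of a linear functional on a cone, and the remaining steps — invoking Tonelli for a nonnegative integrand, and pinning down the infinite-volume region $R$ in the divergent case — are routine bookkeeping.
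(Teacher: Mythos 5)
Your proof is correct, but it takes a genuinely different route from the paper. The paper subdivides $C$ into simplicial cones, rescales the ray generators so that $r_i\cdot v=-1$, performs the linear change of variables $y=Az$, and evaluates the resulting product of one-dimensional exponential integrals, reading off both the value $|\det A|=\mathrm{Vol}(B)$ and the convergence criterion $r_i\cdot v<0$ from that computation; this is exactly the ``alternative route'' you mention in passing. You instead use a layer-cake argument: writing $e^{y\cdot v}=\int_{-y\cdot v}^{\infty}e^{-t}\,\mathrm{d}t$, applying Tonelli, and exploiting the scaling identity $\{y\in C: y\cdot v\ge -t\}=t\cdot B$ to reduce everything to $\int_0^\infty t^n e^{-t}\,\mathrm{d}t=n!$, with the divergent case settled by a separate half-cylinder construction along a direction $y_0$ with $y_0\cdot v\ge 0$. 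Each step checks out: the recession-cone argument for boundedness of $B$ (which the paper leaves implicit), the Tonelli swap for a nonnegative integrand, the conical scaling of the level sets, and the infinite-measure region $R\subset C$ on which $e^{y\cdot v}\ge e^c$. What your approach buys is uniformity — no triangulation, no case distinction over simplicial pieces, and an explicit proof that $B$ is bounded; what the paper's approach buys is explicitness — it exhibits the volume piecewise as $|\det(r_1,\dots,r_n)|$ and makes the convergence condition visible directly on the ray generators, which dovetails with Lemma \ref{lem:allneg} and with the sector-decomposition picture used later in the convergence proof. Either argument is acceptable.
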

\begin{proof}
    It suffices to show this in the case where $C$ is simplicial, with $n$ ray generators $r_1, \ldots, r_n$. This is because if $C$ is not simplicial, it can be subdivided into finitely many simplicial cones $C_1, \ldots, C_k$, and we would conclude
    \[ {\rm Vol}(B) \, = \, \sum_{i = 1}^k {\rm Vol}(B \cap C_i) \, = \, \sum_{i=1}^k \int_{C_i} {\rm exp}(y \cdot v) \, = \, \int_C {\rm exp}(y \cdot v). \]
    Since $y \cdot v < 0$ for all $y \in C$, we may also assume that the ray generators $r_i$ are scaled so that $r_i \cdot v = -1$. This means that ${\rm Vol}(B) = |{\rm det}(A)|$, where $A = (r_1, \ldots, r_n)$ is a matrix whose columns are the ray generators. Since $C$ is simplicial, a point $y = (y_1, \ldots, y_n) \in C$ can be written uniquely as $y = A z$, where $z = (z_1, \ldots, z_n)$ are new nonnegative coordinates. Hence
    \[ \int_C {\rm exp}(y \cdot v) \, = \, |{\rm det}(A)| \int_{\mathbb{R}^n_+} \exp({v^\top A z}) {\rm d} z \, = \, |{\rm det}(A)| \int_{\mathbb{R}^n_+} \prod_{i=1}^n \exp(z_i(r_i \cdot v)) {\rm d} z.\]
    We now perform the integration for each variable $z_i$ separately to conclude
    \[\int_C {\rm exp}(y \cdot v) \, = \, |{\rm det}(A)| \prod_{i=1}^n \left [ \frac{\exp(z_i(r_i \cdot v))}{r_i \cdot v} \right ]^{z_i = \infty}_{z_i = 0}.\]
    The integral is finite if and only if $r_i \cdot v < 0$ for all rays, which is equivalent to $y \cdot v < 0$ for all $y \in C \setminus \{0\}$. In this case, it equals $|\det(A)| = {\rm Vol}(B)$, as desired. 
\end{proof}

\begin{proof}[Proof of Theorem \ref{thm:convergence}]
       We start with a change of variables $x_j = {\rm exp}(y_j)$: 
    \[ {\cal I}(s,\nu) \, = \, \int_{\mathbb{R}^n_+} f^{-s} x^\nu \, \frac{{\rm d} x}{x} \, =\, \int_{\mathbb{R}^n} \frac{{\rm exp}(y \cdot \nu)}{f({\rm exp}(y))^s} \, {\rm d} y .\]
    To show absolute convergence, we need to prove that
    \[ \int_{\mathbb{R}^n} \left | \frac{{\rm exp}(y \cdot \nu)}{f({\rm exp}(y))^s} \right |  {\rm d} y \, = \, \int_{\mathbb{R}^n} \frac{{\rm exp}(y \cdot {\rm Re}(\nu))}{f({\rm exp}(y))^{{\rm Re}(s)}} \, {\rm d} y \, < \, \infty. \]
    The equality in this display uses $|r^{a+\sqrt{-1}b}| = |{\rm exp}(\log(r)(a + \sqrt{-1}b))| = |{\rm exp}(a \,\log(r))| = r^a$ for a positive real number $r$ and real numbers $a,b$. Notice that this means we may assume $s$ and $\nu$ are real. We first consider the case where $\ell = 1$. Let $P$ be the Newton polytope $\Delta(f)$ of $f = f_1$. We have seen in \eqref{eq:complete} that its normal fan subdivides $\mathbb{R}^n$ into $n$-dimensional polyhedral cones $C_v$, where $v$ runs over the vertices of $P$. Therefore
    \begin{equation}  \label{eq:sumovervtcs}
    {\cal I}(s,\nu) \, = \, \sum_v {\cal I}_v(s,\nu) \, = \, \sum_{v} \int_{-C_v} \frac{{\rm exp}(y \cdot \nu)}{f({\rm exp}(y))^s} \, {\rm d} y \, .
    \end{equation}
    Notice that we use the cones $-C_v$ instead of $C_v$ for this decomposition, because these are the domains on which we can find easy bounds for the integrand. Let $f = \sum_{\alpha} c_\alpha \cdot x^\alpha$ with $c_\alpha > 0$, as in Assumption \ref{assum:poscoeffs}. With our change of variables, this becomes $f(e^y) = \sum_\alpha c_\alpha \cdot e^{y \cdot \alpha}$. Since $v$ is a vertex of $\Delta_f$, one of the exponents $\alpha$ equals $v$. For $y \in -C_v$, $y \cdot v \geq y \cdot \alpha$ for $\alpha \in {\rm supp}(f) \setminus \{v \}$. This gives the following chain of inequalities:
    \begin{equation} \label{eq:boundf} c_v \cdot  \exp( y \cdot v ) \, \leq \, f(\exp(y)) \, \leq \sum_\alpha c_\alpha \cdot \exp(y \cdot v). 
    \end{equation}
    This leads to a chain of inequalities of integrals. Let $M = \sum_{\alpha} c_\alpha$. Since $s > 0$, we have 
    \[ M^{-s} \cdot \int_{-C_v} {\rm exp}(y \cdot (\nu - sv)) \, {\rm d} y \, \leq \, {\cal I}_v(s,\nu) \, \leq \,   c_v^{-s} \cdot \int_{-C_v} {\rm exp}(y \cdot (\nu - sv)) \, {\rm d}y.  \]
       The integral appearing on the left and right of this expression can be written as 
    \[ \int_{C_v} {\rm exp}(y \cdot w) \, {\rm d} y, \quad \text{with} \quad w = sv - \nu. \]
    By Lemma \ref{lem:volB}, this integral converges if and only if $y \cdot w < 0$ for all $y \in C_v \setminus \{0\}$. Notice that $w$ is a vertex of the polytope $s \cdot P - \nu$, and the cone $C_{sv - \nu}$ in its normal fan equals $C_v$. By Lemma \ref{lem:allneg}, $y \cdot w < 0$ for all $y \in C_w \setminus \{0\} = C_v \setminus \{0\}$ for all vertices $w$ if and only if $0 \in {\rm int}(s \cdot P - \nu)$. This is equivalent to $\nu \in {\rm int}(s \cdot P)$, which proves the theorem for $\ell = 1$.

    When $\ell > 1$, the formula \eqref{eq:sumovervtcs} generalizes to a sum over the vertices of $P(s)$:
    \[ {\cal I}(s,\nu) \, = \, \sum_v {\cal I}_v(s,\nu) \, = \,  \sum_{v} \int_{-C_v} \frac{{\rm exp}(y \cdot \nu)}{f_1({\rm exp}(y))^{s_1} \cdots f_\ell({\rm exp}(y))^{s_\ell} } \, {\rm d} y .
    \]
    Here $P(s) = s_1 \cdot \Delta(f_1) + \cdots + s_\ell \cdot \Delta(f_\ell)$. Each vertex $v$ of $P(s)$ is a sum $v = s_1 \cdot v_1 + \cdots + s_\ell \cdot v_\ell$ of vertices $s_i \cdot v_i$ of the summands $s_i \cdot \Delta(f_i)$. Here $v_i$ is the face $\Delta(f_i)_y \subset \Delta(f_i)$ for any interior point of $C_v$. With the notation for coefficients as in Assumption \ref{assum:poscoeffs}, we set $M_i = \sum_{\alpha} c_{i,\alpha}$. Bounding each of the $f_i$ via $c_{i,v_i} \exp(y \cdot v_i) \leq f_i(\exp(y)) \leq M_i \exp(y \cdot v_i)$ as in \eqref{eq:boundf}, we find 
    \begin{equation} \label{eq:sandwich}
    \prod_{i=1}^\ell M_i^{-s_i} \cdot \int_{-C_v} {\rm exp}(y \cdot (\nu - v)) \, {\rm d} y \, \leq \, {\cal I}_v(s,\nu) \, \leq \, \prod_{i=1}^\ell c_{i,v_i}^{-s_i} \cdot \int_{-C_v} {\rm exp}(y \cdot (\nu  - v)) \, {\rm d}y.  
    \end{equation}
    Again, the integral in these expressions converges if and only if $0 \in {\rm int}(P(s) - \nu)$.
\end{proof}
\begin{remark} \label{rem:tropical}
    The decomposition of the integral ${\cal I }(s,\nu)$ in \eqref{eq:sumovervtcs} is referred to as \emph{sector decomposition} in the physics literature \cite[Section~3.4]{Heinrich:2020ybq}. This is used in state-of-the-art algorithms for evaluating Feynman integrals numerically, which use \emph{Monte Carlo sampling} and \emph{tropical geometry} \cite{borinsky2023tropical,Borinsky:2023jdv}.
\end{remark}
By Lemma \ref{lem:volB}, if ${\rm Re}(s_i) > 0$ and $\nu \in {\rm int}(P(s))$, the bounds in \eqref{eq:sandwich} can be written as 
\begin{equation}\label{eqn:sandwich2}
 \prod_{i=1}^\ell M_i^{-{\rm Re}(s_i)} \cdot {\rm Vol}(B_{v-\nu}) \, \leq \, |{\cal I}_v(s,\nu)| \, \leq \, \prod_{i=1}^\ell c_{i,v_i}^{-{\rm Re}(s_i)} \cdot {\rm Vol}(B_{v-\nu}). 
 \end{equation}
Here $B_{v-\nu} = \{ y \in C_v \, : \, y \cdot (v - \nu) \geq -1 \}$ is the portion of $(P(s)-\nu)^\circ$ corresponding to the vertex $v$, see Figure \ref{fig:polardual}. This will be important in our discussion on field theory limits. 

Theorem \ref{thm:convergence} identifies a region of convergence of the integral \eqref{eq:Eulermellin}, which is geometrically described by a convex polytope. As pointed out in \cite[Example 2.3]{berkesch2014euler}, the integral might converge on a larger domain. E.g., the assumption ${\rm Re}(s_i) > 0$ is in general not necessary. However, as it turns out, our domain is large enough to allow a unique meromorphic continuation of ${\cal I}(s,\nu)$ to the entire parameter space $\mathbb{C}^{\ell + n}$. This is similar in spirit to the fact that the integral representation of the gamma function seen in \eqref{eq:betafunc} only converges for ${\rm Re}(u) >0$. The meromorphic function $\Gamma(u)$ is obtained by extending that integral function on $\mathbb{R}_+$ to a function on $\mathbb{C} \setminus \mathbb{Z}_{\leq 0}$ satisfying $\Gamma(u+1) = u\, \Gamma(u)$. Let us now consider the beta function.

\begin{example} \label{ex:betaextend}
    The coordinate change $x = \frac{y}{1+y}$ brings the integral in \eqref{eq:betafunc} into the form~\eqref{eq:Eulermellin}:
    \begin{equation}
        \int_0^1 \frac{x^\nu}{(1-x)^s} \, \frac{{\rm d}x}{x} \, = \, \int_0^\infty \frac{y^\nu}{(1+y)^{\tilde{s}}} \, \frac{{\rm d} y}{y}, \quad \text{with} \quad \tilde{s} \, = \, \nu+1-s.
        \label{eq:positive_Beta}
    \end{equation}
    Theorem \ref{thm:convergence} predicts convergence when ${\rm Re}(\tilde{s}) > 0$ and $\nu \in {\rm int}(P(\tilde{s}))$, where
    \begin{equation} \label{eq:betapolytope} P(\tilde{s}) \, = \, \{ p \in \mathbb{R} \, : \, p \geq 0, \, -p \geq -{\rm Re}(\tilde{s}) \}.
    \end{equation}
    To justify the blue equality in \eqref{eq:betafunc}, we observe that when these convergence conditions hold,
    \begin{align*}
        \Gamma(\nu + 1 - s) \cdot B(\nu,1 -s) \, &= \, \int_0^\infty t^{\nu-s} e^{-t} {\rm d} t \, \int_0^\infty \frac{y^\nu}{(1+y)^{\nu + 1 - s}} \, \frac{{\rm d} y}{y} \\
        \, &= \, \int_0^\infty \int_0^\infty \left(\frac{ty}{1+y}\right)^{\nu-1} \left( \frac{t}{1+y} \right)^{-s} e^{-t} \, \frac{t \, {\rm d}y \, {\rm d}t}{(1+y)^2}.
    \end{align*} 
    With the coordinate change $u = \frac{ty}{1+y}, \, w = \frac{t}{1+y}$ we have $u + w = t$ and $\frac{t{\rm d} y {\rm d} t}{(1+y)^2} = {\rm d}u {\rm d}w$. Hence
    \[\Gamma(\nu + 1 - s) \cdot B(\nu,1 -s) \, = \, \int_0^\infty u^{\nu-1}e^{-u} \, {\rm d} u \, \int_0^\infty w^{-s} e^{-w} \, {\rm d} w \, = \, \Gamma(\nu) \Gamma(1-s). \]
    While the integrals in these equalities only make sense in their respective convergence regions, we may use the definition of the gamma function to extend the beta function to a meromorphic function on $\mathbb{C}^2$:
    \[ B(\nu,1-s) \, = \, \frac{\Gamma(\nu) \Gamma(1-s)}{\Gamma(\nu + 1-s)}.\]
    Its poles are countably many lines in $(s,\nu)$-space, given by $\nu, \, 1-s \in \mathbb{Z}_{\leq 0}$. 
\end{example}
The fact that the beta function extends to a meromorphic function whose poles are given by some gamma functions is an example of a general phenomenon, proved in \cite{berkesch2014euler}. We include the statement, but omit the proof. We refer the reader to \cite[Theorem 2.4]{berkesch2014euler} for full details. 
\begin{theorem} \label{thm:meromorphiccontinuation}
    Suppose the Minkowski sum $\Delta(f_1) + \cdots + \Delta(f_\ell)$ has dimension $n$ and the polytope $P(s) = \sum_{i = 1}^\ell {\rm Re}(s_i) \cdot \Delta(f_i)$ is given by $N < \infty$ inequalities: 
    \[ P(s) \, = \, \{ p \in \mathbb{R}^n \, : \, r_i \cdot p \, \geq \, w_i \cdot {\rm Re}(s), \quad i = 1, \ldots, N \}, \quad r_i \in \mathbb{R}^n, \, \, w_i \in \mathbb{R}^\ell.\]
    Under Assumption \ref{assum:poscoeffs}, ${\cal I}(s,\nu)$ from \eqref{eq:Eulermellin} admits a meromorphic continuation of the form 
    \begin{equation} \label{eq:meromcont} 
    \Phi_f(s,\nu) \cdot \prod_{i = 1}^N \Gamma( r_i \cdot \nu - w_i \cdot s ), 
    \end{equation}
    where $\Phi_f(s,\nu)$ is an entire function. 
    \end{theorem}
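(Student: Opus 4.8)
The plan is to show that the expression in \eqref{eq:meromcont}, with the data $r_i,w_i$ read off from the facets of $P(s)$ on the open chamber $\{\operatorname{Re}(s_i)>0\}$, coincides with ${\cal I}(s,\nu)$ on the region $U=\{(s,\nu):\operatorname{Re}(s_i)>0,\ \operatorname{Re}(\nu)\in\operatorname{int}(P(s))\}$ of absolute convergence supplied by Theorem~\ref{thm:convergence}; since $U$ is nonempty and ${\cal I}(s,\nu)$ is holomorphic there (Leibniz and Morera), this identifies \eqref{eq:meromcont} as the meromorphic continuation. The backbone is the sector decomposition already used in the proof of Theorem~\ref{thm:convergence}: after $x_j=\exp(y_j)$, ${\cal I}(s,\nu)=\sum_v\int_{-C_v}\exp(y\cdot\nu)\prod_i f_i(\exp(y))^{-s_i}\,{\rm d}y$, the sum over vertices $v$ of $P(s)$, where $C_v$ is the maximal cone of the normal fan $\Sigma_{P(s)}$ at $v$. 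Because $\Sigma_{P(s)}$ is the common refinement of the rational normal fans of the $\Delta(f_i)$, it does not depend on the positive numbers $\operatorname{Re}(s_i)$, its rays are exactly the primitive inner facet normals $r_1,\dots,r_N$, and each vertex $v=v(s)$ of $P(s)$ equals $\sum_i s_i v_i$, where $v_i=\Delta(f_i)_r$ for any $r$ in the interior of $C_v$; this $v_i$ satisfies $r\cdot v_i=\min_{\alpha\in\Delta(f_i)}r\cdot\alpha$ for every $r\in C_v$. I would then triangulate each $C_v$ into simplicial cones $\sigma$ without creating new rays (e.g.\ a pulling triangulation), so that the $n$ rays of each $\sigma$ lie among the $r_i$ with $v$ on facet $i$.

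On a fixed simplicial sector $-\sigma$ with rays $-r_{i_1},\dots,-r_{i_n}$, substitute $y=-\sum_k z_k r_{i_k}$ with $z\in\mathbb{R}^n_{\geq 0}$ (Jacobian $D_\sigma=|\det(r_{i_1},\dots,r_{i_n})|>0$) and then $u_k=\exp(-z_k)\in(0,1]$. Using $y\cdot v_i=\max_{\alpha\in\Delta(f_i)}y\cdot\alpha$ for $y\in-C_v$ together with Assumption~\ref{assum:poscoeffs}, one gets $f_i(\exp(y))=c_{i,v_i}\,\exp(y\cdot v_i)\,(1+h_i)$, where $h_i(u)=\sum_{\alpha\neq v_i}\frac{c_{i,\alpha}}{c_{i,v_i}}\prod_k u_k^{\,r_{i_k}\cdot(\alpha-v_i)}$ is a polynomial with nonnegative integer exponents, $h_i(0)=0$, and $h_i\geq 0$ on $[0,1]^n$. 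Pulling out the dominant monomials gives $\prod_i f_i(\exp(y))^{-s_i}=\big(\prod_i c_{i,v_i}^{-s_i}\big)\exp(-y\cdot v)\prod_i(1+h_i)^{-s_i}$, and $r_{i_k}\cdot v=\sum_i s_i\min_{\alpha\in\Delta(f_i)}r_{i_k}\cdot\alpha=w_{i_k}\cdot s$, so the sector piece becomes
\[ C_\sigma(s)\cdot\int_{(0,1]^n}\prod_{k=1}^n u_k^{\,a_k-1}\,F(u)\,{\rm d}u,\qquad a_k:=r_{i_k}\cdot\nu-w_{i_k}\cdot s, \]
where $C_\sigma(s)=D_\sigma\prod_i c_{i,v_i}^{-s_i}$ is entire and zero-free in $s$, and $F(u):=\prod_i(1+h_i(u))^{-s_i}$ is holomorphic in a complex neighbourhood of $[0,1]^n$ (as $1+h_i$ avoids $(-\infty,0]$ there) and entire in $s$.

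The analytic core is the following ``gamma extraction'' lemma: if $F$ is holomorphic near $[0,1]^n$ and entire in $s$, then $(a,s)\mapsto\int_{(0,1]^n}\prod_k u_k^{a_k-1}F(u)\,{\rm d}u$, holomorphic for $\operatorname{Re}(a_k)>0$, continues meromorphically to $\mathbb{C}^n$ jointly with $s$, with poles only along $\bigcup_k\{a_k\in\mathbb{Z}_{\leq 0}\}$ and of order one in each $a_k$; hence $\big(\prod_k\Gamma(a_k)\big)^{-1}\int_{(0,1]^n}\prod_k u_k^{a_k-1}F(u)\,{\rm d}u$ is entire. I would prove this by iterated Taylor expansion with holomorphic remainder, one variable at a time: for each $K$, $\int_0^1 u_1^{a_1-1}F\,{\rm d}u_1=\sum_{m<K}\frac{\partial_{u_1}^m F(0,u')/m!}{a_1+m}+\int_0^1 u_1^{a_1+K-1}\widetilde F\,{\rm d}u_1$, with $u'=(u_2,\dots,u_n)$, $\widetilde F$ holomorphic near $[0,1]^n$, and the last term holomorphic in $a_1$ for $\operatorname{Re}(a_1)>-K$; iterating over all $n$ variables and using that $K$ is arbitrary exhibits exactly the stated normal-crossing polar locus, which is identical to that of $\prod_k\Gamma(a_k)$, so dividing by $\prod_k\Gamma(a_k)$ cancels every pole and (since $1/\Gamma$ is entire) creates none.

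It remains to reassemble. For each sector $\sigma$ the lemma gives $\int_\sigma(\cdots)=E_\sigma(s,\nu)\prod_k\Gamma(r_{i_k}\cdot\nu-w_{i_k}\cdot s)$ with $E_\sigma$ entire (absorbing the entire, zero-free factor $C_\sigma(s)$); multiplying and dividing by the missing factors $\prod_{i\notin\{i_1,\dots,i_n\}}\Gamma(r_i\cdot\nu-w_i\cdot s)$ and invoking again that each $1/\Gamma$ is entire, this equals $\big(\prod_{i=1}^N\Gamma(r_i\cdot\nu-w_i\cdot s)\big)\,\Psi_\sigma(s,\nu)$ with $\Psi_\sigma$ entire. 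Summing over the finitely many pairs $(v,\sigma)$ yields ${\cal I}(s,\nu)=\big(\prod_{i=1}^N\Gamma(r_i\cdot\nu-w_i\cdot s)\big)\,\Phi_f(s,\nu)$ with $\Phi_f=\sum_{v,\sigma}\Psi_\sigma$ entire; as all steps are identities of holomorphic functions on $U$, this is the claimed continuation to $\mathbb{C}^{\ell+n}$. I expect the main obstacle to be not the analysis but the polyhedral bookkeeping in the middle step: one must be sure the simplicial refinement introduces no spurious rays (hence no extra $\Gamma$-factors), that the coordinates $a_k$ on each sector really are the facet forms $r_i\cdot\nu-w_i\cdot s$, and that the prefactors $C_\sigma(s)$ stay entire and zero-free; once this is pinned down the conclusion follows from the elementary ``gamma extraction'' lemma.
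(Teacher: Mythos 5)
Your proposal is correct, but there is nothing in the paper to compare it against line by line: the text deliberately omits the proof of Theorem \ref{thm:meromorphiccontinuation} and defers to \cite[Theorem 2.4]{berkesch2014euler}. What you have written is essentially a reconstruction of that argument (which in turn follows \cite{nilsson2013mellin}), grafted onto the sector decomposition \eqref{eq:sumovervtcs} that the paper already uses for Theorem \ref{thm:convergence}: refine each maximal normal cone of $P(s)$ into simplicial cones spanned by facet normals, pass to the unit cube so that each sector becomes $\int_{(0,1]^n}\prod_k u_k^{a_k-1}F(u)\,{\rm d}u$ with $a_k=r_{i_k}\cdot\nu-w_{i_k}\cdot s$ and $F$ holomorphic near $[0,1]^n$ and entire in $s$, and extract the gamma factors by iterated Taylor expansion. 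The bookkeeping you flag is exactly what needs checking, and it all goes through: the normal fan of $P(s)$ is the common refinement of the rational fans of the $\Delta(f_i)$, hence independent of $\operatorname{Re}(s_i)>0$ and equipped with primitive integer ray generators, so the exponents $r_{i_k}\cdot(\alpha-v_i)$ are nonnegative integers, $h_i$ is a polynomial with $h_i(0)=0$, and $1+h_i$ avoids $(-\infty,0]$ near $[0,1]^n$; every pointed polyhedral cone can be triangulated using only its own rays, so no linear forms other than the facet forms (hence no spurious $\Gamma$'s) appear; on the Nilsson--Passare region all $\operatorname{Re}(a_k)>0$, so each sector integral converges there and the identity with \eqref{eq:meromcont} propagates by the identity theorem; and in the multivariable gamma-extraction lemma the iterated expansion writes each term as a product of simple poles $1/(a_k+m)$ in distinct variables times a function holomorphic on $\{\operatorname{Re}(a_k)>-K\}$, so dividing by $\prod_k\Gamma(a_k)$, whose reciprocal vanishes to first order on each such hyperplane, leaves a holomorphic function for every $K$, while $1/\Gamma$ being entire lets you pad in the missing factors sector by sector. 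The only caveat is one of interpretation of the statement: your construction produces the gamma factors attached to the primitive integral facet normals of $P(s)$, which is the intended reading (compare the beta-function example with $r_1=1$, $r_2=-1$); redundant inequalities or integer rescalings can be absorbed using entire $1/\Gamma$ factors exactly as in your final step, but a non-integral rescaling of the $r_i$ would change the polar locus and is not covered.
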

It is worth noting that only the entire factor $\Phi_f(s,\nu)$ in Theorem \ref{thm:meromorphiccontinuation} depends on the specific positive coefficients of $f$. The gamma factors only depend on the polyhedral data coming from $P(s)$. For fixed, positive $s$, the poles of this meromorphic continuation are hyperplanes emanating from the boundary of $P(s)$. This is illustrated in Figure \ref{fig:poles} for \eqref{eq:I5noalpha}.
\begin{figure}
    \centering
    \includegraphics[width = 6cm]{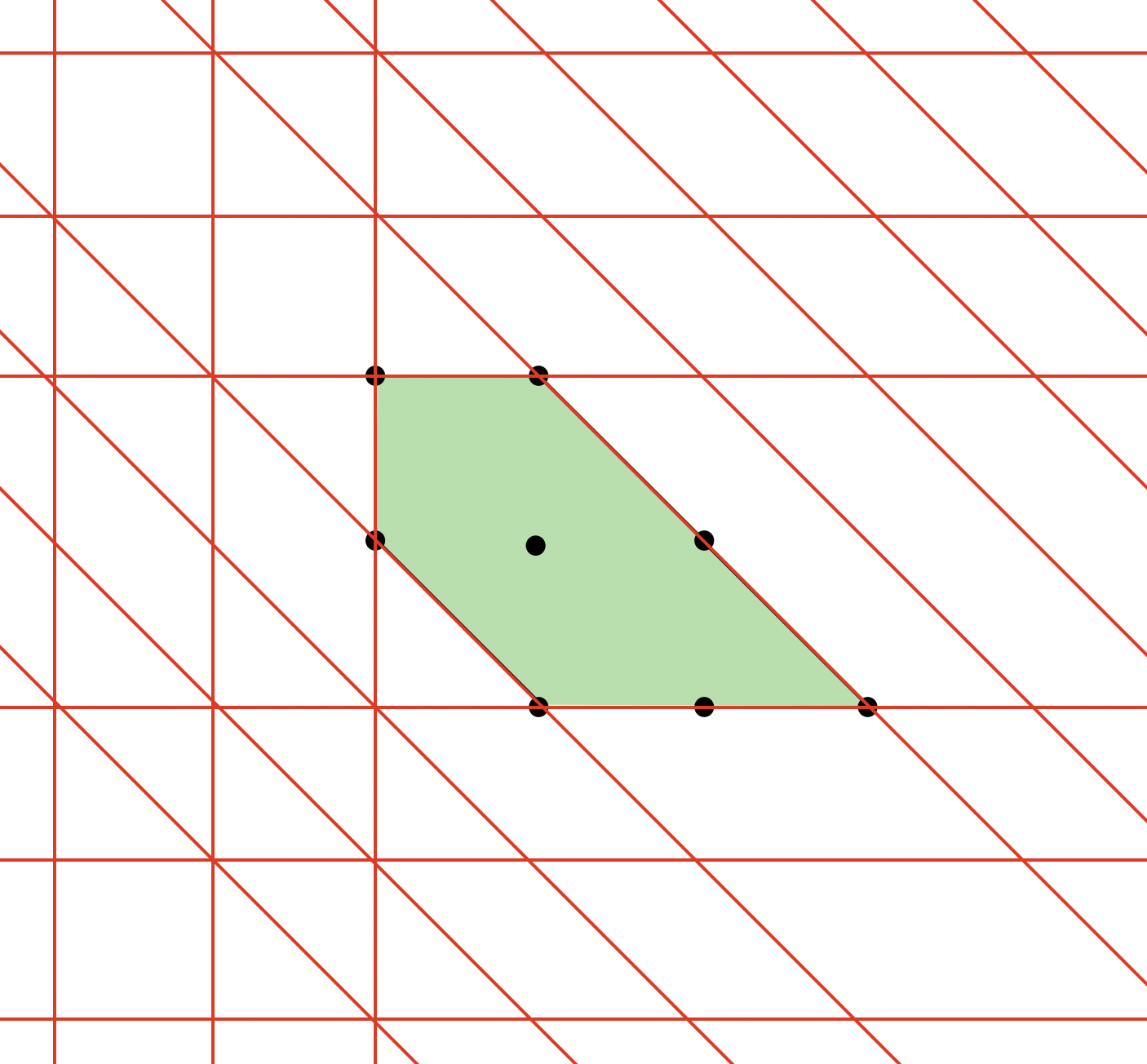}
    \caption{The poles of the meromorphic continuation of $I_5$ with $(s_{13},s_{14},s_{34}) = (1,1,1)$ are lines emanating from the boundary of the pentagon $P$ from Example \ref{ex:pentagon}.}
    \label{fig:poles}
\end{figure}
\begin{example}
    In Example \ref{ex:betaextend}, we read off from \eqref{eq:betapolytope} that $N = 2$ and $r_1 = 1, \, w_1 = 0, \, r_2 = -1, \, w_2 = -1$. The gamma factors in Theorem \ref{thm:meromorphiccontinuation} are $\Gamma(r_1 \cdot \nu - w_1 \cdot \tilde{s}) = \Gamma(\nu)$ and $\Gamma(r_2 \cdot \nu - w_2 \cdot \tilde{s}) = \Gamma(1-s)$. The entire function $\Phi_{1 + y}(s,\nu)$ equals $\Gamma(\nu + 1 - s)^{-1}$. 
\end{example}

\section{Limits and critical points} \label{sec:2}
In this section, we continue to use the integration contour $\Gamma = \mathbb{R}^n_+$ and we work under Assumption \ref{assum:poscoeffs}. We now think of the parameters $s, \nu$ to be fixed, satisfying the conditions of Theorem \ref{thm:convergence}. The main novelty with respect to Section \ref{sec:1} is that we introduce a new parameter $\delta$, of which the integral ${\cal I}$ is now a function: 
\begin{equation} \label{eq:Idelta}
    {\cal I}(\delta) \, = \,  \frac{1}{\delta^n}  \cdot \int_{\mathbb{R}^n_+} \frac{x_1^{\frac{\nu_1}{\delta}}\cdots x_n^{\frac{\nu_n}{\delta}}}{f_1^{\frac{s_1}{\delta}} \cdots f_\ell^{\frac{s_\ell}{\delta}} } \, \frac{{\rm d} x_1}{x_1} \wedge \cdots \wedge \frac{{\rm d} x_n}{x_n} \, = \, \frac{1}{\delta^n}  \int_\Gamma \left ( f^{-s}  x^\nu \right )^{\frac{1}{\delta}} \, \frac{{\rm d} x}{x} \, .
\end{equation}
Notice that $\delta^{-1}$ plays the role of the \emph{inverse string tension} $\alpha'$ in the string amplitude \eqref{eq:stringamplitude}. We are interested in the opposite limits $\lim_{\delta \rightarrow \infty} {\cal I}(\delta)$ and $\lim_{\delta \rightarrow 0^+} {\cal I}(\delta)$. Motivated by the physics application, these are called \emph{field theory limit} and \emph{high energy limit} respectively \cite{Mizera:2019gea,arkani2021stringy}. They are the leading terms in the series expansions of ${\cal I}(\delta)$ around $\delta = \infty$ and $\delta = 0$. 

As it turns out, both the field theory limit and the high energy limit can be expressed in terms of complex critical points of the \emph{potential function} or \emph{log-likelihood function}
\[ {\rm log} \, L \, = \, {\rm log} \, f^{-s} x^\nu \, = \, -s_1 \, \log f_1 - \cdots - s_\ell \, \log f_\ell + \nu_1 \, \log x_1 + \cdots + \nu_n \, \log x_n. \]
These critical points are the complex solutions to the $n$ rational function equations 
\begin{equation} \label{eq:criteq}
\frac{\partial (\log \, f^{-s}x^\nu)}{\partial x_j} \, = \, \frac{\nu_j}{x_j} - s_1 \, \frac{\frac{\partial{f_1}}{\partial x_j}}{f_1} - \cdots - s_\ell \, \frac{\frac{\partial{f_\ell}}{\partial x_j}}{f_\ell} \, = \, 0, \quad j = 1, \ldots, n. 
\end{equation}
These rational functions are defined where neither $x_j$ nor $f_i(x)$ are zero. We define 
\begin{equation} \label{eq:veryaffinevariety}
X \, = \, \{ x \in \mathbb{C}^n \, : \, x_1 \cdots x_n \cdot f_1(x) \cdots f_\ell(x) \, \neq \, 0 \, \}.
\end{equation}
This is an example of a \emph{very affine variety}, see \cite[page 6]{huh2014likelihood}. The set of complex critical points of $\log L$ is ${\rm Crit}(\log  L) = \{ x \in X \, : \, x \text{ satisfies } \eqref{eq:criteq}  \}$. Since \eqref{eq:criteq} consists of $n$ equations in $n$ unknowns, we expect ${\rm Crit}(\log L)$ to be finite. A solution $x \in {\rm Crit}(\log L)$ is \emph{degenerate} if
\begin{equation} \label{eq:torichessian}
H_{\log L} \, = \,  \det \left ( x_j  \frac{\partial}{\partial x_j} \left ( x_k\frac{\partial}{\partial x_k} \log L (x) \right) \right)_{j,k} \, = \, 0 .
\end{equation}
This determinant is called the \emph{toric Hessian} of $\log L$. It is much like the usual Hessian determinant, but with $\partial/\partial x_j$ replaced by the Euler operator $x_j(\partial/\partial x_j)$. Using the toric version will be convenient later in the section. The following result is Theorem 1 in \cite{huh2013maximum}. 

\begin{theorem} \label{thm:huh}
    There is a dense open subset $U \subset \mathbb{C}^{\ell + n}$ such that for $(s,\nu) \in U$, the number of solutions to \eqref{eq:criteq} equals the signed Euler characteristic $(-1)^n \cdot \chi(X)$ of the very affine variety $X$, and all solutions are non-degenerate, meaning that $H_{\log L}(x) \neq 0$ for all $x \in {\rm Crit}(\log L)$.
\end{theorem}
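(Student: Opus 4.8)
The plan is to recast the critical equations \eqref{eq:criteq} as the vanishing of a single logarithmic $1$-form and then count its zeros by a Chern-class argument. Set
\[ \omega_{s,\nu} \, = \, \nu_1 \, \frac{{\rm d} x_1}{x_1} + \cdots + \nu_n \, \frac{{\rm d} x_n}{x_n} - s_1 \, \frac{{\rm d} f_1}{f_1} - \cdots - s_\ell \, \frac{{\rm d} f_\ell}{f_\ell} \, \in \, H^0(X, \Omega^1_X). \]
Then \eqref{eq:criteq} says exactly that $\omega_{s,\nu}$ vanishes at $x \in X$, so ${\rm Crit}(\log L)$ is the zero locus of $\omega_{s,\nu}$ on $X$. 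Moreover, writing $g_j = x_j\,\partial_{x_j}\log L$, the toric Hessian \eqref{eq:torichessian} equals $\prod_j x_j$ times the ordinary Jacobian determinant $\det(\partial_{x_k} g_j)$, which at a zero is $\prod_j x_j$ times the determinant of the derivative of the section $\omega_{s,\nu}$ in the frame ${\rm d}x_1,\dots,{\rm d}x_n$; hence non-degeneracy of a solution is precisely transversal vanishing of $\omega_{s,\nu}$ there. Both assertions of the theorem are therefore statements about the linear system $W \subset H^0(X,\Omega^1_X)$ spanned by ${\rm d}\log x_1,\dots,{\rm d}\log x_n,{\rm d}\log f_1,\dots,{\rm d}\log f_\ell$, and the map $(s,\nu)\mapsto\omega_{s,\nu}$ from $\mathbb{C}^{\ell+n}$ onto $W$ is linear and surjective, so ``generic $(s,\nu)$'' means ``generic element of $W$''. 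Since $x_1,\dots,x_n$ are invertible on the open set $X\subset\mathbb{C}^n$, the forms ${\rm d}x_j/x_j$ already trivialize $\Omega^1_X$, so $W$ is base-point free and separates tangent vectors on $X$; Bertini--Kleiman (in characteristic zero) then produces a dense open $U_1\subset\mathbb{C}^{\ell+n}$ over which $Z(\omega_{s,\nu})\cap X$ is a finite, reduced scheme, i.e.\ every solution of \eqref{eq:criteq} is non-degenerate.

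To count the zeros, I would first embed $X$ as a closed subvariety of the torus $(\mathbb{C}^*)^{n+\ell}$ via $x\mapsto(x_1,\dots,x_n,f_1(x),\dots,f_\ell(x))$ (this is a closed embedding, cut out in the torus by the Laurent equations $y_i=f_i$), so that ${\rm d}\log x_j$ and ${\rm d}\log f_i$ are restrictions of the invariant logarithmic forms on $(\mathbb{C}^*)^{n+\ell}$. Next, choose a smooth projective compactification $\bar X$ of $X$ with simple normal crossings boundary $D=\bar X\setminus X$ — e.g.\ the closure of $X$ in a suitable smooth projective toric variety containing $(\mathbb{C}^*)^{n+\ell}$, further resolved by Hironaka. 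Because each $x_j$ and $f_i$ is a unit on $X$, it is a rational function on $\bar X$ whose divisor is supported on $D$, so ${\rm d}\log x_j$ and ${\rm d}\log f_i$ extend to global sections of the logarithmic cotangent sheaf $\Omega^1_{\bar X}(\log D)$; thus $W$ sits inside $H^0(\bar X,\Omega^1_{\bar X}(\log D))$.

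The decisive step is to show there is a dense open $U_2\subset\mathbb{C}^{\ell+n}$ such that for $(s,\nu)\in U_2$ the extended section $\omega_{s,\nu}$ has no zeros on $D$. Along a component $D_k$, in a local logarithmic frame the coefficient of ${\rm d}\log x_j$ (resp.\ ${\rm d}\log f_i$) has leading term ${\rm ord}_{D_k}(x_j)$ (resp.\ ${\rm ord}_{D_k}(f_i)$); these integers are not all zero, for otherwise every torus coordinate would extend as a unit across the generic point of $D_k$, forcing the closed embedding $X\hookrightarrow(\mathbb{C}^*)^{n+\ell}$ to extend over that point, which is impossible. Hence a generic linear combination of the coefficients is nonzero on $D_k$, so $\omega_{s,\nu}$ is nonvanishing along $D_k$, and avoiding the finitely many boundary components cuts out $U_2$ as the complement of finitely many hyperplanes. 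For $(s,\nu)\in U:=U_1\cap U_2$, the zero scheme of $\omega_{s,\nu}$ on $\bar X$ is a finite reduced subscheme of $X$, so $\#\,{\rm Crit}(\log L)=\int_{\bar X} c_n\big(\Omega^1_{\bar X}(\log D)\big)$. Combining this with the logarithmic Gauss--Bonnet identity $\chi(X)=\int_{\bar X} c_n\big(T_{\bar X}(-\log D)\big)=(-1)^n\int_{\bar X} c_n\big(\Omega^1_{\bar X}(\log D)\big)$ — proved by inclusion--exclusion of Euler characteristics over the strata of $(\bar X,D)$ — gives $\#\,{\rm Crit}(\log L)=(-1)^n\chi(X)$, as claimed. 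The main obstacle is precisely this boundary-avoidance step together with checking that the Chern-number count is independent of the chosen compactification (so that no zeros ``escape to infinity''); the Euler-characteristic identity for $\Omega^1_{\bar X}(\log D)$ is the other essential input, and the rest is the routine transversality bookkeeping of the first paragraph.
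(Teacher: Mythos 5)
The paper itself gives no proof of Theorem \ref{thm:huh}: it is quoted from \cite{huh2013maximum}. Your proposal is, in substance, a correct reconstruction of that proof (going back, for hyperplane arrangements, to Orlik--Terao and Silvotti): read \eqref{eq:criteq} as the zero locus of the logarithmic form $\omega_{s,\nu}$, pass to a smooth projective compactification $\bar X$ with simple normal crossings boundary $D$, extend $\omega_{s,\nu}$ to a section of $\Omega^1_{\bar X}(\log D)$, keep its zeros off $D$ for generic $(s,\nu)$ by a residue argument, and identify the count with $\deg c_n\bigl(\Omega^1_{\bar X}(\log D)\bigr)=(-1)^n\chi(X)$ via logarithmic Gauss--Bonnet; the transversality statement, equivalent to $H_{\log L}\neq 0$ by your factorization $H_{\log L}=(\prod_j x_j)\det(\partial g_k/\partial x_j)$, comes from the incidence-variety/generic-smoothness argument, which works because ${\rm d}\log x_1,\dots,{\rm d}\log x_n$ already generate $\Omega^1_X$. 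Two comments. First, the step you dismiss with ``which is impossible'' is precisely where very-affineness enters and should be argued: if ${\rm ord}_{D_k}(x_j)={\rm ord}_{D_k}(f_i)=0$ for all $i,j$, then all these functions are units in the discrete valuation ring $\mathcal{O}_{\bar X,\eta}$ at the generic point $\eta$ of $D_k$, giving a map ${\rm Spec}\,\mathcal{O}_{\bar X,\eta}\to(\mathbb{C}^*)^{n+\ell}$ whose generic point lands in the closed subvariety $X$; the valuative criterion applied to the closed immersion $X\hookrightarrow(\mathbb{C}^*)^{n+\ell}$, together with separatedness of $\bar X$, then forces $\eta\in X$, contradicting $\eta\in D$. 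Granting this, the residue of $\omega_{s,\nu}$ along $D_k$ is a not-identically-zero linear form in $(s,\nu)$, and its nonvanishing excludes zeros at every point of $D_k$, including points on deeper strata, because that residue is the restriction to $D_k$ of the ${\rm d}\log t_k$-coefficient in any local logarithmic frame. Second, your worry about independence of the compactification is unnecessary: for each fixed $(s,\nu)$ in your $U$ the zero scheme on $\bar X$ is finite, reduced, and contained in $X$, so its length equals $\deg c_n\bigl(\Omega^1_{\bar X}(\log D)\bigr)$ for that one chosen compactification, and the logarithmic Gauss--Bonnet identity already expresses this degree intrinsically as $(-1)^n\chi(X)$; no zeros can ``escape'' because the count is performed at fixed parameters, not in a family.
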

 Theorem \ref{thm:huh} says that the number of points in ${\rm Crit}(\log L)$ depends only on the topology of the space $X$. We will see how to compute the Euler characteristic $\chi(X)$ below.

 In Section \ref{sec:2.1} we discuss how to compute ${\rm Crit}(\log L)$ using numerical homotopy continuation. Sections \ref{sec:2.2} and \ref{sec:2.3} explain how these critical points are used to compute field theory and high energy limits respectively.

\subsection{Computing critical points} \label{sec:2.1}
Our goal is to solve the equations \eqref{eq:criteq}, assuming that $(s,\nu) \in \mathbb{C}^{\ell + n}$ belongs to the set $U$ from Theorem \ref{thm:huh}. This ensures that there are finitely many solutions, all of them non-degenerate, and the number of solutions we find is the Euler characteristic of $X$ (up to sign). 

We use the Julia package \texttt{HomotopyContinuation.jl} (v2.6.4) for all computations in this section \cite{breiding2018homotopycontinuation}. Our approach follows that in \cite{sturmfels2021likelihood} and is illustrated by means of an example, for which we use the integral \eqref{eq:I5noalpha}. The function $\log L$ is 
\[ \log L \, = \, -s_{13} \log (1+x_1) - s_{14} \log (1 + x_1 + x_2) - s_{34} \log(x_1 + x_2) + \nu_1 \log x_1 + \nu_2 \log x_2 . \]
Its partial derivatives give two rational function equations $g_1 = g_2 = 0$ in the unknowns~$x_1, x_2$: 
\begin{equation} \label{eq:criteqM05}
g_1 \, = \, -\frac{s_{13}}{1 + x_1} - \frac{s_{14}}{1 + x_1 + x_2} - \frac{s_{34}}{x_1 + x_2} + \frac{\nu_1}{x_1}, \quad g_2 \, = \, - \frac{s_{14}}{1 + x_1 + x_2} - \frac{s_{34}}{x_1 + x_2} + \frac{\nu_2}{x_2}.  
\end{equation}
Importantly, we think of $s$ and $\nu$ as parameters at this stage. We will emphasize the dependence of $g_i$ on these parameters by writing $g_i(x;s,\nu)$. The fixed complex parameters we want to solve for are denoted by $(s^*,\nu^*) \in U$. Here is how to code this up in Julia:
\begin{minted}{julia}
using HomotopyContinuation              # load the package 
n = 2; l = 3; @var ν[1:n] s[1:l] x[1:n] # declare variables and parameters
f = [1 + x[1]; 1 + x[1] + x[2]; x[1] + x[2]] 
logL = - sum([s[i]*log(f[i]) for i = 1:l]) + sum([ν[j]*log(x[j]) for j = 1:n])
g = differentiate(logL, x) 
g_sys = System(g, parameters = [s; ν])  # system of equations with parameters
s_star = [1;1;1]; ν_star = [1;1]        # choice of target parameters
\end{minted}
Here we chose $s^* = (1,1,1)$ and $\nu^* = (1,1)$, like in Example \ref{ex:pentagon}. The strategy for solving $g_1(x;s^*, \nu^*) = g_2(x;s^*,\nu^*) = 0$ consists of two steps: 
\begin{enumerate}
    \item Solve $g_1(x; \tilde{s}, \tilde{\nu}) = g_2(x; \tilde{s}, \tilde{\nu}) = 0$ for a \emph{different} set of parameters $(\tilde{s}, \tilde{\nu}) \in U$.
    \item Deform the \emph{start parameters} $(\tilde{s}, \tilde{\nu})$ continuously into the \emph{target parameters} $(s^*, \nu^*)$ and, along the way, keep track of the solutions to $g_1(x;s,\nu) = g_2(x;s,\nu) = 0$. 
\end{enumerate}
Both these steps require numerically tracking solution paths as we vary the parameters. This can be phrased as numerically solving an ordinary differential equation called the \emph{Davidenko equation}. For details, we refer to the standard textbook \cite{sommese2005numerical}.

Step 1 is done using the \emph{monodromy method} \cite{duff2019solving}. We explain how this works in a nutshell, using Figure \ref{fig:monodromy} as an illustration. 
\begin{figure}
    \centering
    \includegraphics[height = 6cm]{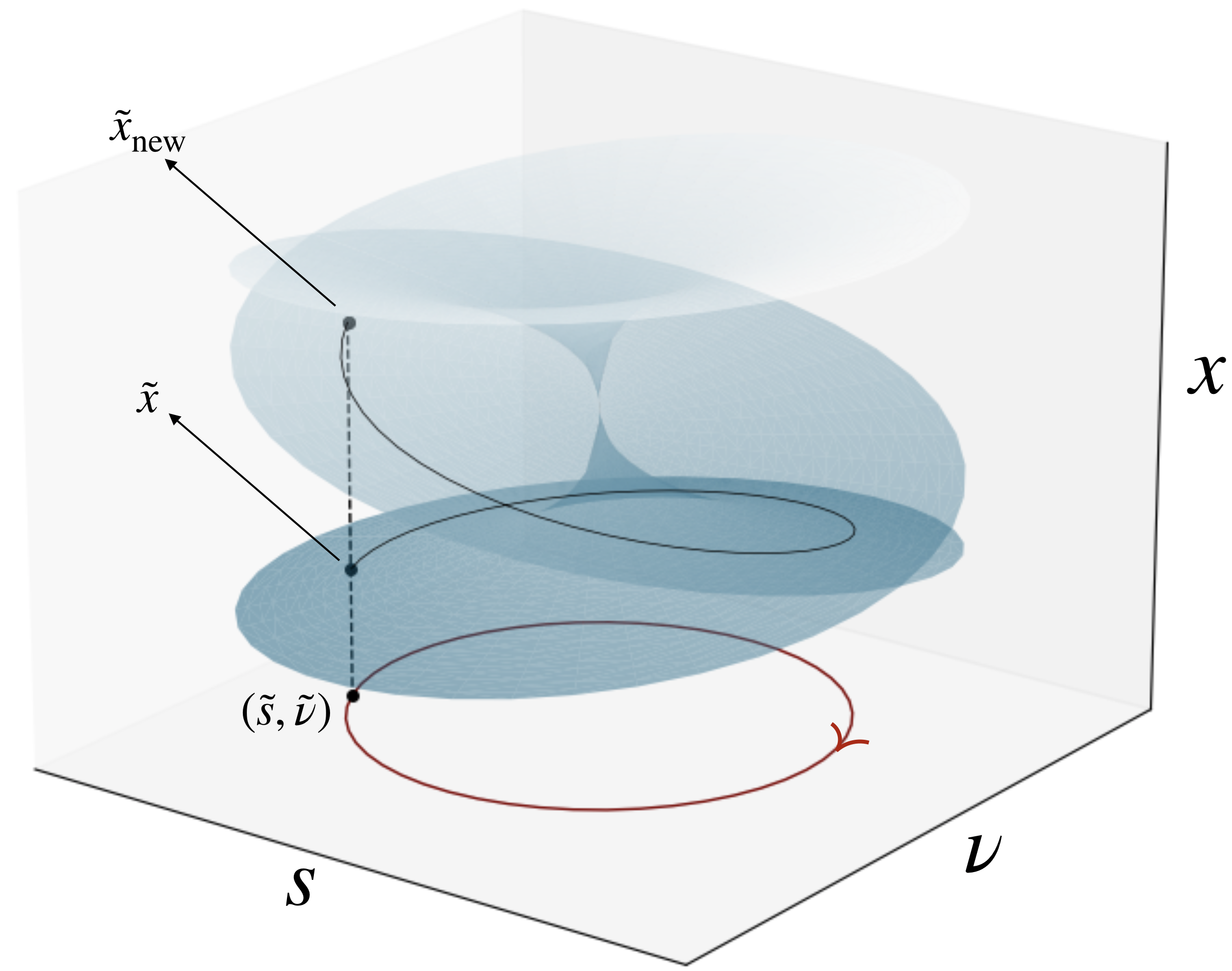}
    \caption{Illustration of the monodromy method.}
    \label{fig:monodromy}
\end{figure}
In that cartoon, the solutions for a fixed point $(s,\nu)$ are represented by the points on the blue surface lying directly above it. This surface represents the \emph{incidence space} $\{(x,s,\nu) \in X \times \mathbb{C}^{\ell + n} \, : \, g_1(x;s,\nu) = g_2(x;s,\nu) = 0 \}$. Choose a random point $\tilde{x} \in X$, and let $(\tilde{s}, \tilde{\nu})$ be any solution to the \emph{linear} system of equations $g_1(\tilde{x}; s, \nu) = g_2(\tilde{x}; s, \nu) = 0$. Clearly, $\tilde{x}$ is a point lying above $(\tilde{s},\tilde{\nu})$. This is called the \emph{seed solution}. Now walk a loop in $(s, \nu)$-space while keeping track of the seed solution $\tilde{x}$ along the way. When we arrive back at $(\tilde{s},\tilde{\nu})$, there is a good chance we picked up a new solution $\tilde{x}_{\rm new}$ to the system $g_1(x;\tilde{s},\tilde{\nu})= g_2(x;\tilde{s},\tilde{\nu}) = 0$. Now repeat this procedure to populate the solution set. In practice, this technique is extremely effective. In Julia, all this happens via
\begin{minted}{julia}
R1 = monodromy_solve(g_sys)
start_pars = parameters(R1); start_sols = solutions(R1)
\end{minted}
The variable \texttt{start\_pars} stands for \emph{start parameters}. It contains $\ell + n = 5$ complex numbers, the first $\ell = 3$ of which give $\tilde{s}$, and the last $n = 2$ give $\tilde{\nu}$. If all went well, the variable \texttt{start\_sols} contains all solutions to $g_1(x;\tilde{s},\tilde{\nu})= g_2(x;\tilde{s},\tilde{\nu}) = 0$. Hence, step 1 is completed. 

By Theorem \ref{thm:huh}, the number of solutions in \texttt{start\_solutions} equals $(-1)^n \cdot \chi(X)$. This gives a way of computing $\chi(X)$, which has been applied in some challenging cases \cite{agostini2023likelihood,sturmfels2021likelihood}. In our example, the number of solutions is $2$. Here is a way to verify that $\chi(X) = 2$. The real part $X_{\mathbb{R}}$ of the very affine variety $X$ is the complement of an arrangement of five lines in $\mathbb{R}^2$. These lines are given by $\{x_1 = 0\}, \{x_2 = 0\}, \{1+x_1 = 0\}, \{1 + x_1 + x_2 = 0\}$ and $\{ x_1 + x_2 = 0 \}$. By \cite[Theorem 1.2.1]{varchenko1995critical}, the signed Euler characteristic of $X$ is the number of bounded cells of $X_{\mathbb{R}}$. In our case, the bounded cells are two triangles. 

In step 2, we use our start solutions as initial conditions for path tracking from $(\tilde{s}, \tilde{\nu})$ to $(s^*,\nu^*)$. That is, we use the solutions \texttt{start\_sols} for parameters \texttt{start\_pars} to compute the solutions \texttt{solutions(R2)} for the target parameters \texttt{[s\_star; ν\_star]}:
\begin{minted}{julia}
R2 = solve(g_sys, start_sols; start_parameters = start_pars, 
                              target_parameters = [s_star;ν_star])
solutions(R2)
\end{minted}
The last line prints an accurate numerical approximation of the two critical points:
\begin{equation}\label{eq:two_critical_points}
\left ( \frac{\sqrt{5}-1}{2},\, 1  \right ), \quad \left ( \frac{-\sqrt{5}-1}{2},\, 1  \right ).    
\end{equation} 

\subsection{Dual volumes in field theory limits} \label{sec:2.2}

We switch back to the integral \eqref{eq:Idelta}. The field theory limit of ${\cal I}(\delta)$ is $\lim_{\delta \rightarrow \infty} {\cal I}(\delta)$. This has a nice description in terms of our polytope $P(s) = \sum_{i = 1}^\ell {\rm Re}(s_i) \cdot \Delta(f_i)$ from Theorem \ref{thm:convergence}, and in terms of the critical points ${\rm Crit}(\log L)$ computed in the previous section. The statement uses the toric Hessian determinant $H_{-\log L}$ of \emph{minus} the log-likelihood function, see \eqref{eq:torichessian}.  

\begin{theorem} \label{thm:main2.2}
Let ${\rm Re}(s_i) >0$ for $i = 1, \ldots, \ell$ and suppose that $\Delta(f_1) + \cdots + \Delta(f_\ell)$ has dimension $n$. If $f_1, \ldots, f_\ell$ satisfy Assumption \ref{assum:poscoeffs} and $\nu \in {\rm int}(P(s))$, then we have
    \begin{equation}
        \lim_{\delta \rightarrow \infty}{\cal I}(\delta) \, = \, {\rm Vol}((P(s)-\nu)^\circ) \, = \, \sum_{x \in {\rm Crit}(\log L)} H_{-\log L}(x)^{-1}.
        \label{eq:main2.2}
    \end{equation}
\end{theorem}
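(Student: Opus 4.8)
The plan is to establish the two equalities in \eqref{eq:main2.2} separately. For the first equality $\lim_{\delta\to\infty}{\cal I}(\delta) = {\rm Vol}((P(s)-\nu)^\circ)$, I would reuse the machinery of the proof of Theorem \ref{thm:convergence}. After the change of variables $x_j = \exp(\delta\, y_j)$ (chosen so the $1/\delta$ in the exponents is absorbed, with the $\delta^{-n}$ prefactor cancelling the Jacobian $\delta^n$), the integral becomes $\int_{\mathbb{R}^n}\exp\!\big(y\cdot\nu - \tfrac{1}{\delta}\sum_i s_i\log f_i(\exp(\delta y))\big)\,{\rm d}y$ — wait, more carefully: writing $x=\exp(\delta y)$ gives $\big(f^{-s}x^\nu\big)^{1/\delta} = \exp\!\big(y\cdot\nu - \tfrac{1}{\delta}\sum_i s_i\log f_i(\exp(\delta y))\big)$, and ${\rm d}x/x = \delta^n\,{\rm d}y$, so the $\delta^{-n}$ cancels and ${\cal I}(\delta) = \int_{\mathbb{R}^n}\exp\!\big(y\cdot\nu - \tfrac{1}{\delta}\sum_i s_i\log f_i(\exp(\delta y))\big)\,{\rm d}y$. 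As $\delta\to\infty$, on the interior of each cone $-C_v$ the tropical limit $\tfrac{1}{\delta}\log f_i(\exp(\delta y)) \to \min_{\alpha\in{\rm supp}(f_i)} y\cdot\alpha = y\cdot v_i$ holds pointwise (with $v = \sum_i s_i v_i$ the corresponding vertex of $P(s)$), so the integrand converges pointwise to $\exp(y\cdot(\nu - v))$. Combining this with the sandwich bounds \eqref{eq:sandwich} — whose prefactors $\prod_i M_i^{-s_i/\delta}$ and $\prod_i c_{i,v_i}^{-s_i/\delta}$ both tend to $1$ as $\delta\to\infty$ — and invoking dominated convergence cone by cone, one gets $\lim_{\delta\to\infty}{\cal I}_v(\delta) = \int_{-C_v}\exp(y\cdot(\nu-v))\,{\rm d}y = {\rm Vol}(B_{v-\nu})$ by Lemma \ref{lem:volB}. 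Summing over vertices $v$ of $P(s)$ and using the subdivision \eqref{eq:dualdecomp} of $(P(s)-\nu)^\circ$ gives the first equality.

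For the second equality, ${\rm Vol}((P(s)-\nu)^\circ) = \sum_{x\in{\rm Crit}(\log L)} H_{-\log L}(x)^{-1}$, I would identify the right-hand side as a known formula for the normalized volume of a dual polytope in terms of critical points of a "master function." The log-likelihood $\log L = -\sum_i s_i\log f_i + \sum_j \nu_j\log x_j$ restricted to the positive orthant $\mathbb{R}^n_+$ is strictly concave (as a function of $y$ with $x=\exp(y)$, since $-\log f_i$ is concave in $y$ by Assumption \ref{assum:poscoeffs} and the positivity of the $c_{i,\alpha}$, and the linear term contributes nothing to the Hessian) and proper when $\nu\in{\rm int}(P(s))$, so it has a unique critical point $x^\star$ in $\mathbb{R}^n_+$; moreover the Laplace/saddle-point asymptotics of ${\cal I}(\delta)$ as $\delta\to\infty$ pick out exactly $x^\star$, yielding $\lim_{\delta\to\infty}{\cal I}(\delta) = H_{-\log L}(x^\star)^{-1}$ up to checking that subleading terms vanish and that the Laplace expansion in the $y$-coordinates converts to the toric Hessian in $x$. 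But the claimed sum is over \emph{all} complex critical points, not just the real positive one. Here I would appeal to the "Euler characteristic = number of critical points" count (Theorem \ref{thm:huh}) together with the known identity — essentially Varchenko's formula, or the statement that $\sum_{x\in{\rm Crit}(\log L)} H_{-\log L}(x)^{-1}$ computes ${\rm Vol}((P(s)-\nu)^\circ)$ via a residue/degree argument: the critical-point equations \eqref{eq:criteq} define a finite scheme, $H_{-\log L}$ is (up to sign and the torus Jacobian) the Jacobian of that system, and the global residue / Grothendieck-residue sum $\sum 1/{\rm Jac}$ over all solutions equals a combinatorial invariant which one evaluates to be the dual volume. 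Concretely I would clear denominators in \eqref{eq:criteq}, homogenize/compactify appropriately, and match the sum of inverse Jacobians against the mixed-volume-type count; alternatively, deform $(s,\nu)$ within $U$ to a point where all critical points are real positive (one per region, by Varchenko), observe both sides of the second equality are rational in $(s,\nu)$ hence constant on components of $U$, and then each summand is a genuine dual-volume contribution $B_v$ in bijection with vertices $v$ of $P(s)-\nu$ via the tropicalization of the critical equations.

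The main obstacle is the second equality, specifically the passage from "the real positive critical point gives the Laplace-asymptotic leading term $H_{-\log L}(x^\star)^{-1}$, which equals the total dual volume" to "the \emph{sum over all complex critical points} of $H_{-\log L}^{-1}$ equals that same dual volume." Making this rigorous requires either (i) a global residue theorem argument showing $\sum_{x\in{\rm Crit}} H_{-\log L}(x)^{-1}$ is a deformation-invariant rational function of $(s,\nu)$ that can be evaluated in a degenerate limit, or (ii) a direct combinatorial matching of the complex critical points to the vertices (equivalently, to the simplicial cones $B_v$) of $P(s)-\nu$ via tropical geometry — this is where I expect the real work to lie, since one must control how the critical locus degenerates and show no critical points escape to infinity or collide (using genericity of $(s,\nu)\in U$ from Theorem \ref{thm:huh}). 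A clean route, which I would pursue first, is to note that $\lim_{\delta\to\infty}{\cal I}(\delta)$ has \emph{two} independent evaluations — the polytope one from paragraph one and a saddle-point one — and then separately prove the purely algebraic identity $\sum_{x\in{\rm Crit}(\log L)} H_{-\log L}(x)^{-1} = {\rm Vol}((P(s)-\nu)^\circ)$ by the global residue / Newton-polytope degree argument, so that the analytic and algebraic sides meet in the middle.
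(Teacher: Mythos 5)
Your treatment of the first equality is essentially the paper's own argument: the paper sandwiches $\delta^n\,{\cal I}(\delta)$ between $\sum_v \prod_i M_i^{-{\rm Re}(s_i)/\delta}\,{\rm Vol}(B_{(v-\nu)/\delta})$ and $\sum_v \prod_i c_{i,v_i}^{-{\rm Re}(s_i)/\delta}\,{\rm Vol}(B_{(v-\nu)/\delta})$ using \eqref{eqn:sandwich2}, cancels $\delta^n$ via the scaling ${\rm Vol}(B_{(v-\nu)/\delta})=\delta^n\,{\rm Vol}(B_{v-\nu})$, and lets the prefactors go to $1$; your rescaled change of variables $x=\exp(\delta y)$ plus the same cone-by-cone squeeze is the same proof in different clothes (the appeal to dominated convergence is unnecessary, since the squeeze already closes).

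For the second equality there is a genuine gap, and one concrete step in your ``clean route'' is false. You assert that Laplace/saddle-point asymptotics as $\delta\to\infty$ localize ${\cal I}(\delta)$ at the unique positive critical point $x^\star$ and yield $\lim_{\delta\to\infty}{\cal I}(\delta)=H_{-\log L}(x^\star)^{-1}$. This cannot be right: as $\delta\to\infty$ the exponent $\tfrac{1}{\delta}\log L$ flattens, so there is no localization at all (that is exactly why the whole dual polytope contributes); localization at $x^\star$ happens in the opposite limit $\delta\to 0^+$, and there it produces $H_{-\log L}(a)^{-1/2}$ (Theorem \ref{thm:main2.3}), not $H^{-1}$. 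Moreover the single positive critical point does not account for the dual volume: in Example \ref{ex:2.4} the Hessian values are $\tfrac12(25\pm 11\sqrt5)$, the positive critical point alone contributes $\tfrac{2}{25+11\sqrt5}\approx 0.04$, while the dual volume is $5$, which is recovered only by summing over \emph{both} complex critical points. What remains is precisely the algebraic identity $\sum_{x\in{\rm Crit}(\log L)}H_{-\log L}(x)^{-1}={\rm Vol}((P(s)-\nu)^\circ)$, which you correctly identify as the real content but only sketch (global residue/degree or deformation within $U$) without controlling degeneration of the critical locus. For what it is worth, the paper does not prove this equality either: it cites \cite[Section 7.1, Claim 4]{arkani2021stringy} and \cite[Theorem 13]{sturmfels2021likelihood}, which carry out essentially the residue/rationality argument you gesture at. So your proposal matches the paper on the first equality, but for the second it would need either that citation or a worked-out version of the residue argument, and the saddle-point shortcut must be dropped.
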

\begin{proof}
We prove the first equality. The second equality uses \cite[Section 7.1, Claim 4]{arkani2021stringy}. See also \cite[Theorem 13]{sturmfels2021likelihood}. For any fixed $\delta \in \mathbb{R}_+$, we have $\nu/\delta \in {\rm int}(P(s/\delta))$, and a vertex $v$ of $P(s)$ gives a vertex $(v-\nu)/\delta$ of $P(s/\delta) - \nu/\delta$. We can use \eqref{eqn:sandwich2} to obtain the estimate
\begin{equation}\label{eq:ineq_d}
     \sum_{v}\prod_{i=1}^\ell M_i^{-{\rm Re}(s_i)/\delta} \cdot {\rm Vol}(B_{\frac{v-\nu}{\delta}}) \, \leq \, \delta^n \cdot {\cal I}(\delta) \, \leq \, \sum_{v}\prod_{i=1}^\ell c_{i,v_i}^{-{\rm Re}(s_i)/\delta} \cdot {\rm Vol}(B_{\frac{v-\nu}{\delta}}). 
\end{equation}
The sums are over vertices of $P(s)$. The factor $\delta^n$ in the middle comes from ${\cal I}(\delta) = \delta^{-n} \cdot {\cal I}(s/\delta,\nu/\delta)$, with ${\cal I}(s,\nu)$ as in \eqref{eq:Eulermellin}.
Using the scaling property of the volume ${\rm Vol}(B_{(v-\nu)/\delta}) = \delta^n \cdot  {\rm Vol}(B_{v-\nu})$, we can cancel $\delta^n$ from \eqref{eq:ineq_d}. Taking the limit $\delta\to\infty$ gives 
\[ \lim_{\delta \rightarrow \infty} {\cal I}(\delta) \, = \, \sum_{v} {\rm Vol}(B_{v-\nu}) \, = \, {\rm Vol}((P(s)-\nu)^\circ) \]
as desired. For the last equality, see \eqref{eq:dualdecomp} and Figure \ref{fig:polardual}.
\end{proof}

\begin{example} \label{ex:beta2.2}
    Let us verify the formula \eqref{eq:main2.2} for the integral representation \eqref{eq:positive_Beta} of the beta function.
    The Newton polytope $P({\tilde{s}})$ is a segment given by \eqref{eq:betapolytope}.
    The dual polytope $(P(\tilde{s})-\nu)^\circ$ is given by $(\frac{1}{\nu-\tilde{s}},\frac{1}{\nu})$.
    Its volume ${\rm Vol}((P(\tilde{s})-\nu)^\circ)$ is $\frac{1}{\nu}+\frac{1}{\tilde{s}-\nu}=\frac{\tilde{s}}{\nu(\tilde{s}-\nu)}$. Checking that this equals our field theory limit can be done in one line of \texttt{Mathematica} code:
    \begin{minted}{julia}
Limit[1/δ*Integrate[y^(ν/δ - 1)/(1 + y)^(s/δ), {y, 0, Infinity}], δ -> Infinity]
    \end{minted}
    Finally, we compute the sum of $H_{- \log L}(y)$ evaluated at the critical points of ${\log L}$. We solve
    \[ \frac{d}{dy}\log L(y) \, = \, \frac{\nu}{y}-\frac{\tilde{s}}{1+y} \, = \, 0. \]
    The unique solution is $y=\frac{\nu}{\tilde{s}-\nu}$. The toric Hessian determinant of ${-\log L}$ is 
    \begin{equation}\label{eq:H_for_beta}
    H_{-\log L}(y) \, = \, -y \frac{d}{d y} \left( y \frac{d}{dy} \log L(y) \right) \, = \, -y \frac{d}{d y}  \left( \nu - \frac{\tilde{s}y}{1+y} \right )  \, = \, \frac{\tilde{s}y}{(1+y)^2}.
    \end{equation}
    The value at $y=\frac{\nu}{\tilde{s}-\nu}$ is $\frac{\nu(\tilde{s}-\nu)}{\tilde{s}}$. We have now confirmed \eqref{eq:main2.2}.
\end{example}

\begin{example}\label{ex:2.4}
    Consider the integral ${\cal I}(\delta)$ from \eqref{eq:I5} with $\alpha'=\frac{1}{\delta}$.
    The Newton polytope $P(s) = P(s_{13},s_{14},s_{24})$ of $f_1(x_1,x_2)=1+x_1,f_2(x_1,x_2)=1+x_1+x_2,f_3(x_1,x_2)=x_1+x_2$ is two dimensional for positive $s$.
    When $s_{13}=s_{14}=s_{24}=1$, it is as in Figure \ref{fig:MS}.
    We take $\nu_1=\nu_2=1$.
    The values of $H_{-\log L}$ at the two critical points \eqref{eq:two_critical_points} are
    \[
    \frac{1}{2} \left(25+11 \sqrt{5}\right),\quad\frac{1}{2} \left(25-11
   \sqrt{5}\right).
    \]
    The sum of the reciprocals of these two numbers is $5$. This is the area of $(P(s) - \nu)^\circ$ in the right part of Figure \ref{fig:polardual}, normalized by a factor $2! = 2$ (recall our definition of ${\rm Vol}$ in the discussion preceding Lemma \ref{lem:volB}).
Let us illustrate the computation of the dual volume using the Julia package \texttt{OSCAR.jl} (v0.12.0) \cite{OSCAR}.
It calls \texttt{polymake} for polytope computations \cite{gawrilow2000polymake}.
The Newton polytope $P=P(s_{13},s_{14},s_{24})$ of Example \ref{ex:2.4} is computed as follows:

\begin{minted}{julia}
using Oscar                              #load the package
P1 = convex_hull([0 0;1 0])              #Newton polytope of f1
P2 = convex_hull([0 0;1 0;0 1])          #Newton polytope of f2
P3 = convex_hull([1 0;0 1])              #Newton polytope of f3
P = P1+P2+P3                             #Minkowski sum
\end{minted}

The dual polytope $(P(s)-\nu)^\circ$ and its volume are computed by the commands \texttt{polarize} and \texttt{volume} respectively.
The vertices of the dual polytope $(P-\nu)^\circ$ are $(1, 1), (1,0),( 0,-1),(-1,-1),(0,1)$ as in Figure \ref{fig:polardual}.
The normalized volume is 5, as expected.

\if
For a polyhedral cone $C$ in $\mathbb{R}^n$, its {\it dual cone} $C^\circ$ is
\[
C^\circ=\{ \phi\in(\mathbb{R}^n)^\vee\mid \phi(v)\geq 0\ \text{for all }v\in C\}.
\]
Here, the symbol $(\mathbb{R}^n)^\vee$ denotes the dual vector space of $\mathbb{R}^n$.
We identify $(\mathbb{R}^n)^\vee$ with $\mathbb{R}^n$ through the standard dot product.
When the vertices of $P$ are $v_1,\dots,v_N$, we associate an $(n+1)$-dimensional cone $C(P)$ spanned by $(1,v_1),\dots,(1,v_N)\in\mathbb{R}^{n+1}$.
The set of these vectors is called {\it Cayley configuration}.
Let us write $\pi:\mathbb{R}^{n+1}\to\mathbb{R}^n$ for the projection omitting the first coordinate.
If $P$ is full-dimensional and it contains the origin in its interior, the dual polytope $P^\circ$
is precisely given by
\[
P^\circ=\pi\left( \{ w\in C(P)^\circ \mid w=(1,w')\text{ for some }w'\in\mathbb{R}^n\}\right).
\]
For example, we can compute the dual polytope $(P-\nu)^\circ$ of Example \ref{ex:2.4} as follows:
\fi
\begin{minted}{julia}           
DP = polarize(P+[-1,-1])                 #dual polytope of P-ν
println(vertices(DP))                    #print the vertices
factorial(2)*volume(DP)                  #normalized volume - output: 5
\end{minted}
The reader is encouraged to repeat this example for the Feynman integral \eqref{eq:Itriangle}.
\end{example}
We point out that the field theory limit $\alpha' \rightarrow 0$ of the string amplitude \eqref{eq:stringamplitude} is the scattering amplitude for a physical model called \emph{bi-adjoint scalar $\phi^3$ theory}. The expression in terms of critical points was first discovered in \cite{cachazo2014scattering}. The critical point equations \eqref{eq:criteq} are called the \emph{scattering equations} in this context. For a connection to algebraic statistics, see \cite{sturmfels2021likelihood}.

As a final remark on field theory limits, note that the coordinates of the individual critical points in ${\rm Crit}(\log L)$ are \emph{algebraic} functions of $s, \nu$. They are usually not rational functions, like in Example \ref{ex:beta2.2}. For instance, eliminating $x_2$ from \eqref{eq:criteqM05} gives a quadratic equation in $x_1$, resulting in the square roots in \eqref{eq:two_critical_points} via the quadratic formula. However, the sum over ${\rm Crit}(\log L)$ in Theorem \ref{thm:main2.2} is a rational function in $s$ and $\nu$ by Galois theory. This is called the \emph{canonical function} of $P(s)$. On the domain ${\rm Re}(s_i) > 0$, $\nu \in {\rm int}(P(s))$, it evaluates to ${\rm Vol}((P(s)-\nu)^\circ)$. When $s$ with positive real part is fixed and $P(s)$ is viewed as a polytope in $n$-dimensional $\nu$-space $\mathbb{R}^n$, the canonical function defines a meromorphic top form on $\mathbb{R}^n$. That form is called the \emph{canonical form} of $P(s)$ in the theory of \emph{positive geometries} \cite{arkani2017positive}. 

\subsection{Saddle point approximation in high energy limits} \label{sec:2.3}

While the field theory limit $\lim_{\delta \rightarrow \infty} {\cal I}(\delta)$ is obtained by summing over the complex points ${\rm Crit}(\log L)$, the high energy limit $\lim_{\delta \rightarrow 0^+} {\cal I}(\delta)$ is governed by a single, positive critical point. 
\begin{theorem} \label{thm:main2.3}
    Let $s_i \in \mathbb{R}_+$ for $i = 1, \ldots, \ell$ and suppose that $\Delta(f_1) + \cdots + \Delta(f_\ell)$ has dimension $n$. If $f_1, \ldots, f_\ell$ satisfy Assumption \ref{assum:poscoeffs} and $\nu \in {\rm int}(P(s))$, then ${\rm Crit}(\log L) \cap \mathbb{R}^n_+$ consists of one point $\{a\}$. Moreover, the following formula holds:
    \begin{equation}\label{eq:high_energy_limit}
    \lim_{\delta \rightarrow 0^+}\left (\frac{2\pi}{\delta}\right)^{-\frac{n}{2}}L(a)^{-\frac{1}{\delta}}{\cal I}(\delta) \, = \, \left(H_{-\log L}(a)\right)^{-\frac{1}{2}}.
    \end{equation}
\end{theorem}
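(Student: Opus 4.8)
The plan is to read \eqref{eq:high_energy_limit} as a multivariate \emph{Laplace} (saddle-point) asymptotics in the large parameter $1/\delta$, and to extract the leading term from a quadratic expansion at the unique positive critical point. First I would apply the substitution $x_j = e^{y_j}$ used in the proof of Theorem~\ref{thm:convergence}: this turns ${\cal I}(\delta)$, up to an elementary power of $\delta$, into the Laplace integral $\int_{\mathbb{R}^n} {\rm exp}\bigl(\phi(y)/\delta\bigr)\,{\rm d}y$, where
\[
\phi(y) \, = \, \log L(e^{y}) \, = \, -\sum_{i=1}^\ell s_i\,\log f_i(e^{y}) \, + \, \nu\cdot y .
\]
Since the Euler operator $x_j\,\partial/\partial x_j$ becomes $\partial/\partial y_j$ under this change of variables, the $y$-Hessian of $\phi$ at a point is precisely the toric Hessian matrix of $\log L$ at the corresponding $x$, cf.\ \eqref{eq:torichessian}. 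So the proof splits into (i) locating the maximum of $\phi$ on $\mathbb{R}^n$ and identifying it with the positive critical point $a$, and (ii) a standard Laplace estimate around that maximum.

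The substantive step is (i): I claim $\phi$ is strictly concave and proper (i.e.\ $\phi(y)\to-\infty$ as $\|y\|\to\infty$). Each summand $\log f_i(e^{y}) = \log\bigl(\sum_\alpha c_{i,\alpha}\,e^{\alpha\cdot y}\bigr)$ is a log-sum-exp of affine forms with positive weights, hence convex, and its Hessian is the covariance matrix of the exponent vector $\alpha$ under the Gibbs weights $c_{i,\alpha}e^{\alpha\cdot y}/f_i(e^{y})$; this matrix is positive semidefinite, and $\sum_i s_i$ times these matrices is positive definite precisely because $\bigcup_i {\rm supp}(f_i)$ affinely spans $\mathbb{R}^n$ --- which is exactly the hypothesis $\dim(\Delta(f_1)+\cdots+\Delta(f_\ell))=n$. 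Together with the affine term $\nu\cdot y$ and $s_i>0$, this shows $-\phi$ is strictly convex. For properness, restrict $\phi$ to a cone $-C_v$ of the normal fan of $P(s)$, as in the decomposition \eqref{eq:sumovervtcs}; the dominant-term bounds \eqref{eq:boundf} give $\phi(y) = y\cdot(\nu-v) + O(1)$ there, with $v$ the vertex of $P(s)$ attached to that cone. Applying Lemma~\ref{lem:allneg} to the polytope $P(s)-\nu$ (whose normal fan is that of $P(s)$, with $v-\nu$ the vertex for $C_v$), the assumption $\nu\in{\rm int}(P(s))$ forces $y\cdot(\nu-v)\to-\infty$ along every such cone, and by \eqref{eq:complete} these cones cover $\mathbb{R}^n$. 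Hence $\phi$ is bounded above and proper, so it attains its supremum at a unique critical point $y^\ast$; setting $a = e^{y^\ast}\in\mathbb{R}^n_+$ proves the first assertion ${\rm Crit}(\log L)\cap\mathbb{R}^n_+=\{a\}$, and $-{\rm Hess}_y\phi(y^\ast)$ --- the toric Hessian matrix of $-\log L$ at $a$ --- is positive definite with positive determinant $H_{-\log L}(a)$.

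For step (ii) I would write $\int_{\mathbb{R}^n}{\rm exp}(\phi/\delta)\,{\rm d}y = L(a)^{1/\delta}\int_{\mathbb{R}^n}{\rm exp}(-\psi(y)/\delta)\,{\rm d}y$ with $\psi=\phi(y^\ast)-\phi\ge 0$, $\psi(y^\ast)=0$, and ${\rm Hess}\,\psi(y^\ast)=-{\rm Hess}\,\phi(y^\ast)$ positive definite. Convexity of $\psi$ gives $\psi\ge c_\rho>0$ on the complement of any ball $B(y^\ast,\rho)$, so that part of the integral is $O\!\bigl({\rm exp}(-c_\rho/(2\delta))\bigr)$ --- using that $\int{\rm exp}(-\psi/(2\delta_0))\,{\rm d}y$ is a constant positive multiple of ${\cal I}(s/(2\delta_0),\nu/(2\delta_0))$, which converges by Theorem~\ref{thm:convergence} --- hence exponentially negligible. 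On $B(y^\ast,\rho)$ I would Taylor-expand $\psi$ to second order and substitute $y = y^\ast + \sqrt{\delta}\,u$; dominated convergence turns the localized integral into $\delta^{n/2}\int_{\mathbb{R}^n}{\rm exp}\bigl(-\tfrac12 u^\top(-{\rm Hess}\,\phi(y^\ast))u\bigr)\,{\rm d}u\cdot(1+o(1))$, and the Gaussian integral evaluates to $(2\pi\delta)^{n/2}\,H_{-\log L}(a)^{-1/2}(1+o(1))$. Reassembling the pieces and the leftover power of $\delta$ yields \eqref{eq:high_energy_limit}.

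The main obstacle lives entirely in step (i): establishing log-concavity of $\log f_i\circ\exp$, identifying its Hessian with a covariance matrix, and --- crucially --- tying positive-definiteness of the summed Hessians to the full-dimensionality hypothesis, together with the properness argument via Lemma~\ref{lem:allneg}. These are exactly what single out $a$ as \emph{the} dominant critical point: the remaining elements of ${\rm Crit}(\log L)$ (counted by $(-1)^n\chi(X)$ in Theorem~\ref{thm:huh}) lie off $\mathbb{R}^n_+$ and contribute nothing to the $\delta\to 0^+$ asymptotics. Step (ii) is then routine Laplace asymptotics; the one thing to watch is the bookkeeping of the powers of $\delta$ and of $2\pi$ so that the normalization in \eqref{eq:high_energy_limit} comes out exactly, which is easily sanity-checked on the beta function of Example~\ref{ex:beta2.2} by feeding Stirling's formula into ${\cal I}(\delta)=\delta^{-1}\Gamma(\nu/\delta)\Gamma((\tilde{s}-\nu)/\delta)/\Gamma(\tilde{s}/\delta)$.
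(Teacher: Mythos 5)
Your proposal is correct, and its analytic core is the same saddle-point argument as the paper's: split the integral into a small neighborhood of the positive critical point and its complement, show the complement is exponentially suppressed using strict concavity together with a convergent reference integral supplied by Theorem \ref{thm:convergence}, and reduce the neighborhood to a Gaussian by a second-order Taylor expansion, a $\sqrt{\delta}$ rescaling, and dominated convergence. Where you genuinely differ is in the first claim and in the choice of coordinates. The paper gets existence and uniqueness of $a$ from Lemma \ref{lem:moment_map}: the moment map restricts to a diffeomorphism $\mathbb{R}^n_+\to{\rm int}(P(s))$ (proved via the Cayley configuration and Fulton's toric result), so ${\rm Crit}(\log L)\cap\mathbb{R}^n_+=\mu^{-1}(\nu)$ is a singleton. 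You instead get uniqueness from strict concavity (the content of Lemma \ref{lem:convexity}) and existence from properness of $\phi(y)=\log L(e^y)$, established through the normal-fan decomposition, the bounds \eqref{eq:boundf}, and Lemma \ref{lem:allneg}; this is more elementary and self-contained, while the paper's route buys the stronger global statement that $\mu$ is a diffeomorphism onto ${\rm int}(P(s))$, which it reuses elsewhere (e.g.\ in the sketch of Theorem* \ref{thm:backtosec1}). Staying in the $y$-coordinates throughout is also slightly cleaner: your Gaussian determinant is directly $H_{-\log L}(a)$, whereas the paper diagonalizes the $x$-space Hessian and only at the end reconciles with the toric Hessian via $H_{-\log L}(a)=(a_1\cdots a_n)^2\prod_i(-D_{ii})$.

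Two small caveats. First, your parenthetical criterion for positive definiteness of the summed covariance Hessians --- that $\bigcup_i{\rm supp}(f_i)$ affinely spans $\mathbb{R}^n$ --- is not the right equivalence (take $f_1=x_1$, $f_2=x_2$, $f_3=1$ in two variables: the union affinely spans $\mathbb{R}^2$, yet every covariance matrix vanishes). The correct condition is that no nonzero direction $u$ has $\alpha\cdot u$ constant on every ${\rm supp}(f_i)$, which is exactly $\dim(\Delta(f_1)+\cdots+\Delta(f_\ell))=n$; since that is the hypothesis you actually have, your argument goes through with this fix. Second, when you carry out the promised $\delta$-bookkeeping, note that keeping the prefactor $\delta^{-n}$ from \eqref{eq:Idelta} produces an extra $\delta^{-n}$ on the left of \eqref{eq:high_energy_limit}; the paper's own proof (and its beta-function example) tacitly works with the unnormalized integral $\int_{\mathbb{R}^n_+}L(x)^{1/\delta}\,\frac{{\rm d}x}{x}$, for which the stated limit is exact, so your Stirling sanity check with the $\delta^{-1}$ included would surface a mismatch that lies in the paper's normalization rather than in your argument.
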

We used the following conventions in \eqref{eq:high_energy_limit}. As above, for a positive real number $r$, we take the branch of the logarithm for which $\log r \in\mathbb{R}$. For the square root, we set $(-r)^{\frac{1}{2}}=e^{i\frac{\pi}{2}}r^{\frac{1}{2}}$. Notice that, rather than requiring ${\rm Re}(s_i) >0$, here we only allow real values for $s_i$. The reader can check that, using the values  $s_{13} = s_{14}  = s_{34} - \sqrt{-1} = 1$ in the example of Section \ref{sec:2.1} instead, there are no positive critical points. 
    
To prove Theorem \ref{thm:main2.3}, we use two lemmas. The first is on the concavity of $\log L$.

\begin{lemma}\label{lem:convexity}
Let $f_1, \ldots, f_\ell$ satisfy Assumption \ref{assum:poscoeffs} and let $s_1, \ldots, s_\ell \in \mathbb{R}_+$.
The function $\log L(e^{z_1},\ldots,e^{z_n})$ with $L(x)=f^{-s}x^{\nu}$ is strictly concave in $z\in \mathbb{R}^n$. The toric Hessian~matrix
\begin{equation}\label{eq:toric_hessian_matrix}
\left ( x_j  \frac{\partial}{\partial x_j} \left ( x_k\frac{\partial}{\partial x_k} \log L (x) \right) \right)_{j,k}    
\end{equation}
of $\log L$ is negative definite for any  $x\in\mathbb{R}^n_+$.
\end{lemma}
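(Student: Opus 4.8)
The plan is to reduce everything to a standard fact about log-concavity: a function of the form $\log\bigl(\sum_\alpha c_\alpha e^{\langle z,\alpha\rangle}\bigr)$ with $c_\alpha>0$ is convex in $z$, and affine-linear functions are concave. Writing $x_j=e^{z_j}$, we have
\[
\log L(e^{z_1},\dots,e^{z_n}) \,=\, \langle \nu, z\rangle \,-\, \sum_{i=1}^\ell s_i \,\log f_i(e^{z_1},\dots,e^{z_n}),
\]
and each $\log f_i(e^z)=\log\bigl(\sum_{\alpha\in{\rm supp}(f_i)} c_{i,\alpha}\, e^{\langle z,\alpha\rangle}\bigr)$. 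Since $s_i>0$, the term $-s_i\log f_i(e^z)$ is concave, and $\langle\nu,z\rangle$ is affine, so $\log L(e^z)$ is concave. Noting that in the $z$-coordinates the Euler operator becomes $x_j\partial/\partial x_j=\partial/\partial z_j$, the toric Hessian matrix \eqref{eq:toric_hessian_matrix} is exactly the ordinary Hessian of $z\mapsto\log L(e^z)$; so once we know this function is \emph{strictly} concave, its Hessian is negative definite everywhere, which is the second assertion (evaluated at any $x\in\mathbb{R}^n_+$, i.e.\ any $z\in\mathbb{R}^n$).

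So the real content is \textbf{strict} concavity, which amounts to showing that for each $i$ the Hessian of $z\mapsto \log f_i(e^z)$ is positive definite — not merely positive semidefinite — under the standing hypothesis that $\Delta(f_1)+\cdots+\Delta(f_\ell)$ is $n$-dimensional. The first step is the classical computation: for $h(z)=\log\bigl(\sum_\alpha c_\alpha e^{\langle z,\alpha\rangle}\bigr)$, introduce the probability weights $p_\alpha(z)=c_\alpha e^{\langle z,\alpha\rangle}/\sum_\beta c_\beta e^{\langle z,\beta\rangle}$, and check by direct differentiation that the Hessian of $h$ at $z$ is the covariance matrix of the random vector $\alpha$ distributed according to $(p_\alpha)$, namely $\sum_\alpha p_\alpha\,\alpha\alpha^\top - \bigl(\sum_\alpha p_\alpha\,\alpha\bigr)\bigl(\sum_\alpha p_\alpha\,\alpha\bigr)^\top$. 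A covariance matrix is positive semidefinite, and its kernel consists of those $u\in\mathbb{R}^n$ for which $\langle u,\alpha\rangle$ is constant over all $\alpha\in{\rm supp}(f_i)$ — i.e.\ $u$ is orthogonal to the affine span of $\Delta(f_i)$. Summing over $i$ with positive weights $s_i$, the Hessian of $\sum_i s_i\log f_i(e^z)$ has kernel equal to the intersection of these subspaces, which is the orthogonal complement of the affine span of $\Delta(f_1)+\cdots+\Delta(f_\ell)$; the dimension hypothesis says this affine span is all of $\mathbb{R}^n$, so the kernel is $\{0\}$ and the Hessian is positive definite. Hence $\log L(e^z)$ has negative definite Hessian everywhere, giving both claims.

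\textbf{The main obstacle} is purely bookkeeping: one must be careful that Assumption \ref{assum:poscoeffs} as literally stated only guarantees the \emph{sum} of Newton polytopes is full-dimensional, not each $\Delta(f_i)$ individually, so the positive-definiteness must be obtained after summing over $i$ rather than factor by factor. A clean way to organize this is: (i) prove the covariance-matrix formula for a single log-of-positive-sum; (ii) identify its kernel with the normal space to the affine hull of the support; (iii) observe kernels of a finite sum of PSD matrices with positive coefficients intersect, and invoke $\Delta(fg)=\Delta(f)+\Delta(g)$ together with the fact that the affine hull of a Minkowski sum is the (Minkowski, hence linear up to translation) sum of affine hulls. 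One might alternatively shortcut steps (i)–(ii) by the remark that $f_1^{s_1}\cdots f_\ell^{s_\ell}$ (for positive integer $s_i$, then by continuity/density for real $s_i$, or directly via the product being again a positive-coefficient Laurent polynomial up to the exponents) has Newton polytope $P(s)$, so one is really just proving the single-polynomial statement once — but since the $s_i$ are real, the honest route is the additive one above. I expect no genuine difficulty beyond writing the covariance identity carefully and translating "full-dimensional Minkowski sum" into "trivial common kernel."
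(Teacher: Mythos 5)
Your proof is correct, and it is in fact more careful than the paper's own argument, though it follows the same basic reduction. The paper also substitutes $x_j=e^{z_j}$, observes that the toric Hessian becomes the ordinary Hessian of $-\sum_i s_i\log f_i(e^{z_1},\dots,e^{z_n})$ (the term $\langle\nu,z\rangle$ being affine), and then simply cites a result from the theory of exponential families (Brown, Theorem 1.13) to assert that \emph{each summand} $-s_i\log f_i(e^z)$ is strictly concave. Read literally, that assertion requires each individual support ${\rm supp}(f_i)$ to affinely span $\mathbb{R}^n$, which need not hold — e.g.\ for the running example $f_1=1+x_1$ with $n=2$, the function $\log(1+e^{z_1})$ is constant in $z_2$ — so strictness really only comes from the sum. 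Your covariance-matrix computation is essentially the proof of the cited fact, and your kernel analysis (kernel of each Hessian $=$ orthogonal complement of the direction space of $\Delta(f_i)$; kernels of a positively weighted sum of PSD matrices intersect; the intersection is the orthogonal complement of the direction space of $\Delta(f_1)+\cdots+\Delta(f_\ell)$) is exactly the right way to extract negative definiteness from full-dimensionality of the Minkowski sum. One caveat you already flag: the hypothesis $\dim\bigl(\Delta(f_1)+\cdots+\Delta(f_\ell)\bigr)=n$ does not appear in the literal statement of Lemma \ref{lem:convexity}, but it is genuinely needed (the lemma is false without it, by the same example as above), and it is present in Theorem \ref{thm:main2.3} and Lemma \ref{lem:moment_map} where the lemma is applied; invoking it as a standing assumption, as you do, is the correct reading. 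In short: same skeleton, but you replace the black-box citation by a self-contained argument and you locate the source of strict concavity correctly, which the paper's one-line proof glosses over.
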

\begin{proof}
    Substituting $x_j = {\rm exp}(z_j)$, the toric Hessian matrix \eqref{eq:toric_hessian_matrix} is the usual Hessian matrix of $-\sum_{i=1}^\ell s_i \log f_i(e^{z_1}, \ldots, e^{z_n})$.
    Each summand is strictly concave for $z \in \mathbb{R}^n$ by \cite[Theorem 1.13]{brown1986fundamentals}, so the toric Hessian is indeed a negative definite matrix. 
\end{proof}
To state the next lemma, we introduce a version of the \emph{algebraic moment map} $\mu_{\mathbb{C}}: X \longrightarrow \mathbb{C}^n$. This name comes from toric geometry, see Fulton's book \cite[Section 4.2]{fulton1993introduction}. Our moment map is slightly different from the one used by Fulton. It is given by 
\begin{equation} 
\mu_{\mathbb{C}}(x) \, = \, \left ( x_1 \sum_{i=1}^\ell s_i \, f_i(x)^{-1} \, \frac{\partial f_i}{\partial x_1}(x), \, \ldots, \, x_n \sum_{i=1}^\ell s_i \, f_i(x)^{-1} \, \frac{\partial f_i}{\partial x_n}(x) \right ) .
\end{equation}
This map is closely related to our critical points. From \eqref{eq:criteq}, it is clear that $x \in {\rm Crit}(\log L)$ if and only if $\mu_{\mathbb{C}}(x) = \nu$. The crucial properties of $\mu_{\mathbb{C}}$ are summarized in the following Lemma.

\begin{lemma}\label{lem:moment_map}
    Let $s_i \in \mathbb{R}_+$ for $i = 1, \ldots, \ell$ and suppose that $\Delta(f_1) + \cdots + \Delta(f_\ell)$ has dimension $n$. If $f_1, \ldots, f_\ell$ satisfy Assumption \ref{assum:poscoeffs}, then $\mu_{\mathbb{C}}(\mathbb{R}^n_+) \subset {\rm int}(P(s))$. Moreover, the restriction 
    \begin{equation} \label{eq:momentmap}
        \mu \, = \, (\mu_\mathbb{C})_{|\mathbb{R}^n_+}\, = \, (\mu_1,\dots,\mu_n): \mathbb{R}^n_+ \longrightarrow {\rm int} (P(s))
    \end{equation}
    of $\mu_{\mathbb{C}}$ to $\mathbb{R}^n_+$ is a diffeomorphism, 
    and the Jacobian matrix $\left( \frac{\partial\mu_k}{\partial x_j}\right)_{j,k=1}^n$ is positive definite.
\end{lemma}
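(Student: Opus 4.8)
The plan is to pass to logarithmic coordinates, recognize $\mu$ as the gradient of a strictly convex function, and then read off all three assertions from standard convex analysis together with Lemma~\ref{lem:convexity}. Set $z=(z_1,\dots,z_n)$ with $x_j=e^{z_j}$ and put $G(z)=\sum_{i=1}^\ell s_i\log f_i(e^z)$. Since $x_j\frac{\partial}{\partial x_j}=\frac{\partial}{\partial z_j}$ and $\frac{\partial}{\partial z_j}\log f_i(e^z)=f_i(e^z)^{-1}x_j\frac{\partial f_i}{\partial x_j}(e^z)$, a direct computation gives $\mu(e^z)=\nabla_z G(z)$. Because $\log L(e^z)=-G(z)+\nu\cdot z$, Lemma~\ref{lem:convexity} tells us that $G$ is strictly convex and that ${\rm Hess}_z\,G(z)$ — which is minus the toric Hessian matrix \eqref{eq:toric_hessian_matrix} of $\log L$ — is positive definite at every $z$. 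In particular the Jacobian $\big(\partial\mu_k/\partial x_j\big)_{j,k}$ is ${\rm Hess}_z\,G$ with its $j$-th row rescaled by $x_j^{-1}>0$, so it is invertible (its toric counterpart $\big(x_j\,\partial\mu_k/\partial x_j\big)_{j,k}={\rm Hess}_z\,G$ being positive definite), and by the inverse function theorem $\mu$ is a local diffeomorphism.

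First I would check $\mu(\mathbb{R}^n_+)\subseteq{\rm int}(P(s))$. Writing $f_i(e^z)=\sum_{\alpha\in{\rm supp}(f_i)}c_{i,\alpha}e^{\alpha\cdot z}$ with every $c_{i,\alpha}>0$ (Assumption~\ref{assum:poscoeffs}), we have $\nabla_z\log f_i(e^z)=\sum_\alpha p_{i,\alpha}(z)\,\alpha$ with weights $p_{i,\alpha}(z)=c_{i,\alpha}e^{\alpha\cdot z}/f_i(e^z)>0$ summing to $1$; hence $\nabla_z\log f_i(e^z)\in{\rm relint}\,\Delta(f_i)$, and therefore $\mu(e^z)=\sum_i s_i\nabla_z\log f_i(e^z)\in\sum_i s_i\,{\rm relint}\,\Delta(f_i)={\rm relint}\big(\sum_i s_i\Delta(f_i)\big)={\rm int}\,P(s)$, the last equality because $\dim\big(\sum_i\Delta(f_i)\big)=n$ forces $P(s)$ to be full-dimensional. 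Injectivity of $\mu$ on $\mathbb{R}^n_+$ is immediate from strict convexity of $G$: if $\nabla_z G(z)=\nabla_z G(z')$, then strict monotonicity $\big(\nabla_z G(z)-\nabla_z G(z')\big)\cdot(z-z')>0$ for $z\ne z'$ is contradicted, so $z=z'$; composing with the diffeomorphism $z\mapsto e^z$ gives injectivity.

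The substantive point is surjectivity onto ${\rm int}(P(s))$. Fix $\nu^*\in{\rm int}(P(s))$ and consider the strictly concave smooth function $\Psi(z)=\nu^*\cdot z-G(z)$, whose critical points are exactly the preimages of $\nu^*$ under $\mu\circ\exp$. For a unit vector $u$, choose $\alpha_i(u)\in{\rm supp}(f_i)$ maximizing $\alpha\cdot u$; then $f_i(e^{tu})\ge c_{i,\alpha_i(u)}e^{t\,h_{\Delta(f_i)}(u)}$ for $t\ge0$, where $h_Q(u)=\max_{q\in Q}q\cdot u$, so $G(tu)\ge t\,h_{P(s)}(u)-C$ with $C$ independent of $u$ (there are finitely many coefficients). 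Hence $\Psi(tu)\le -t\big(h_{P(s)}(u)-\nu^*\cdot u\big)+C$; since $\nu^*\in{\rm int}(P(s))$, the continuous function $u\mapsto h_{P(s)}(u)-\nu^*\cdot u$ is strictly positive on the unit sphere, hence bounded below by some $\rho_0>0$ by compactness, and so $\Psi(z)\to-\infty$ as $\|z\|\to\infty$. A coercive strictly concave smooth function attains its maximum at a unique point $z^*$, where $\nabla_z\Psi(z^*)=0$, i.e. $\mu(e^{z^*})=\nu^*$. Thus $\mu$ is a smooth bijection $\mathbb{R}^n_+\to{\rm int}(P(s))$ that is a local diffeomorphism, and therefore a diffeomorphism; in particular, since $\nu\in{\rm int}(P(s))$, this also recovers the claim of Theorem~\ref{thm:main2.3} that ${\rm Crit}(\log L)\cap\mathbb{R}^n_+$ is a single point.

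The main obstacle is the coercivity estimate in the surjectivity step: one must control $G$ in every direction simultaneously, and this support-function bound is exactly where the hypotheses $s_i>0$, $c_{i,\alpha}>0$, and $\dim\big(\sum_i\Delta(f_i)\big)=n$ are all genuinely used (the first two to keep the weighted sums pointing strictly inside the polytopes, the last to guarantee strict convexity of $G$ and full-dimensionality of $P(s)$). Everything else — the identification $\mu(e^z)=\nabla_z G(z)$, the invertibility of the Jacobian, injectivity, and the membership $\mu(\mathbb{R}^n_+)\subseteq{\rm int}(P(s))$ — is a fairly mechanical consequence of strict convexity once Lemma~\ref{lem:convexity} is in hand.
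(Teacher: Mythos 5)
Your argument is correct, and it takes a genuinely different route from the paper. The paper proves the lemma via the Cayley trick: it forms the Cayley configuration of the supports, introduces $\hat f = s_1y_1f_1+\cdots+s_\ell y_\ell f_\ell$ and the associated map $\hat\mu$ on $\mathbb{R}^{\ell+n}_+$, invokes the statement $(A_n)$ from Fulton's book to get that $\hat\mu$ is a diffeomorphism onto the interior of ${\rm pos}(A)$, and then restricts to the slice $\hat X_+\to{\rm int}(\hat P(s))$ and transports the result back to $\mu$ through a commutative diagram; only the positive definiteness at the end is obtained from Lemma~\ref{lem:convexity}. You instead give a self-contained convex-analysis proof: $\mu\circ\exp=\nabla_z G$ with $G$ strictly convex (Lemma~\ref{lem:convexity}), the image lies in ${\rm relint}$ of the Minkowski sum because each $\nabla_z\log f_i(e^z)$ is a strictly positive convex combination of the support points, injectivity follows from strict monotonicity of the gradient, and surjectivity from the support-function coercivity bound $\Psi(tu)\le -t\bigl(h_{P(s)}(u)-\nu^*\cdot u\bigr)+C$, which is exactly where $\nu^*\in{\rm int}(P(s))$ enters; your uniform constant $C$ and the compactness argument on the unit sphere are both fine. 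What each approach buys: yours avoids any appeal to toric geometry and makes the mechanism (Legendre-type duality between $\mathbb{R}^n$ and ${\rm int}(P(s))$) completely explicit, and as you note it simultaneously re-derives the uniqueness of the positive critical point used in Theorem~\ref{thm:main2.3}; the paper's proof is shorter given Fulton's result and reuses the Cayley configuration that reappears in Section~\ref{sec:differential}. One small remark: for the final claim you only assert invertibility of $\bigl(\partial\mu_k/\partial x_j\bigr)$ together with positive definiteness of its toric counterpart $\bigl(x_j\,\partial\mu_k/\partial x_j\bigr)={\rm Hess}_z G$; this is in substance what the paper's own proof establishes (it reduces the Jacobian claim to the toric one), and since $\bigl(\partial\mu_k/\partial x_j\bigr)={\rm diag}(x)^{-1}\,{\rm Hess}_z G$ is similar to the symmetric positive definite matrix ${\rm diag}(x)^{-1/2}\,{\rm Hess}_z G\,{\rm diag}(x)^{-1/2}$, it has positive eigenvalues, which is the sensible reading of the positive definiteness asserted in the lemma; you could add that one line to close the statement completely.
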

\begin{proof}
    This theorem follows from \cite[Section 4.2]{fulton1993introduction} if $\ell=1$.
    The statement for $\ell\geq 1$ has appeared in \cite[Claim 4]{arkani2021stringy}.
    For the readers' convenience, we provide a concise proof.
    Let $A_i = {\rm supp}(f_i) \subset\mathbb{Z}^n$ be the set of exponents appearing in $f_i(x)=\sum_{\alpha \in A_i}c_{i,\alpha}x^{\alpha}$. 
    We construct the \emph{Cayley configuration} $A \subset \mathbb{Z}^{\ell + n}$ of $A_1, \ldots, A_\ell$. This is given by 
    \begin{equation}\label{eq:CayleyConfiguration}
        A \, = \, \{ (e_i,\alpha) \, :\, \alpha \in A_i, \,  i = 1, \ldots, \ell \} \quad \subset \mathbb{Z}^{\ell + n},
    \end{equation}
    where $e_i$ is the $i$-th standard basis vector of $\mathbb{Z}^{\ell}$.
    Consider the Laurent polynomial $\hat{f} = s_1 y_1 f_1 + \cdots + s_\ell y_\ell f_\ell \in \mathbb{R}_+[y_1, \ldots, y_\ell, x_1^{\pm1}, \ldots, x_n^{\pm 1}]$. We define the map 
    \[ \hat{\mu}: \mathbb{R}^{\ell + n}_+ \rightarrow {\rm int}({\rm pos}(A)), \quad (y,x) \longmapsto \left(y_1 \frac{\partial \hat{f}}{\partial y_1}, \ldots, y_\ell \frac{\partial \hat{f}}{\partial y_\ell}, x_1 \frac{\partial \hat{f}}{\partial x_1}, \ldots, x_n \frac{\partial \hat{f}}{\partial x_n} \right).\]
    Here, ${\rm pos}(A)$ is the positive hull seen in \eqref{eq:CQ}. The reader who is unfamiliar with moment maps should check that the image of $\hat{\mu}$ indeed lies in the interior of the cone ${\rm pos}(A)$. By the statement labeled $(A_n)$ in \cite[page 83]{fulton1993introduction}, $\hat{\mu}$ is a diffeomorphism. 
    
    We consider the polytope $\hat{P}(s)$ consisting of all points in ${\rm pos}(A)$ whose first $\ell$ coordinates are $(s_1, \ldots, s_\ell)$. The preimage of $\hat{P}(s)$ under $\hat{\mu}$ is given by 
    \[ \hat{X}_+ \, = \, \{(y,x) \in \mathbb{R}^{\ell + n}_+ \, :\, y_1f_1(x)=\cdots=y_\ell f_\ell(x)=1\}. \]
    In fact, $\hat{\mu}_{|\hat{X}_+}: \hat{X}_+ \rightarrow {\rm int}(\hat{P}(s))$ is a diffeomorphism. 
    We now relate this to the moment map $\mu$ in \eqref{eq:momentmap}. To identify the domains of $\mu$ and $\hat{\mu}$, we introduce the map $\kappa: \mathbb{R}^n_+ \rightarrow \hat{X}_+$ with $\kappa(x) = (f_1(x)^{-1}, \ldots, f_\ell(x)^{-1}, x_1, \ldots ,x_n)$. For the co-domains, note that $\iota:{\rm int}(P(s)) \rightarrow {\rm int}(\hat{P}(s))$ with $\iota(v)=  (s_1,\dots,s_\ell,v)$ is an isomorphism. We obtain a diagram of diffeomorphisms
    \[
    \xymatrix{
        \hat{X}_+\ar[r]^-{\hat{\mu}}& \, \, {\rm int}(\hat{P}(s))\\
        \mathbb{R}^n_+\ar[u]^-{\kappa}\ar[r]^-{\mu}& \, \, {\rm int}({P}(s))\ar[u]_{\iota}
    }.
    \]
    The Jacobian matrix of $\mu$ is positive definite on $\mathbb{R}^n_+$ if and only if the \emph{toric Jacobian matrix} 
    \[\left( x_j\frac{\partial\mu_k}{\partial x_j}\right)_{j,k} \, = \, \left ( - x_j  \frac{\partial}{\partial x_j} \left ( x_k\frac{\partial}{\partial x_k} \log L (x) \right) \right)_{j,k} \, = \, \sum_{i=1}^\ell s_i \left ( x_j  \frac{\partial}{\partial x_j} \left ( x_k\frac{\partial}{\partial x_k}   \log f_i\right) \right)_{j,k}. \] 
    is positive definite on $\mathbb{R}^n_+$.
    The positivity follows from Lemma \ref{lem:convexity}.
\end{proof}

Note that Lemma \ref{lem:moment_map} implies that, under our assumptions, ${\rm Crit}(\log L) \cap \mathbb{R}^n_+ = \mu^{-1}(\nu)$ consists of a single point $\{a \}$. This is the first claim in Theorem \ref{thm:main2.3}. While Theorem \ref{thm:main2.3} uses the fiber $\mu^{-1}(\nu)$, Theorem \ref{thm:main2.2} sums over the fiber $\mu_{\mathbb{C}}^{-1}(\nu)$ of the complexified map $\mu_{\mathbb{C}}$.

\begin{proof}[Proof of Theorem \ref{thm:main2.3}]
We have established that ${\rm Crit}(\log L)\cap\mathbb{R}^n_+ = \{a\}$ is a singleton (Lemma \ref{lem:moment_map}). Let $U$ be a small open neighborhood of the positive critical point $a$.
The idea of the proof is to decompose ${\cal I}(\delta)$ into two parts:
\[
\delta^{\frac{n}{2}}L(a)^{-\frac{1}{\delta}}{\cal I}(\delta) \, = \, \delta^{-\frac{n}{2}}L(a)^{-\frac{1}{\delta}}\left( \int_{\mathbb{R}^n_+\setminus U} L(x)^{\frac{1}{\delta}}\frac{dx}{x} +\int_U L(x)^{\frac{1}{\delta}}\frac{dx}{x} \right) .
\]
We will show that the integral over $\mathbb{R}^n_+ \setminus U$ does not contribute to the limit $\delta\to 0^+$, and the integral over $U$ gives rise to a Gaussian integral. 
By Lemma \ref{lem:convexity}, $\log L$ attains its unique global maximum at $x=a$ and $\log L(e^{z_1},\dots,e^{z_n})$ is a concave function of $(z_1,\dots,z_n)\in\mathbb{R}^n$ (see Figure \ref{fig:likelihoodmax}). 
There exists a positive number $\varepsilon$ such that the inequality $\log L(x)-\log L(a)\leq -\varepsilon$ is true for $\mathbb{R}_+^n\setminus U$.
We obtain the following inequality for $0<\delta<1$:
\begin{align}
    L(a)^{-\frac{1}{\delta}}\int_{\mathbb{R}^n_+\setminus U}{L(x)}^{\frac{1}{\delta}}\frac{{\rm d}x}{x}&=e^{-\frac{\varepsilon}{\delta}}\int_{\mathbb{R}^n_+\setminus U}{\rm exp} \left (\delta^{-1}(\log L(x)-\log L(a)+{\varepsilon}) \right) \, \frac{{\rm d}x}{x}\\
    &\leq e^{-\frac{\varepsilon}{\delta}}\int_{\mathbb{R}^n_+\setminus U}{\rm exp} \left (\log L(x)-\log L(a)+{\varepsilon} \right) \, \frac{{\rm d} x}{x}. \label{eq:38}
\end{align}
Here, we used the fact that $\log L(x)-\log L(a)+\varepsilon$ is negative for any $x\in \mathbb{R}^n_+ \setminus U$, and the final integral in \eqref{eq:38} is bounded because of Theorem \ref{thm:convergence}.
The inequality \eqref{eq:38} shows that the integral over $\mathbb{R}^n_+\setminus U$ converges to zero as $\delta\to0^+$.

Let $P$ be an orthogonal matrix which diagonalizes the Hessian matrix of $\log L$: 
\[ P^\top \cdot \left(\frac{\partial^2\log L}{\partial x_j\partial x_k}(a)\right)_{j,k} \cdot P \, = \,  D.\]
Here $D$ is an $n\times n$ diagonal matrix, with negative diagonal entries. 
We perform a linear change of coordinates $y=P^T(x-a)$.
The Taylor expansion of $\log L(x)-\log L(a)$ around $y = 0$ looks like $\frac{1}{2}(y^TDy+r(y))$. Plugging this into our integral gives
\begin{equation*}
    L(a)^{-\frac{1}{\delta}}\int_{U}{L(x)}^{\frac{1}{\delta}}\frac{dx}{x} \, = \, \int_{P^T (U-a)}\exp\left( \frac{y^TDy+r(y)}{2\delta}\right)\frac{dy}{\prod_i(Py+a)_i}.
\end{equation*}
The last denominator is the product of the entries of $Py + a$. 
Without loss of generality, we may assume that $P^T(U-a)$ is a product of small intervals $(-\epsilon,\epsilon)^n$.
Replacing $y_i$ with $y_i/\sqrt{\delta}$, the last integral becomes
\[
\delta^{\frac{n}{2}}\int_{(-\epsilon/\sqrt{\delta},\epsilon/\sqrt{\delta})^n}
\exp\left(\frac{1}{2}y^TDy+\frac{r(\sqrt{\delta}y)}{\delta}\right)\frac{dy}{\prod_i(\sqrt{\delta}Py+a)_i}.
\]
The function $r(\sqrt{\delta}y)/\delta$ is bounded for $0<\delta<1$ and $y\in (-\epsilon/\sqrt{\delta},\epsilon/\sqrt{\delta})^n$ and it converges to $0$ when $\delta$ tends to $0$.
Therefore, Lebesgue's dominance convergence theorem proves
\[
\lim_{\delta\to0^+}\delta^{-\frac{n}{2}}L(a)^{-\frac{1}{\delta}}\int_{U}{L(x)}^{\frac{1}{\delta}}\frac{dx}{x}
\, = \, 
\int_{\mathbb{R}^n}e^{\frac{1}{2}y^TDy}\frac{dy}{a_1\cdots a_n}.
\]
This leaves us with a Gaussian integral. To finish the proof, recall that 
\[ \int_{-\infty}^\infty e^{\frac{\lambda}{2} \, z^2} \, {\rm d} z \, = \, \sqrt{\frac{2 \pi}{-\lambda}}, \quad \text{ for } \lambda < 0,\]
and use the fact that $H_{-\log L}(a) = (a_1 \cdots a_n)^2 \cdot \prod_i (-D_{ii})$. 
\end{proof}

\begin{figure}[H]
    \centering
    \includegraphics[height = 6cm]{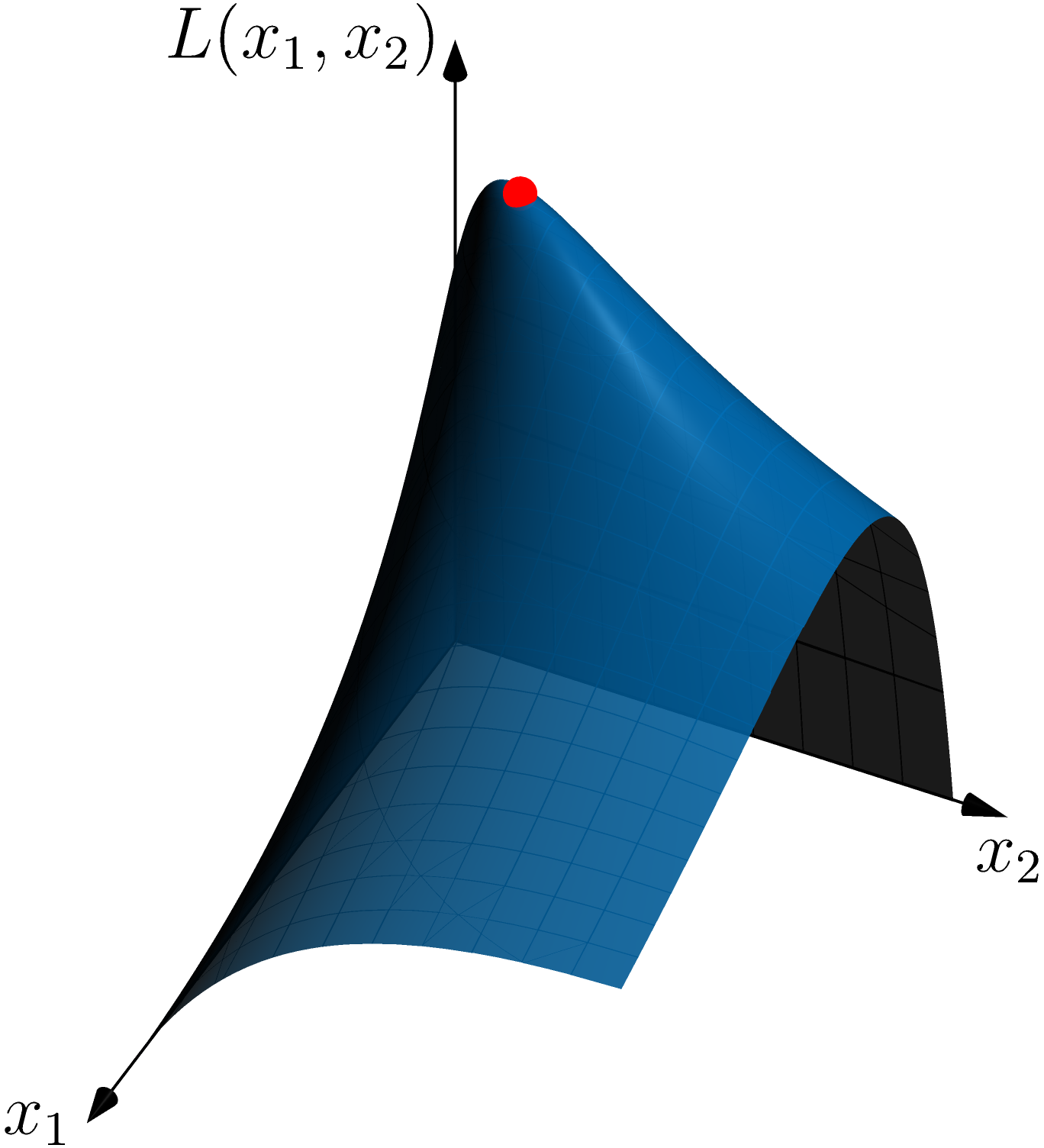}
    \caption{The likelihood function of Example \ref{ex:2.4} attains its maximum at $(x_1,x_2) \approx(0.618,1)$.}
    \label{fig:likelihoodmax}
\end{figure}

\begin{example}
    Let us verify the formula \eqref{eq:high_energy_limit} for the integral representation \eqref{eq:positive_Beta} of the beta function.
    The integral \eqref{eq:positive_Beta} is expressed by Gamma functions as in \eqref{eq:betafunc}:
    \[
    {\cal I}(\delta) \, = \, B\left(\frac{\Tilde{s}-\nu}{\delta},\frac{\nu}{\delta}\right)\, =\, \frac{\Gamma(\frac{\Tilde{s}-\nu}{\delta})\Gamma(\frac{\nu}{\delta})}{\Gamma(\frac{\Tilde{s}}{\delta})}.
    \]
    The function $L(a)^{-\frac{1}{\delta}}$, where $a$ is the unique critical point $\frac{\nu}{\Tilde{s}-\nu}$ from \eqref{ex:beta2.2}, is given by
    \[
    \left(\left(\frac{\Tilde{s}}{\Tilde{s}-\nu }\right)^{-\Tilde{s}} \left(\frac{\nu }{\Tilde{s}-\nu
   }\right)^{\nu }\right)^{-\frac{1}{\delta} }
    \, .\]
    Stirling's formula $\Gamma(x)\sim \sqrt{2\pi}e^{-x}x^{x-\frac{1}{2}}$ $(x\to+\infty)$ shows that the left-hand side of \eqref{eq:high_energy_limit} is given by $\sqrt{\frac{\tilde{s}}{\nu(\tilde{s}-\nu)}}$. We have seen in Example \ref{ex:beta2.2} that this equals $H_{-\log L}(a)^{-\frac{1}{2}}$.
\end{example}

\section{Twisted (co)homology}
\label{sec:3}
In this section, we abandon the concrete integration contour $\mathbb{R}^n_+$, and we drop Assumption \ref{assum:poscoeffs}. We fix $f = (f_1, \ldots, f_\ell) \in \mathbb{C}[x_1^{\pm 1}, \ldots, x_n^{\pm 1}]^\ell$, $s = (s_1, \ldots, s_\ell) \in \mathbb{C}^\ell$ and $\nu = (\nu_1, \ldots, \nu_n) \in \mathbb{C}^n$.
The perspective we take is that the Euler integral \eqref{eq:integralintro} is the result of a pairing between the integration contour $\Gamma$ and the differential $n$-form $\frac{{\rm d} x}{x}$. More generally, an $n$-form $\phi$ gives 
\begin{equation} \label{eq:pairing} 
\langle \Gamma, \phi \rangle \, = \, \int_\Gamma f^{-s} x^{\nu} \, \phi. 
\end{equation}
This works nicely when $\Gamma$ is a \emph{twisted $n$-cycle} and $\phi$ is a \emph{twisted $n$-cocycle}. We will introduce these concepts, and see that the pairing \eqref{eq:pairing} is a perfect pairing of finite dimensional $\mathbb{C}$-vector spaces. In particular, the integral \eqref{eq:pairing} always evaluates to a (finite) complex number.

This story is reminiscent of the classical duality between \emph{singular homology} and \emph{de Rham cohomology}, where one pairs an integration contour $\Delta$ on a complex manifold $X$ with a differential form $\phi$ by evaluating $\int_\Delta \phi$ \cite[Chapter 0]{griffiths2014principles}. In our setting, $X$ is the very affine variety seen in \eqref{eq:veryaffinevariety}. The material in this section is like that standard theory, but with a \emph{twist}. For instance, recall from the beginning of Section \ref{sec:1} that our integrand $f^{-s}x^\nu$ is multi-valued. To make sense of the integral \eqref{eq:pairing}, we need to specify a branch. We will see that this information is carried by our twisted cycle $\Gamma$. Also, a central role in de Rham's cohomology theory is played by Stokes' theorem, which says that for an $(n-1)$-form $\psi$,
\begin{equation} \label{eq:classic stokes}
    \int_\Delta {\rm d} \psi \, = \, \int_{\partial \Delta} \psi .
\end{equation}
In our setting, the \emph{twisted} boundary operator $\partial_\omega$ takes the choice of branch into account, and the \emph{twisted differential} $\nabla_\omega$ replaces the ordinary differential ${\rm d}$ to accommodate our integrals:
\begin{equation} \label{eq:stokestwisted}
\int_\Gamma f^{-s} x^\nu \, \nabla_\omega \psi \, = \, \int_{\partial_\omega \Gamma} f^{-s} x^\nu \, \psi. 
\end{equation}
The meaning of the index $\omega$ will become clear soon. For now, it simply indicates the twist.  

The theory of twisted (co)homology goes back to the seminal work of Deligne and Grothendieck \cite{deligne2006equations}. It has been investigated in the context of Euler integrals and hypergeometric functions by several authors, among which we mention Aomoto, Gelfand, Iwasaki, Kapranov, Kita, Matsumoto and Zelevinsky. See \cite{aomoto2011theory,gelfand1990generalized} and references therein. The relevance of this theory in particle physics was first realized by Mastrolia and Mizera \cite{mastrolia2019feynman}. 

The section is organized as follows. We start by discussing twisted chains and cycles, leading to a twisted version of the usual chain complex of $X$. Next, we switch to the dual complex, called the \emph{twisted de Rham complex} of $X$. We discuss properties of the (co)homology of these complexes, ultimately leading to the perfect pairing in \eqref{eq:pairing}.

\subsection{Twisted chains} \label{sec:3.1}
 
Throughout the section, $X$ is the very affine variety from \eqref{eq:veryaffinevariety}.
A \emph{singular $k$-simplex} $\Delta$ in $X$ is a continuous (not necessarily injective) map from the standard $k$-simplex to $X$. The $\mathbb{C}$-vector space generated by singular $k$-simplices is the space of \emph{singular $k$-chains}, denoted~by
\begin{equation} \label{eq:singchains}
 C_k(X) \, = \, \bigoplus_{\Delta \subset X, \text{ $k$-simplex}} \mathbb{C} \cdot \Delta.
 \end{equation}
\begin{example} \label{ex:chains}
    Examples of $1$-simplices $X = \mathbb{C} \setminus \{0,1\}$ are illustrated in Figure~\ref{fig:chains}. Here $\varepsilon \in \mathbb{R}$ lies in $(0,1/2)$. There are 5 simplices in total. Four of them are semicircles, parameterized~by
    \[\Delta_{a,\theta}(\varepsilon) \, = \, \{\, t \mapsto a + \varepsilon \cdot {\rm exp}(\sqrt{-1}(\theta + t\pi) \, \}, \quad t \in [0,1].  \]
    Here, $a$ is $0$ or $1$ and $\theta$ is $0$ or $\pi$.
    The remaining simplex is the line segment $[\varepsilon,1-\varepsilon]$, parameterized by $t \mapsto (1-t)\varepsilon + t(1-\varepsilon)$. These parameterizations fix the orientations visualized by the arrows in Figure \ref{fig:chains}.
    \begin{figure}[!t]
\centering
\begin{tikzpicture}[scale=3]
\draw[->, draw=lightgray] (-1.5,0) -- (0.5,0);
\draw[->, draw=lightgray] (-1,-0.5) -- (-1,0.5);
\draw[gray, thick, decorate,decoration=zigzag] (0,0) -- (0,-0.5);
\draw[gray, thick, decorate,decoration=zigzag] (-1,0) -- (-1,-0.5);
\filldraw[black, thick] (0,0) circle (0.02) node[anchor=south west]{\footnotesize $1$};
\filldraw[black] (-1,0) circle (0.02) node[anchor=south west]{\footnotesize $0$};
\draw[->, very thick, draw=RoyalBlue] (-0.8,0) -- (-0.2,0);
\draw[->, very thick, draw=RoyalBlue] (-0.2,-0.03) arc (-170:0:0.2);
\draw[->, very thick, draw=RoyalBlue] (0.2,0.02) arc (0:175:0.2);
\draw[->, very thick, draw=RoyalBlue] (-0.8,0.03) arc (10:175:0.2);
\draw[->, very thick, draw=RoyalBlue] (-1.2,-0.02) arc (180:355:0.2);
\node[] at (-0.51,0.15) {\footnotesize $[\varepsilon, 1-\varepsilon]$};
\node[] at (0.35,0.25) {\footnotesize $\Delta_{1,0}(-\varepsilon)$};
\node[] at (0.35,-0.25) {\footnotesize $\Delta_{1,\pi}(-\varepsilon)$};
\node[] at (-1.35,0.25) {\footnotesize $\Delta_{0,0}(\varepsilon)$};
\node[] at (-1.35,-0.25) {\footnotesize $\Delta_{0,\pi}(\varepsilon)$};
\end{tikzpicture}
\caption{\label{fig:chains}Five simplices in $\mathbb{C} \setminus \{0,1\}$.}
\end{figure}
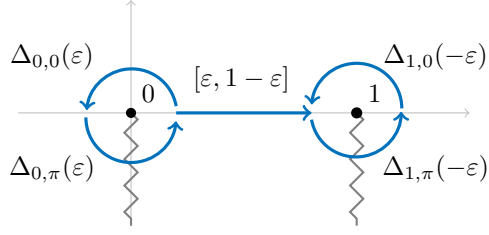
\end{example}

We need to modify this standard construction to account for multi-valuedness of $f^{-s} x^\nu$.
The branches of $f^{-s}x^\nu$ generate the space of sections of a line bundle ${\cal L}_{-\omega}$ on $X$ called a \emph{local system}. On an open subset $U \subset X$, these sections are 
\begin{equation}  \label{eq:diffLomega} 
{\cal L}_{-\omega}(U) \, = \, \{ \tau: U \rightarrow \mathbb{C} \, :\, \tau \text{ is holomorphic and }  {\rm d}\tau - {\rm dlog} (f^{-s}x^\nu) \, \tau = 0 \}.
\end{equation}
One checks that each branch $\tau$ of $f^{-s}x^\nu$ indeed satisfies the equation ${\rm d}\tau - {\rm dlog} (f^{-s}x^\nu) \, \tau = 0$. The one-form ${\rm dlog}(f^{-s}x^\nu)$ is of crucial importance in this section, so we introduce the notation $\omega = {\rm dlog}(f^{-s}x^\nu)$. We have $\omega = g_1 {\rm d}x_1 + \cdots + g_n {\rm d}x_n$, where $g_j$ are the rational functions in \eqref{eq:criteq}. The symbol ${\cal L}_{-\omega}$ stresses the term $-\omega$ in the operator applied to $\tau$ in \eqref{eq:diffLomega}. 

Picking a (linear combination of) branch(es) of $f^{-s}x^\nu$ on a singular $k$-simplex $\Delta$ means picking a section $\tau$ of ${\cal L}_{-\omega}$ on a sufficiently small open subset $U \supset \Delta$. We formalize this intuition by considering the \emph{direct limit} 
\begin{equation}\label{eq:direct_limit}
{\cal L}_{-\omega}(\Delta) \, = \, \lim_{\substack{\longrightarrow \\ U \supset \Delta}} {\cal L}_{-\omega}(U).
\end{equation} 
Here the open sets $U$ containing $\Delta$ are ordered by inclusion, and when $U \subset U'$, the map ${\cal L}_{-\omega}(U') \rightarrow {\cal L}_{-\omega}(U)$ is given by restriction. A reader who is unfamiliar with direct limits can simply think of elements in ${\cal L}_{-\omega}(\Delta)$ as branches of $f^{-s}x^\nu$, restricted to $\Delta$.

\begin{example} \label{ex:3.2}
    We continue Example \ref{ex:chains}. On each of the five simplices we define a section of ${\cal L}_{-\omega}$, where $\omega = (-s(1-x)^{-1} + \nu\,x^{-1})\,{\rm d} x$ is the logarithmic differential of $f^{-s}x^\nu = (1-x)^{-s}x^{\nu}$. Here $s$ and $\nu$ are fixed complex numbers. Notice that $f^{-s}x^\nu$ is the integrand of the beta function in \eqref{eq:betafunc}. At $x = \varepsilon$, both $(1-x)$ and $x$ are positive. Let $\zeta = {\rm exp}(-s \log(1-\varepsilon) + \nu \log \varepsilon) \in \mathbb{C}$, where we evaluate the positive branch of the logarithm. The conditions
    \begin{equation} \label{eq:IVP}
    {\rm d}\tau_{0,0} - \omega \tau_{0,0} \, = \, 0 \quad \text{ and } 
    \quad \tau_{0,0}(\varepsilon) = \zeta 
    \end{equation}
    uniquely define $\tau_{0,0} \in {\cal L}_{-\omega}(\Delta_{0,0}(\varepsilon))$. Constraints like $\tau_{0,0}(\varepsilon) = \zeta$ are called \emph{initial conditions}. On our other simplices, we choose $\tau_{a,\theta} \in {\cal L}_{-\omega}(\Delta_{a,\theta}(\varepsilon))$, $\tau_{-} \in {\cal L}_{-\omega}([\varepsilon,1-\varepsilon])$, with
    \[ \tau_{0,\pi}(-\varepsilon) = \tau_{0,0}(-\varepsilon), \,\, \tau_{-}(\varepsilon) = \tau_{0,0}(\varepsilon),\,\, \tau_{1,\pi}(1-\varepsilon) = \tau_{-}(1-\varepsilon), \,\, \tau_{1,0}(1+\varepsilon) = \tau_{1,\pi}(1+\varepsilon).\]
    To compute these boundary values, one can make use of the parameterizations in Example \ref{ex:chains}. For instance, $\tau_{0,0}(\varepsilon \exp(\sqrt{-1} t\pi )$ is given by $(1-\varepsilon \exp(\sqrt{-1} \pi t))^{-s} \varepsilon^\nu \exp(\sqrt{-1} \nu t \pi)$, for $t \in [0,1]$. At $t = 1$, this gives $\tau_{0,0}(-\varepsilon) = (1+\varepsilon)^{-s}\varepsilon^\nu \exp(\sqrt{-1}\pi \nu)$. A similar computation for $\tau_{0,\pi}$ shows that $\tau_{0,\pi}(\varepsilon) = {\rm exp}(\sqrt{-1}\nu 2\pi) \, \zeta$. In particular, $\tau_{0,0}(\varepsilon) \neq \tau_{0,\pi}(\varepsilon)$, for non-integer $\nu$.

    We remark that having nontrivial sections of ${\cal L}_{-\omega}$ is the reason why we split up the circle $S_0(\varepsilon) = \{ t \mapsto \varepsilon \exp(\sqrt{-1}t 2\pi)\}, t \in [0,1]$ into two semicircles $\Delta_{0,0}(\varepsilon)$ and $\Delta_{0,\pi}(\varepsilon)$. Indeed, because of the nontrivial monodromy around $x=0$, there are no nonzero holomorphic solutions of ${\rm d}\tau - \omega \tau = 0$ on $S_0(\varepsilon)$: ${\cal L}_{-\omega}(S_0(\varepsilon)) = 0$. 
\end{example}

This gives us all ingredients to define the space of \emph{twisted $k$-chains} on $X$, with twist $\omega$: 
\begin{equation} \label{eq:twistedchains}
    C_k(X,-\omega) \, = \, \bigoplus_{\Delta \subset X, \text{ $k$-simplex}} \Delta \otimes {\cal L}_{-\omega}(\Delta). 
\end{equation}
Comparing \eqref{eq:twistedchains} with \eqref{eq:singchains} motivates why this  construction is sometimes called the space of $k$-chains \emph{with coefficients in ${\cal L}_{-\omega}$}. In words, $C_k(X,-\omega)$ consists of finite $\mathbb{C}$-linear combinations of elements of the form $\Delta \otimes \tau$, where $\tau : \Delta \rightarrow \mathbb{C}$ is an element of ${\cal L}_{-\omega}(\Delta)$. We say that $\Delta$ is \emph{loaded} with the branch $\tau$. We note that $C_k(X,-\omega)$ is non-zero for any $k\geq 0$.

Let us now clarify the meaning of \eqref{eq:pairing} for a twisted $k$-chain $\Gamma \in C_k(X,-\omega)$. 
\begin{definition} \label{def:integral}
    Let $\Gamma =\sum_p d_p \, \Delta_p \otimes \tau_p \in C_k(X,-\omega)$ be a twisted $k$-chain on $X$, with $d_p \in \mathbb{C}$. Let $\phi = g(x) \, {\rm d} x$ be a holomorphic $k$-form on $X$. We define
\begin{equation} \label{eq:meaningofintegral}
\langle \Gamma, \phi \rangle \, = \, \int_{\Gamma} f^{-s} x^\nu \, \phi \, = \, \sum_p d_p \, \int_{\Delta_p \otimes \tau_p} f^{-s} x^\nu \, \phi \, = \, \sum_p d_p \, \int_{\Delta_p} \tau_p(x) \, \phi \, . 
\end{equation}
The integrals on the right are the familiar integrals of single valued $k$-forms on $k$-simplices.
\end{definition}
We now explain how to take boundaries in this twisted setting. A $k$-simplex $\Delta$ on $X$ has boundary $\partial \Delta = \partial \Delta_0 + \cdots + \partial \Delta_{k} \in C_{k-1}(X)$. That is, if $\Delta$ is given by the parameterization $\varphi: \Delta \rightarrow X$ with $\Delta$ the standard $k$-simplex in $\mathbb{R}^n$, then $\partial \Delta_{i}$ is the $(k-1)$-simplex in $X$ coming from the restriction of $\varphi$ to the $i$-th boundary component of $\Delta$. Let $\Gamma = \Delta \otimes \tau \in C_k(X,-\omega)$ be a $k$-simplex $\Delta$ on $X$, loaded with $\tau$. Observe that there is a natural restriction map $\rho_{\Delta,\Delta'}: {\cal L}_{-\omega}(\Delta) \rightarrow {\cal L}_{-\omega}(\Delta')$ whenever $\Delta' \subset \Delta$. The \emph{twisted boundary} $\partial_\omega(\Gamma)$ of $\Gamma$ is 
\[ \partial_\omega(\Gamma) \, = \,  \partial \Delta_0 \otimes \rho_{\Delta,\partial \Delta_0}(\tau) + \cdots + \partial \Delta_{k} \otimes \rho_{\Delta,\partial \Delta_{k}}(\tau). \]
Extending this $\mathbb{C}$-linearly gives the \emph{twisted boundary operator}
\begin{equation}
\partial_{\omega}: \quad C_k(X, -\omega) \, \longmapsto \,  C_{k-1}(X, -\omega) .
\end{equation}
Morally, this boundary operator simply keeps track of the branch of the twisted chain. 

\begin{example}\label{ex:3.7}
Consider again the five simplices illustrated in Figure~\ref{fig:chains}, loaded with the branches specified in Example \ref{ex:3.2}. The twisted boundaries are
\begin{align} \label{eq:boundaries}
    \begin{split}
        \partial_\omega(\Delta_{0,0}(\varepsilon) \otimes \tau_{0,0}) &\, = \, \{-\varepsilon\} \otimes \tau_{0,0}(-\varepsilon) - \{ \varepsilon \} \otimes \tau_{0,0}(\varepsilon), \\
        \partial_\omega(\Delta_{0,\pi}(\varepsilon) \otimes \tau_{0,\pi}) &\, =\, 
        \{\varepsilon\} \otimes \tau_{0,\pi}(\varepsilon) - \{ -\varepsilon \} \otimes \tau_{0,\pi}(-\varepsilon), \\
        \partial_\omega([\varepsilon,1-\varepsilon] \otimes \tau_-) &\, =\, \{1-\varepsilon\} \otimes \tau_-(1-\varepsilon) - \{\varepsilon\} \otimes \tau_-(\varepsilon), \\
        \partial_\omega(\Delta_{1,\pi}(\varepsilon) \otimes \tau_{1,\pi}) & \, = \, \{1+\varepsilon\} \otimes \tau_{1,\pi}(1+\varepsilon) - \{1-\varepsilon\} \otimes \tau_{1,\pi}(1-\varepsilon), \\ 
        \partial_\omega(\Delta_{1,0}(\varepsilon) \otimes \tau_{1,0}) & \, = \, \{1-\varepsilon\} \otimes \tau_{1,0}(1-\varepsilon) - \{1+\varepsilon\} \otimes \tau_{1,0}(1+\varepsilon). \\ 
    \end{split}
\end{align}
The orientation of the boundary components of a 1-simplex is like in standard singular homology: end point minus starting point. The restrictions of our sections to these boundary points are simply given by their value at the point. It is instructive to reduce the number of parameters in \eqref{eq:boundaries} by using our definitions and findings from Example \ref{ex:3.2}. For instance, we have $\tau_{0,\pi}(-\varepsilon) = \tau_{0,0}(-\varepsilon)$, $\tau_{-}(\varepsilon) = \tau_{0,0}(\varepsilon)$, $\tau_{0,\pi}(\varepsilon) = \exp(\sqrt{-1}\nu 2\pi) \, \tau_{0,0}(\varepsilon)$, and so on.
\end{example}

Notice that, for a twisted $k$-chain $\Delta \otimes \tau$, we have $\partial_\omega \partial_\omega (\Delta \otimes \tau) = 0$. This follows from the fact that the boundary of a boundary is empty, i.e., $\partial \partial \Delta = 0$, and the fact that $\partial_\omega$ simply restricts $\tau$ to $\partial \Delta$. We suggest that the reader checks this carefully for a two-dimensional simplex in $X = \mathbb{C} \setminus \{0,1\}$ from our running example. In homological algebra, $\partial_\omega \partial_\omega = 0$ is the key property of a boundary operator in a \emph{chain complex}. 
\begin{definition} \label{def:twistedchain}
Let $X$ be the very affine variety from \eqref{eq:veryaffinevariety}. Let $\omega = {\rm dlog}(f^{-s} x^\nu)$, and let $C_k(X,-\omega)$ be the space \eqref{eq:twistedchains} of twisted $k$-chains on $X$. The \emph{twisted chain complex} is
\begin{equation} \label{eq:chaincomplex}
(C_\bullet(X, -\omega), \partial_{\omega}): \; \cdots \xrightarrow[]{\phantom{\partial_\omega}} C_{k}(X, -\omega) \xrightarrow[]{\partial_\omega} C_{k-1}(X, -\omega) \xrightarrow[]{\partial_\omega} \cdots \xrightarrow[]{\partial_\omega} C_0(X,-\omega) \xrightarrow[]{\phantom{\partial_\omega}} 0 \, .
\end{equation}
\end{definition}
The \emph{homology} of this complex is obtained by considering all twisted chains whose twisted boundary is zero, modulo those that are twisted boundaries themselves.
\begin{definition} 
	The \emph{$k$-th homology vector space} of $(C_\bullet(X, -\omega), \partial_{\omega})$ is the quotient space
	\begin{equation}
	H_k(X, -\omega) \,= \, \frac{\{ \Gamma \in C_k(X, -\omega) \, :\, \partial_\omega (\Gamma) = 0 \} }{ \partial_\omega C_{k+1}(X, -\omega)}\, .
	\end{equation}
\end{definition}
Elements of $H_k(X,-\omega)$ are called \emph{twisted $k$-cycles} (or sometimes \emph{loaded $k$-cycles}).
We will primarily be interested in the $n$-th homology space $H_n(X,-\omega)$, because these are the cycles on which we can integrate $n$-forms. While it is easy to construct cycles in the usual (non-twisted) singular homology, this is a bit more complicated in our twisted setting. The running example of this section illustrates a standard construction \cite[Section 3.2.4]{aomoto2011theory}.

\begin{example}\label{ex:3.10}
None of the five twisted chains in Example~\ref{ex:3.7} are twisted cycles, since they have non-zero twisted boundaries. However, we can use the expressions \eqref{eq:boundaries} to find a linear combination of these chains whose twisted boundary is zero. For ease of notation, let us write $\Gamma_{a,\theta} = \Delta_{a,\theta}(\varepsilon) \otimes \tau_{a,\theta} \in C_1(X,-\omega)$, and $\Gamma_{-} = [\varepsilon,1-\varepsilon] \otimes \tau_- \in C_1(X,-\omega)$. Consider
\begin{equation} \label{eq:twisted-cycle} \Gamma \, = \, \frac{\Gamma_{0,0}+\Gamma_{0,\pi}}{\exp(2\pi\sqrt{-1}\nu ) - 1} \, + \, \Gamma_- \, - \, \frac{\Gamma_{1,\pi} + \Gamma_{1,0}}{\exp(-2 \pi\sqrt{-1}s ) - 1} \, \, \in \, C_1(X,-\omega).
\end{equation}
Note that \eqref{eq:twisted-cycle} only makes sense when $\nu$ and $s$ are non-integer. This \emph{genericity} assumption will appear in our theorems below. One checks that $\partial_\omega(\Gamma) = 0$ by expanding it using \eqref{eq:boundaries}, and applying identities like at the end of Example \ref{ex:3.7}. The class $[\Gamma] \in H_1(X,-\omega)$ of $\Gamma$ is non-zero. We will show this in Example \ref{ex:gammanonzero}. Hence, $\Gamma$ is not a boundary of a $2$-chain.
\end{example}

\subsection{Twisted cochains} \label{sec:3.2}

While chains tell us \emph{where to integrate}, co-chains tell us \emph{what to integrate}. This is to be taken with a grain of salt in our twisted setting. We have seen above that twisted chains also carry some information about the integrand: they specify a branch of $f^{-s}x^\nu$. In \eqref{eq:pairing}, the multivalued function $f^{-s}x^\nu$ is multiplied with an $n$-form $\phi = g \, {\rm d}x$, where $g$ is a single-valued function on $X$. This section explains which $n$-forms $\phi$ we consider. It constructs a(n algebraic) \emph{twisted de Rham complex}, dual to the twisted chain complex in Definition \ref{def:twistedchain}. 

The vector spaces $\Omega^k(X)$ of the twisted de Rham complex are rather easy to describe. 
They are the \emph{regular} $k$-forms on $X$, which have coefficients in the coordinate ring of $X$:
\begin{equation} \label{eq:regularforms}
\Omega^k(X) = \left\{  \sum_{1 \leq j_1 \leq \cdots \leq j_k \leq n} g_{j_1,\ldots, j_k} \mathrm{d} x_{j_1} \wedge \cdots \wedge \mathrm{d} x_{j_k} \, :\, g_{j_1,\ldots,j_k} \in \sum_{\substack{a \in \mathbb{Z}^{\ell} \\ b \in \mathbb{Z}^n}} \mathbb{C} \cdot f^a x^b \right\}\, .
\end{equation}
Notice that $\omega = {\rm d log}(f^{-s}x^\nu) \in \Omega^1(X)$. We will integrate regular $n$-forms $\phi \in \Omega^n(X)$. In particular, setting $ \phi = \frac{{\rm d}x}{x} \in \Omega^n(X)$ in \eqref{eq:meaningofintegral} gives our integral \eqref{eq:integralintro}. In analogy with the usual de Rham complex, we want to regard regular $k$-forms $\phi$ modulo those that integrate to zero in \eqref{eq:meaningofintegral} on a twisted cycle $\Gamma = \Delta \otimes \tau$. A first step towards formalizing this is the following important observation. 
\begin{lemma} \label{lem:modnablaomega}
    For any $\psi \in \Omega^{k-1}(X)$ and any twisted $k$-chain $\Gamma \in C_k(X,-\omega)$, we have
    \begin{equation}\label{eq:total-differential} 
\int_\Gamma f^{-s} x^{\nu}  (\mathrm{d} + \omega\wedge ) \psi \, = \, \int_\Gamma \mathrm{d} \left( f^{-s} x^{\nu} \psi \right) \, = \, \int_{\partial_\omega(\Gamma)} f^{-s} x^{\nu} \psi.
\end{equation}
\end{lemma}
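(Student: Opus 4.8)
The plan is to verify the two equalities in \eqref{eq:total-differential} separately, moving from left to right. The first equality is a pointwise identity of differential forms on $X$, while the second is an application of the classical Stokes theorem \eqref{eq:classic stokes}.

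For the first equality, I would compute the ordinary differential of the single-valued $k$-form $f^{-s}x^{\nu}\psi$ on a small open set where a branch of $f^{-s}x^{\nu}$ has been fixed (which is exactly the data carried by the loading $\tau$ on each simplex). By the Leibniz rule for $\mathrm{d}$, one gets
\[
\mathrm{d}\!\left(f^{-s}x^{\nu}\psi\right) \, = \, \mathrm{d}\!\left(f^{-s}x^{\nu}\right)\wedge\psi \, + \, f^{-s}x^{\nu}\,\mathrm{d}\psi.
\]
Now $\mathrm{d}(f^{-s}x^{\nu}) = f^{-s}x^{\nu}\cdot\mathrm{dlog}(f^{-s}x^{\nu}) = f^{-s}x^{\nu}\cdot\omega$, by definition of $\omega$ and the logarithmic-derivative identity; here it is worth noting that although $f^{-s}x^{\nu}$ is multivalued, its logarithmic derivative $\omega = -\sum_i s_i\,\mathrm{dlog}f_i + \sum_j\nu_j\,\mathrm{dlog}x_j$ is a single-valued rational one-form on $X$, so the identity makes sense globally. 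Substituting gives $\mathrm{d}(f^{-s}x^{\nu}\psi) = f^{-s}x^{\nu}(\omega\wedge\psi + \mathrm{d}\psi) = f^{-s}x^{\nu}(\mathrm{d}+\omega\wedge)\psi$, which is the integrand on the far left. Integrating over $\Gamma$ yields the first equality.

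For the second equality, I would reduce to the case $\Gamma = \Delta\otimes\tau$ of a single loaded simplex by $\mathbb{C}$-linearity, using Definition \ref{def:integral}. On $\Delta$ the loaded form $\tau(x)\,\psi$ is an honest single-valued $(k-1)$-form (via a representative on an open neighborhood of $\Delta$), so the classical Stokes theorem \eqref{eq:classic stokes} gives $\int_\Delta \mathrm{d}(\tau\,\psi) = \int_{\partial\Delta}\tau\,\psi$. The right-hand side is a sum over the boundary faces $\partial\Delta_i$ of $\int_{\partial\Delta_i}\rho_{\Delta,\partial\Delta_i}(\tau)\,\psi$, since restricting the single-valued function $\tau$ to a face is exactly the restriction map $\rho$ built into the definition of $\partial_\omega$. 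By the definition of the twisted boundary $\partial_\omega(\Delta\otimes\tau) = \sum_i \partial\Delta_i\otimes\rho_{\Delta,\partial\Delta_i}(\tau)$ and of the pairing in Definition \ref{def:integral}, this sum is precisely $\int_{\partial_\omega(\Gamma)} f^{-s}x^{\nu}\psi$, completing the proof.

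The only genuinely delicate point — more a matter of careful bookkeeping than of real difficulty — is keeping the branch of $f^{-s}x^{\nu}$ consistent across the simplex and its faces: one must use that the loading $\tau$ is a section of ${\cal L}_{-\omega}$ on a neighborhood of all of $\Delta$, so that $\mathrm{d}\tau = \omega\,\tau$ holds on that neighborhood and the restrictions to boundary faces agree with the maps $\rho_{\Delta,\partial\Delta_i}$ appearing in $\partial_\omega$. Once this is in place, both equalities are routine.
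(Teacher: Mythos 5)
Your argument is correct and follows essentially the same route as the paper's proof: expand $\mathrm{d}(f^{-s}x^{\nu}\psi)$ by the Leibniz rule using $\mathrm{d}(f^{-s}x^{\nu}) = f^{-s}x^{\nu}\,\omega$, then reduce to a single loaded simplex and apply the classical Stokes theorem \eqref{eq:classic stokes}, matching the result with $\partial_\omega$ via Definition \ref{def:integral}. Your extra care about branch consistency and the equation $\mathrm{d}\tau = \omega\,\tau$ on a neighborhood of $\Delta$ is a welcome, but not essentially different, elaboration.
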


\begin{proof}
    The first equality is checked by expanding ${\rm d}(f^{-s} x^{\nu} \psi) = {\rm d}(f^{-s}x^\nu) \, \psi + f^{-s}x^\nu \, {\rm d}\psi$. The second identity is Stokes' theorem \eqref{eq:classic stokes}. More precisely, if $\Gamma = \Delta \otimes \tau$ is a simplex loaded with $\tau$, the integral is $\int_{\Delta} {\rm d}(\tau(x) \psi) = \int_{\partial \Delta} \tau(x) \psi$, which agrees with \eqref{eq:total-differential} via Definition \ref{def:integral}.
\end{proof}
Equation \eqref{eq:total-differential} will be our twisted version of Stokes' theorem \eqref{eq:stokestwisted}, where the \emph{twisted differential} $\nabla_\omega$ is given by ${\rm d} + \omega \wedge$. That is, for any $0 \leq k \leq n$ we define 
\begin{equation}
\nabla_\omega : \Omega^k(X) \to \Omega^{k+1}(X)\, \quad \text{with} \quad \nabla_\omega(\phi) = \mathrm{d}\phi + \omega\wedge \phi.
\end{equation}
A regular $k$-form $\phi$ is \emph{closed} if its twisted differential is zero, i.e., $\nabla_\omega(\phi) = 0$. In particular, all $n$-forms are closed, since $\Omega^{n+1}(X) = 0$. A regular $k$-form $\phi$ is called \emph{exact} if it is the twisted differential of some $(k-1)$-form: $\phi = \nabla_\omega(\psi)$. 
Here is a consequence of Lemma \ref{lem:modnablaomega}.
\begin{lemma} \label{lem:modeverything}
    Let $\Gamma \in C_k(X,-\omega)$ be a twisted cycle and let $\phi \in \Omega^k(X)$ be a closed $k$-form, i.e., $\partial_\omega (\Gamma) = 0$ and $\nabla_\omega(\phi) = 0$. If $\Gamma$ is a twisted boundary or $\phi$ is exact, i.e., $\Gamma = \partial_\omega(\Gamma')$ for some $\Gamma' \in C_{k+1}(X,-\omega)$ or $\phi = \nabla_\omega(\psi)$ for some $\psi \in \Omega^{k-1}(X)$, we have $\int_\Gamma f^{-s}x^\nu \phi = 0$.
\end{lemma}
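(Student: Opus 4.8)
The plan is to read off both vanishing statements directly from the twisted Stokes formula \eqref{eq:total-differential} of Lemma \ref{lem:modnablaomega}, handling the two cases separately, since each case is exactly the situation in which one of the two sides of that identity is forced to vanish.

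First suppose $\phi = \nabla_\omega(\psi)$ is exact, with $\psi \in \Omega^{k-1}(X)$, and that $\Gamma$ is a twisted cycle. Applying \eqref{eq:total-differential} to this $\psi$ and this $\Gamma$ gives $\int_\Gamma f^{-s}x^\nu \nabla_\omega(\psi) = \int_{\partial_\omega(\Gamma)} f^{-s}x^\nu \psi$. Since $\partial_\omega(\Gamma) = 0$ in $C_{k-1}(X,-\omega)$, the right-hand side is an integral over the zero chain, which is $0$ by the $\mathbb{C}$-linearity of the pairing built into Definition \ref{def:integral}. Hence $\int_\Gamma f^{-s}x^\nu \phi = 0$. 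Second, suppose $\Gamma = \partial_\omega(\Gamma')$ for some $\Gamma' \in C_{k+1}(X,-\omega)$, and that $\phi \in \Omega^k(X)$ is closed. Now apply \eqref{eq:total-differential} with the $(k-1)$-form there replaced by the $k$-form $\phi$ and the $k$-chain replaced by the $(k+1)$-chain $\Gamma'$ — legitimate because Lemma \ref{lem:modnablaomega} is stated in every degree $0 \le k \le n$. This yields $\int_{\Gamma'} f^{-s}x^\nu \nabla_\omega(\phi) = \int_{\partial_\omega(\Gamma')} f^{-s}x^\nu \phi$. The left-hand side vanishes because $\nabla_\omega(\phi) = 0$, and $\partial_\omega(\Gamma') = \Gamma$, so $\int_\Gamma f^{-s}x^\nu \phi = 0$, as claimed.

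There is essentially no obstacle here: all the content sits in Lemma \ref{lem:modnablaomega}, and the present lemma is just the observation that the two sides of \eqref{eq:total-differential} vanish in complementary situations. The only points worth checking are bookkeeping: that the integral over the zero chain (or over a $\mathbb{C}$-linear combination that reduces to it) is $0$, which is immediate from the bilinearity of $\langle\,\cdot\,,\cdot\,\rangle$ in \eqref{eq:meaningofintegral}; and that the restriction maps $\rho_{\Delta,\Delta'}$ on $\mathcal{L}_{-\omega}$ defining $\partial_\omega$ are compatible with the single-valued integrands $f^{-s}x^\nu\phi$, which is how Lemma \ref{lem:modnablaomega} was proved in the first place. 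This lemma is precisely what will license passing to the quotients by exact forms and by twisted boundaries, so that the pairing \eqref{eq:pairing} descends to a well-defined pairing of the homology space $H_n(X,-\omega)$ with the (algebraic) twisted de Rham cohomology of $X$.
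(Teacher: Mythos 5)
Your proof is correct and matches the paper's intent: the lemma is presented there as an immediate consequence of Lemma \ref{lem:modnablaomega}, and your two cases (using $\partial_\omega(\Gamma)=0$ when $\phi=\nabla_\omega(\psi)$, and applying the twisted Stokes identity in degree $k+1$ to $\Gamma'$ when $\Gamma=\partial_\omega(\Gamma')$ and $\nabla_\omega(\phi)=0$) are exactly the intended argument. Nothing is missing.
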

Every exact $k$-form is closed. Indeed, using $\mathrm{d} \mathrm{d} \psi = \mathrm{d} \omega = \omega\wedge \omega = 0$, we find that
\begin{equation*}
\nabla_\omega \nabla_\omega \psi \, = \,  \mathrm{d} \mathrm{d} \psi + \mathrm{d} \omega \wedge \psi - \omega \wedge \mathrm{d} \psi + \omega \wedge \mathrm{d} \psi + \omega \wedge \omega \wedge \psi \, = \, 0, \quad \text{ for any $\psi \in \Omega^{k-1}(X)$}.
\end{equation*}
Here ${\rm d}\omega =0$ because $\omega = {\rm dlog}(f^{-s} x^\nu)$. The property $\nabla_\omega \nabla_\omega = 0$ means that $\nabla_\omega$ defines a \emph{flat connection} on $X$.  Lemma \ref{lem:modeverything} says that an exact $k$-form $\phi$ satisfies $\langle \Gamma,\phi \rangle = 0$, for any $\Gamma \in H_k(X,-\omega)$. It is therefore natural to regard $k$-forms modulo the exact $k$-forms $\nabla_\omega(\Omega^{k-1}(X))$. This amounts to considering the cohomology of the following cochain complex. 
\begin{definition} \label{def:twisteddeRham}
Let $X$ be the very affine variety from \eqref{eq:veryaffinevariety}. Let $\omega = {\rm dlog}(f^{-s} x^\nu)$, and let $\Omega^k(X)$ be the space \eqref{eq:regularforms} of regular $k$-forms. The \emph{(algebraic) twisted de Rham complex} is
\begin{equation} \label{eq:deRham}
(\Omega^\bullet(X), \nabla_\omega): \; 0 \xrightarrow[]{\phantom{\nabla_\omega}} \Omega^0(X) \xrightarrow[]{\nabla_\omega} \Omega^1(X) \xrightarrow[]{\nabla_\omega} \cdots \xrightarrow[]{\nabla_\omega} \Omega^{n}(X) \xrightarrow[]{\phantom{\nabla_\omega}} 0.
\end{equation}
\end{definition}
This complex is also called the \emph{twisted cochain complex}, to emphasize its duality with \eqref{eq:chaincomplex} (see below). As said above, since exact forms integrate to zero, we pass to cohomology.

\begin{definition}
The \emph{$k$-th twisted cohomology vector space} of \eqref{eq:deRham} is the quotient space
\begin{equation*}
H^k(X, \omega) \, = \,  \frac{\{ \phi \in \Omega^k(X) \, : \,  \nabla_\omega (\phi) = 0 \} }{ \nabla_\omega \Omega^{k-1}(X)} \, = \, \frac{\text{closed $k$-forms}}{\text{exact $k$-forms}} \, .
\end{equation*}
\end{definition}
We regard $\phi$ in \eqref{eq:pairing} as a \emph{twisted cocycle}, i.e., an element of the $n$-th twisted cohomology
\begin{equation}
H^n(X, \omega) \, = \,  \frac{\Omega^n (X)}{\nabla_\omega \Omega^{n-1}(X)}\,.
\end{equation}

The twisted de Rham complex in Definition \ref{def:twisteddeRham} is called \emph{algebraic} because we work with the regular $k$-forms $\Omega^k(X)$ in the sense of algebraic geometry. One can build an analogous complex using \emph{holomorphic} $k$-forms, for which the coefficients $g_{j_1,\ldots, j_k}$ in \eqref{eq:regularforms} can be any holomorphic functions on $X$. By the Grothendieck-Deligne comparison theorem \cite[Corollaire 6.3]{deligne2006equations}, the cohomology of this holomorphic twisted de Rham complex is isomorphic to that of \eqref{eq:deRham}. Since our cocycles $\phi$ will be regarded as elements in this cohomology, it suffices to work with the algebraic complex \eqref{eq:deRham}. This is also the preferred setting for doing computations, because the regular $k$-forms \eqref{eq:regularforms} have a very concrete description. 

Here is an example of how to compute relations in twisted cohomology.
\begin{example}\label{eq:3.5}
Consider again the Euler beta integral \eqref{eq:betafunc}. The twisted differential is
\begin{equation} \label{eq:nablabeta}
\nabla_\omega = \mathrm{d} + \left( \frac{s}{1 - x} + \frac{\nu}{x}  \right) \mathrm{d} x \wedge\, .
\end{equation}
Applying this to $1 \in \Omega^0(X)$, we obtain the following equality in $H^1(X,\omega)$:
\[ \left [ \frac{\mathrm{d} x}{1-x} \right ] \, = \, \left [ \frac{-\nu}{s} \frac{{\rm d}x}{x} \right ]. \]
More generally, we shall derive in Section \ref{sec:4} that for $a, b \in \mathbb{Z}$, we have the relation
\begin{equation} \label{eq:betarelation}
\left [ \frac{x^b }{(1-x)^a} \frac{{\rm d}x}{x} \right ] \, = \, \left [  \frac{(1-s)_{-a}(\nu)_b}{(1+\nu-s)_{b-a}}\frac{{\rm d}x}{x} \right ].  
\end{equation}
Here, for a complex number $\gamma$ and an integer $a$, we used the following notation:
\[
(\gamma)_a:=
\begin{cases}
    \gamma(\gamma+1)\cdots(\gamma+a-1)&(a>0)\\
    1&(a=0)\\
    (\gamma-1)^{-1}(\gamma-2)^{-1}\cdots(\gamma+a)^{-1} &(a<0)
\end{cases}
. \qedhere
\]
\end{example}
While $\Omega^1(X)$ and $\nabla_\omega(\Omega^0(X))$ are infinite-dimensional $\mathbb{C}$-vector spaces, Example \ref{eq:3.5} claims that the quotient $H^1(X,\omega)$ is one-dimensional. Indeed, each regular $1$-form can be written as a constant multiple of $[\frac{{\rm d}x}{x}]$. This holds, at least, when $s, \nu, s-\nu$ are non-integer. We will now proceed towards the underlying theorem (Theorem \ref{thm:3.4}). First, we state a vanishing result for twisted cohomology. 
\begin{theorem}[Vanishing theorem]\label{thm:3.3}
Let $X$ be the very affine variety from \eqref{eq:veryaffinevariety}. There exists a dense open subset $U \subset \mathbb{C}^{\ell + n}$ such that, for each $(s, \nu) \in U$, we have  
\begin{equation}\label{eq:vanishing-theorem}
H^k(X, \omega) = 0 \quad\text{for all}\quad k \neq n, \quad \text{with} \quad \omega = {\rm dlog}(f^{-s}x^\nu).
\end{equation}
\end{theorem}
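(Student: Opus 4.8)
The plan is to reinterpret the algebraic twisted de Rham cohomology topologically and then invoke a vanishing theorem for generic rank-one local systems. By the Grothendieck--Deligne comparison theorem invoked above together with the de Rham theorem for flat connections, $H^k(X,\omega)$ is canonically isomorphic to the singular cohomology $H^k(X,\mathcal{L})$ of $X$ with coefficients in the rank-one local system $\mathcal{L}=\mathcal{L}_{-\omega}^{\vee}$, whose local sections are the branches of $f^{s}x^{-\nu}$; recall \eqref{eq:diffLomega}. It therefore suffices to show $H^k(X,\mathcal{L})=0$ for $k\neq n$ and generic $(s,\nu)$.

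The vanishing for $k>n$ requires no genericity at all: $X$ is a smooth affine variety of complex dimension $n$, hence by the Andreotti--Frankel / Artin vanishing theorem it has the homotopy type of a CW complex of real dimension at most $n$, so $H^k(X,\mathcal{L}')=0$ for every local system $\mathcal{L}'$ and every $k>n$. The real content of the statement is the vanishing for $k<n$, which forces us to exploit non-trivial monodromy at the boundary.

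For that, I would first choose, by resolution of singularities, a smooth projective compactification $j\colon X\hookrightarrow\overline{X}$ whose boundary $D=\overline{X}\setminus X$ is a simple normal crossings divisor $\bigcup_i D_i$. The one-form $\omega$ extends to a logarithmic form on $\overline{X}$, and its residue along $D_i$ is a fixed $\mathbb{Z}$-linear form $\rho_i(s,\nu)$ in the parameters, read off from the vanishing orders of the $x_k$ and the $f_j$ along $D_i$. Whenever $\rho_i(s,\nu)\notin\mathbb{Z}$ for all $i$ --- a condition cutting out the complement of countably many hyperplanes in $\mathbb{C}^{\ell+n}$, hence a dense set --- the local monodromy of $\mathcal{L}$ around each $D_i$ is non-trivial, and a standard stalk computation on punctured polydisks shows that the natural map $j_!\mathcal{L}\to Rj_*\mathcal{L}$ is an isomorphism. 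Consequently $H^k(X,\mathcal{L})=H^k(\overline{X},Rj_*\mathcal{L})=H^k(\overline{X},j_!\mathcal{L})=H^k_c(X,\mathcal{L})$, and Poincar\'e--Verdier duality identifies this with $H^{2n-k}(X,\mathcal{L}^{\vee})^{\vee}$; for $k<n$ we have $2n-k>n$, so this vanishes by the case already treated, applied to $\mathcal{L}^{\vee}$. Since the locus $\{(s,\nu): H^k(X,\omega)=0 \text{ for all } k<n\}$ is open by upper semicontinuity of cohomology dimension, and we have just shown it is dense, it is the promised dense open set $U$.

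The step I expect to be the main obstacle is establishing $j_!\mathcal{L}\xrightarrow{\ \sim\ }Rj_*\mathcal{L}$: it needs a genuinely good compactification and careful bookkeeping of every boundary residue, which is exactly what pins down $U$. An alternative route, closer in spirit to Section \ref{sec:2}, is Morse-theoretic: for generic \emph{real} parameters the single-valued pluriharmonic function ${\rm Re}\log|f^{-s}x^{\nu}|$ on $X$ has only non-degenerate critical points (this is essentially Theorem \ref{thm:huh}), each of Morse index exactly $n$ because it is the real part of a holomorphic function; granting tameness of this function near the boundary of $X$, classical Morse theory concentrates the cohomology of $\mathcal{L}$ in degree $n$ for these parameters, and semicontinuity then propagates the vanishing to a dense open set of complex parameters.
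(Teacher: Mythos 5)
The paper itself contains no proof of Theorem \ref{thm:3.3}: it only points to \cite[Theorem A.1]{Agostini:2022cgv}, so your argument can only be compared with that reference, and it is in fact the same standard strategy. Your chain --- Deligne comparison to reduce $H^k(X,\omega)$ to the cohomology of the rank-one local system $\mathcal{L}=\mathcal{L}_{-\omega}^{\vee}$, Andreotti--Frankel/Artin vanishing in degrees $k>n$ for \emph{any} local system on the smooth affine $n$-fold $X$, a normal crossings compactification on which non-integrality of the boundary residues forces $j_!\mathcal{L}\simeq Rj_*\mathcal{L}$, and Poincar\'e--Verdier duality to convert the $k>n$ vanishing for $\mathcal{L}^{\vee}$ into $k<n$ vanishing for $\mathcal{L}$ --- is correct, and it also matches the paper's description of $U$ in Section \ref{sec:5} as the complement of countably many hyperplanes. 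One point you should make explicit: your density claim silently assumes that each residue $\rho_i(s,\nu)$ is a \emph{nonzero} integer linear form; if some $\rho_i$ vanished identically, the non-resonance condition would be unsatisfiable and your ``dense set'' empty. This is automatic, but deserves a sentence: along any boundary divisor of a compactification of $X$, at least one of $x_1,\dots,x_n,f_1,\dots,f_\ell$ must have a zero or a pole (otherwise the coordinate map would be regular near the generic point of that divisor with image in $X$, and separatedness would force a neighborhood of that point to lie inside $X$). Once this is noted, the removed hyperplanes form a locally finite countable family, so the non-resonant locus is itself open and dense and the appeal to semicontinuity is not needed (though it is also valid). The stalk computation is fine --- at a deeper stratum one only needs one non-integral residue among the divisors through the point, which your stronger hypothesis guarantees --- and the closing Morse-theoretic sketch should be treated as a heuristic only, since tameness at infinity is exactly the hard part there.
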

A description of the open subset $U \subset \mathbb{C}^{\ell + n}$ follows from the proof in \cite[Theorem~A.1]{Agostini:2022cgv}. In our running example, one can take $U = \{(s,\nu) \in \mathbb{C}^{\ell + n} \, :\, s, \nu, s-\nu \notin \mathbb{Z} \}$.

One of the consequences of this vanishing theorem is a geometric description of the dimension of $H^{n}(X,\omega)$. By \cite[Theorem 2.2]{aomoto2011theory}, the topological Euler characteristic $\chi(X)$ of the very affine variety $X$ is given by the alternating sum of cohomology dimensions:
\begin{equation}
\chi(X) \, = \,  \sum_{k=0}^{n} (-1)^{k} \dim_{\mathbb{C}} H^k(X, \omega)\, .
\end{equation}
Combined with \eqref{eq:vanishing-theorem}, this immediately gives us the following result.
\begin{theorem}\label{thm:3.4}
Let $X$ be the very affine variety from \eqref{eq:veryaffinevariety}. Fix $(s,\nu) \in U$, where $U \subset \mathbb{C}^{\ell + n}$ is as in Theorem \ref{thm:3.3}, and let $\omega = {\rm dlog}(f^{-s}x^\nu)$. We have 
$\dim_{\mathbb{C}} H^n(X,\omega) = |\chi(X)|$.
\end{theorem}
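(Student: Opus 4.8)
The plan is to obtain Theorem~\ref{thm:3.4} by combining the Vanishing Theorem~\ref{thm:3.3} with the Euler characteristic identity
\[ \chi(X) \, = \, \sum_{k=0}^{n} (-1)^{k} \dim_{\mathbb{C}} H^k(X, \omega) \]
recalled just above the statement. These are the only two ingredients needed, so the deduction itself is short; the substance is packaged inside those results.

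First I would fix $(s,\nu)$ in the dense open set $U \subset \mathbb{C}^{\ell+n}$ provided by Theorem~\ref{thm:3.3}, and set $\omega = {\rm dlog}(f^{-s}x^\nu)$. It is worth recording at the outset that the left-hand side $\chi(X)$ is purely topological: it depends only on the variety $X$, not on the twist $\omega$, hence not on the particular choice of $(s,\nu) \in U$. This is what makes the right-hand side $|\chi(X)|$ of the claimed identity meaningful.

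Next I would invoke Theorem~\ref{thm:3.3} to conclude that $H^k(X,\omega) = 0$ for all $k \neq n$. Substituting these vanishings into the alternating sum leaves a single term, so
\[ \chi(X) \, = \, (-1)^n \dim_{\mathbb{C}} H^n(X,\omega). \]
Since $\dim_{\mathbb{C}} H^n(X,\omega) \geq 0$, taking absolute values gives $\dim_{\mathbb{C}} H^n(X,\omega) = |\chi(X)|$, as desired. One reads off the bonus fact that the sign of $\chi(X)$ is $(-1)^n$ whenever $H^n(X,\omega) \neq 0$, consistent with the signed Euler characteristic $(-1)^n \cdot \chi(X)$ appearing in Theorem~\ref{thm:huh}.

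The main obstacle is not in this two-line argument but in the two facts it rests on, which the excerpt permits us to assume. The Euler characteristic identity relies on the Grothendieck--Deligne comparison theorem (so that the algebraic twisted de Rham cohomology $H^\bullet(X,\omega)$ computes the cohomology of a rank-one local system on $X$) together with the elementary observation that a rank-one local system contributes the same alternating sum of cell counts as the trivial one, hence has Euler characteristic $\chi(X)$. The Vanishing Theorem~\ref{thm:3.3} is the genuinely substantive input: it forces the twisted cohomology to be concentrated in the middle degree $n$ for generic $(s,\nu)$, and its proof (referenced to \cite{Agostini:2022cgv}) is where the real work lies.
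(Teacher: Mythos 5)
Your proposal is correct and follows exactly the paper's route: the paper also deduces Theorem \ref{thm:3.4} immediately by combining the vanishing statement of Theorem \ref{thm:3.3} with the Euler characteristic identity $\chi(X) = \sum_{k=0}^{n}(-1)^k \dim_{\mathbb{C}} H^k(X,\omega)$ (cited from Aomoto--Kita), so that only the middle-degree term survives and $\dim_{\mathbb{C}} H^n(X,\omega) = |\chi(X)|$. Your additional remarks on the sign $(-1)^n$ and on where the real work lies are consistent with the paper's discussion.
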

The Euler characteristic $\chi(X)$ also appeared in Theorem~\ref{thm:huh}: it is the number of critical points in $\mathrm{Crit}(\log L)$ (up to a sign). Hence, for generic $(s,\nu)$, we can compute the dimension of $H^n(X, \omega)$ using the homotopy continuation techniques explained in Section~\ref{sec:2.1}.
\begin{example} \label{ex:eulercharbeta}
 In the case of our running example, $X = \mathbb{C}^\ast \setminus \{0,1\}$ is topologically the Riemann sphere $S^2$ with three points removed. Using the inclusion-exclusion principle and $\chi(S^2) = 2$, $\chi(\mathrm{point}) =1$, we get $\chi(X) = -1$. Hence $\dim_{\mathbb{C}}H^1(X,\omega) = 1$. This confirms what we saw in Example \ref{eq:3.5}. A basis for $H^1(X,\omega)$ is $[\frac{\mathrm{d}x}{x}]$.
\end{example}
\begin{example}
Consider $m$-point string amplitudes, for which $X = {\cal M}_{0,m}$. The projection map ${\cal M}_{0,m} \to {\cal M}_{0,m-1}$ is obtained by dropping one of the marked points. The fiber of this map at a given configuration of $m-1$ distinct points in $\mathbb{P}^1$ is $\mathbb{P}^1$ with these $m-1$ points removed. The product property of the Euler characteristic for fibrations gives the recursion
\begin{equation}
\chi(\mathcal{M}_{0,m}) = \chi(S^2-\{m{-}1 \text{ points}\}) \cdot \chi(\mathcal{M}_{0,m-1})\, .
\end{equation}
By the same arguments as in Example~\ref{ex:eulercharbeta}, the first factor on the right-hand side is $3-m$. The endpoint of the recursion is $m=3$, for which $\chi(\mathcal{M}_{0,3}) = \chi(\text{point}) = 1$. We conclude
\begin{equation} \label{eq:eulercharM0m}
\chi(\mathcal{M}_{0,m}) = (-1)^{m-3} (m-3)!\,.
\end{equation}
We have seen this number for $m = 5$ in Section \ref{sec:2.1}, where we explained that it also counts bounded cells of hyperplane arrangements in $\mathbb{R}^{m-3}$. 
Equation \eqref{eq:eulercharM0m} implies that the dimension of $H^{m-3}(\mathcal{M}_{0,m},\omega)$ is $(m-3)!$, under the genericity assumptions of Theorem \ref{thm:3.4}. A basis consists of $(m-3)!$ regular $(m-3)$-forms. In the physics literature, it is common to use the so-called \emph{Parke-Taylor} basis, see, e.g., \cite[Definition~3.2]{Mizera:2017cqs} and \cite[Appendix~A]{Brown:2019wna}.
\end{example}

\subsection{Back to Euler integrals} \label{sec:3.3}

We have been using the shorthand notation $\langle \Gamma, \phi \rangle$ for our integrals, see Definition \ref{def:integral}. Equipped with the tools from Sections \ref{sec:3.1} and \ref{sec:3.2}, we can now formally introduce the pairing $\langle \cdot, \cdot \rangle$ as a bilinear map on homology and cohomology. This was alluded to in \eqref{eq:pairing}. 
\begin{theorem} \label{thm:perfectpairing}
Let $X$ be as in \eqref{eq:veryaffinevariety} and let $\omega = {\rm dlog}(f^{-s}x^\nu)$. For any $k$, the $\mathbb{C}$-bilinear~map
    \begin{equation} \label{eq:perpairing}
\langle \cdot, \cdot \rangle : H_k(X,-\omega) \times H^k(X,\omega) \longrightarrow \mathbb{C}, \quad ([\Gamma],[\phi]) \longmapsto \langle \Gamma, \phi \rangle = \int_\Gamma f^{-s}x^\nu \, \phi 
\end{equation}
is well-defined. Moreover, the induced maps $H_k(X, -\omega) \rightarrow H^k(X,\omega)^\vee$ and $H^k(X,\omega) \rightarrow H_k(X,-\omega)^\vee$ (see below) are isomorphisms. In other words, the pairing \eqref{eq:perpairing} is perfect. 
\end{theorem}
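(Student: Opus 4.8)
The plan is to derive Theorem~\ref{thm:perfectpairing} from the de~Rham comparison theorem for local systems together with universal coefficients over the field $\mathbb{C}$. Well-definedness of \eqref{eq:perpairing} is already contained in Lemma~\ref{lem:modeverything}: a twisted boundary, or an exact form, pairs to zero. So the content is perfectness, and the first step is to recognize the two sides as (co)homology with coefficients in dual rank-one local systems. By the very definition \eqref{eq:twistedchains}, $C_\bullet(X,-\omega)$ \emph{is} the singular chain complex of $X$ with coefficients in the local system $\mathcal{L}_{-\omega}$ of \eqref{eq:diffLomega}, so $H_k(X,-\omega)=H_k(X;\mathcal{L}_{-\omega})$. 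On the cohomology side, $\nabla_\omega$ is a flat algebraic connection on the trivial line bundle over the smooth affine variety $X$, whose sheaf of flat sections is the rank-one local system $\mathcal{L}_\omega$ spanned by branches of $f^{s}x^{-\nu}$. Since $X$ is affine, the complex \eqref{eq:deRham} of global regular forms computes the algebraic de~Rham cohomology of $(\mathcal{O}_X,\nabla_\omega)$; and because $\omega=\mathrm{dlog}(f^{-s}x^\nu)$ has at worst logarithmic poles along the boundary divisor of a normal-crossings compactification of $X$, this connection is regular singular, so the Grothendieck--Deligne comparison theorem \cite[Corollaire~6.3]{deligne2006equations} gives $H^k(X,\omega)\cong H^k(X^{\mathrm{an}};\mathcal{L}_\omega)$.

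Next I would observe that $\mathcal{L}_\omega$ and $\mathcal{L}_{-\omega}$ are dual local systems: the product of a local branch of $f^{-s}x^\nu$ with a local branch of $f^{s}x^{-\nu}$ is a nonzero constant, giving a perfect pairing $\mathcal{L}_{-\omega}\otimes\mathcal{L}_\omega\xrightarrow{\sim}\underline{\mathbb{C}}$, i.e. $\mathcal{L}_\omega\cong\mathcal{L}_{-\omega}^\vee$. Unwinding Definition~\ref{def:integral} shows that, under the identifications above, the period pairing \eqref{eq:perpairing} is exactly the canonical evaluation pairing between $H_k(X;\mathcal{L}_{-\omega})$ and $H^k(X;\mathcal{L}_{-\omega}^\vee)$. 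Now universal coefficients over the field $\mathbb{C}$ applies directly: the singular cochain complex $C^\bullet(X;\mathcal{L}_{-\omega}^\vee)$ is canonically $\mathrm{Hom}_{\mathbb{C}}\big(C_\bullet(X;\mathcal{L}_{-\omega}),\mathbb{C}\big)$, and over a field taking cohomology commutes with $\mathrm{Hom}_{\mathbb{C}}(-,\mathbb{C})$, so the evaluation pairing identifies $H^k(X,\omega)$ with $H_k(X,-\omega)^\vee$; this is already one of the two induced isomorphisms. For the other induced map $H_k(X,-\omega)\to H^k(X,\omega)^\vee$ to be an isomorphism, one needs $H_k(X,-\omega)$ to be finite-dimensional so that $V\to V^{\vee\vee}$ is invertible; this holds because $X$ is a smooth affine variety, hence its (co)homology with coefficients in any finite-rank local system is finite-dimensional (alternatively, for generic $(s,\nu)$ one may invoke Theorem~\ref{thm:3.4} and the analogous homological vanishing, but the affineness argument works for all parameters, matching the generality of the statement). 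Combining the two, both induced maps are isomorphisms and the pairing is perfect.

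The main obstacle is the analytic/geometric input hidden in the last sentence of the first paragraph: the Grothendieck--Deligne comparison, i.e. that the \emph{algebraic} twisted de~Rham complex \eqref{eq:deRham} computes the \emph{topological} cohomology $H^k(X;\mathcal{L}_\omega)$. This rests on verifying that $\nabla_\omega$ is regular singular, which in turn requires a normal-crossings (or toroidal) compactification of $X$ along whose boundary the $x_j$ and $f_i$ are log-smooth and $\mathrm{dlog}(f^{-s}x^\nu)$ has logarithmic poles; none of this is deep, but it is where the proof genuinely lives. An alternative, more hands-on route --- the one taken in \cite{aomoto2011theory} --- avoids it by working with the topological intersection pairing on $X$, building explicit representatives as in Example~\ref{ex:3.10}, and using a regularization isomorphism between locally finite and ordinary twisted homology; that approach is heavier, and crucially it would want the vanishing $H^k(X,\omega)=0$ for $k\neq n$ of Theorem~\ref{thm:3.3}, which the present statement deliberately does not assume.
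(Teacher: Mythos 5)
Your argument is correct, and its skeleton is the same as the paper's: well-definedness is delegated to Lemma~\ref{lem:modeverything}, and perfectness is reduced, via the Deligne--Grothendieck comparison theorem \cite[Corollaire 6.3]{deligne2006equations}, to a duality statement in the topological category. The only real difference is what happens at that last step: the paper simply cites \cite[Lemma 2.9(1)]{aomoto2011theory} for the perfectness of the topological pairing, whereas you prove it yourself by identifying $C_\bullet(X,-\omega)$ with singular chains in the local system $\mathcal{L}_{-\omega}$, noting $\mathcal{L}_{\omega}\cong\mathcal{L}_{-\omega}^{\vee}$, and invoking universal coefficients over the field $\mathbb{C}$ together with finite-dimensionality (a smooth affine variety has the homotopy type of a finite CW complex), which is a clean, self-contained replacement for the citation and, like the paper's proof, needs no genericity of $(s,\nu)$. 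Two small caveats: the step you describe as ``unwinding Definition~\ref{def:integral}'' --- that under the comparison isomorphisms the integration pairing becomes the canonical evaluation pairing --- is exactly the twisted de~Rham theorem and is the substance of the Aomoto--Kita lemma the paper cites, so it deserves more than a parenthetical (it is the place where Stokes-type/\v{C}ech--de Rham compatibility must actually be checked, on top of the regular-singularity verification you rightly flag); and your closing remark that the Aomoto--Kita route ``crucially'' needs the vanishing $H^k(X,\omega)=0$ for $k\neq n$ is not accurate for the duality statement itself --- genericity enters their regularization isomorphism between locally finite and ordinary twisted homology, but the pairing lemma as used here holds for all $(s,\nu)$, which is precisely why the theorem carries no genericity hypothesis.
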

\begin{proof}
    Well-definedness follows from Lemma \ref{lem:modeverything}. The pairing is perfect by \cite[Lemma 2.9(1)]{aomoto2011theory}, using the Deligne-Grothendieck comparison theorem \cite[Corollaire 6.3]{deligne2006equations}.
\end{proof}
The pairing \eqref{eq:perpairing} is called the \emph{period pairing} between twisted homology and cohomology. In the theorem, $V^\vee = {\rm Hom}_{\mathbb{C}}(V, \mathbb{C})$ denotes the dual vector space of a $\mathbb{C}$-vector space $V$. The maps $H_k(X, -\omega) \rightarrow H^k(X,\omega)^\vee$ and $H^k(X,\omega) \rightarrow H_k(X,-\omega)^\vee$ are given by 
\[[\Gamma] \mapsto ([\phi] \mapsto \langle \Gamma, \phi \rangle), \quad \text{and} \quad [\phi] \mapsto ([\Gamma] \mapsto \langle \Gamma, \phi \rangle) \]
respectively. Notice that Theorem \ref{thm:perfectpairing} makes no assumptions on $s$ and $\nu$. We spell out three important implications (Corollaries \ref{cor:samedim}, \ref{cor:dimchi} and \ref{cor:relations}).
\begin{corollary} \label{cor:samedim}
Let $X, \omega$ be as above. For any $k$, $\dim_{\mathbb{C}} H_k(X, -\omega)=\dim_{\mathbb{C}} H^k(X, \omega)$.  
\end{corollary}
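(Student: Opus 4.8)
The plan is to derive this as a purely formal consequence of Theorem~\ref{thm:perfectpairing}. The starting point is the elementary linear-algebra fact that a perfect $\mathbb{C}$-bilinear pairing $\langle\cdot,\cdot\rangle\colon V\times W\to\mathbb{C}$ induces an isomorphism $V\xrightarrow{\ \sim\ }W^\vee$, $v\mapsto\langle v,\cdot\rangle$; hence, as soon as one of $V,W$ is known to be finite-dimensional, so is the other, and $\dim_\mathbb{C} V=\dim_\mathbb{C} W^\vee=\dim_\mathbb{C} W$. Applying this with $V=H_k(X,-\omega)$, $W=H^k(X,\omega)$ and the period pairing \eqref{eq:perpairing}, whose perfectness is exactly the content of Theorem~\ref{thm:perfectpairing}, immediately yields $\dim_\mathbb{C} H_k(X,-\omega)=\dim_\mathbb{C} H^k(X,\omega)$.

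The only input beyond Theorem~\ref{thm:perfectpairing} that I would make explicit is the finite-dimensionality of (say) the cohomology side $H^k(X,\omega)$. This holds because $X$ is a smooth affine variety of dimension $n$, so the complex $\bigl(\Omega^\bullet(X),\nabla_\omega\bigr)$ computes its algebraic de Rham cohomology with coefficients in the flat connection $\nabla_\omega$, which is finite-dimensional for every $(s,\nu)$; alternatively, one may invoke the Grothendieck--Deligne comparison cited after Definition~\ref{def:twisteddeRham} together with finiteness of the corresponding local-system cohomology on the quasi-projective variety $X$. Note that, unlike Theorems~\ref{thm:3.3} and~\ref{thm:3.4}, no genericity assumption on the parameters is needed: both Theorem~\ref{thm:perfectpairing} and the finiteness statement hold for all $(s,\nu)\in\mathbb{C}^{\ell+n}$, so the corollary is unconditional.

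The step requiring care — rather than a genuine obstacle — is precisely this finiteness: without it, ``perfect pairing'' alone would not force equality of dimensions, since a vector space and its dual need not be isomorphic and the canonical map into the double dual need not be surjective in infinite dimensions. Once finiteness is in hand the remainder is a one-line diagram chase, so I would present it as a short corollary of Theorem~\ref{thm:perfectpairing} rather than as a self-contained proof.
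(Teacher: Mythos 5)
Your proposal is correct and follows essentially the same route as the paper, which treats the corollary as an immediate consequence of the period pairing being perfect (Theorem~\ref{thm:perfectpairing}); your extra remark on finite-dimensionality is a sensible elaboration rather than a different argument, and indeed the theorem's assertion that \emph{both} induced maps $H_k(X,-\omega)\to H^k(X,\omega)^\vee$ and $H^k(X,\omega)\to H_k(X,-\omega)^\vee$ are isomorphisms already forces finiteness, since an infinite-dimensional space cannot be isomorphic to its double dual.
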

\begin{corollary} \label{cor:dimchi}
If $(s, \nu)$ lies in the open subset $U$ from Theorem \ref{thm:3.3}, the vanishing theorem extends to twisted homology: $H_k(X,-\omega) = 0$ when $k \neq n$, and $\dim_{\mathbb{C}} H_n(X,-\omega) = |\chi(X)|$.
\end{corollary}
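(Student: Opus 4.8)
The plan is to deduce the statement entirely from results already in hand, since it is a formal consequence of the perfect period pairing combined with the vanishing theorem. The key bridge is Corollary \ref{cor:samedim}: for every $k$ it gives $\dim_{\mathbb{C}} H_k(X,-\omega) = \dim_{\mathbb{C}} H^k(X,\omega)$, which transports any statement about twisted cohomology to the corresponding statement about twisted homology. So the proof will simply run the two known facts about $H^k(X,\omega)$ through this equality of dimensions.

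First, for $k \neq n$, I would apply Theorem \ref{thm:3.3}: on the dense open set $U$ the cohomology $H^k(X,\omega)$ vanishes for all $k \neq n$. By Corollary \ref{cor:samedim}, $H_k(X,-\omega)$ then has dimension zero, hence is the zero vector space. Second, for $k = n$, I would combine Corollary \ref{cor:samedim} with Theorem \ref{thm:3.4}, which identifies $\dim_{\mathbb{C}} H^n(X,\omega)$ with $|\chi(X)|$; reading this through the equality of dimensions yields $\dim_{\mathbb{C}} H_n(X,-\omega) = |\chi(X)|$. That completes the argument.

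I do not expect a genuine obstacle here: the only point worth checking is that all ingredients are valid for the same parameters. Theorem \ref{thm:perfectpairing} (and hence Corollary \ref{cor:samedim}) requires no genericity at all, while Theorems \ref{thm:3.3} and \ref{thm:3.4} both hold precisely on the locus $U$, so restricting to $(s,\nu) \in U$ suffices throughout. An alternative, more self-contained route would bypass the explicit pairing and instead relate $H_k(X,-\omega)$ to $H^k(X,\omega)$ directly via a universal-coefficients argument over the field $\mathbb{C}$, but invoking Corollary \ref{cor:samedim} is the most economical path and requires no extra machinery.
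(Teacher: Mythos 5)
Your argument is correct and matches the paper's own (implicit) proof: Corollary \ref{cor:dimchi} is stated as a consequence of the perfect pairing in Theorem \ref{thm:perfectpairing} via Corollary \ref{cor:samedim}, combined with the cohomological vanishing of Theorem \ref{thm:3.3} and the dimension formula of Theorem \ref{thm:3.4}, exactly as you do. Your remark that the pairing needs no genericity while the vanishing results hold on $U$ is also the right consistency check.
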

This means that, when $(s,\nu) \in U$, we can find a set of $\chi = |\chi(X)|$ basis elements $[\phi_1], \ldots, [\phi_\chi]$ for $H^n(X,\omega)$, and a set of $\chi$ basis elements $[\Gamma_1], \ldots, [\Gamma_\chi]$ for $H_n(X,-\omega)$. In particular, for any $(a,b) \in \mathbb{Z}^{\ell + n}$, there exist coefficients $c^{a,b}_1, \ldots, c^{a,b}_\chi \in \mathbb{C}$ such that 
\begin{equation} \label{eq:expansion}
    \left [\frac{x^b}{f^a} \, \frac{{\rm d}x}{x} \right ] \, = \, c_1^{a,b} \, [\phi_1] + \cdots + c^{a,b}_\chi \, [\phi_\chi] \quad \text{in} \, \, H^n(X,\omega).
\end{equation}
\begin{example}
    For the beta integral, we have seen in Example \ref{eq:3.5} that for $\phi_1 = [{\rm d}x/x]$, 
    \[ c_1^{a,b} \, = \, \frac{(1-s)_{-a}(\nu)_b}{(1+\nu-s)_{b-a}}. \qedhere \]
\end{example}
\begin{corollary} \label{cor:relations}
    Let $X, \omega$ be as above. A regular $n$-form $\phi \in \Omega^n(X)$ is zero in twisted cohomology, i.e., $[\phi] = 0$ in $H^n(X,\omega)$, if and only if 
    \[ \langle \Gamma, \phi \rangle \, = \, \int_\Gamma f^{-s}x^\nu \, \phi \, = \, 0 \quad \text{ for all } \quad  [\Gamma] \in H_n(X,-\omega). \]
    Here it suffices to let $[\Gamma]$ run over a $\mathbb{C}$-basis for $H_n(X,-\omega)$.
\end{corollary}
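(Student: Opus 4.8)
The plan is to deduce this directly from the perfectness of the period pairing established in Theorem \ref{thm:perfectpairing}, specialized to $k = n$. The statement is essentially a reformulation of the non-degeneracy of that pairing, so almost all the substantive work has already been done; what remains is to unwind the definitions and invoke bilinearity.

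First I would dispatch the ``only if'' direction: if $[\phi] = 0$ in $H^n(X,\omega)$, then $\phi = \nabla_\omega(\psi)$ for some $\psi \in \Omega^{n-1}(X)$, and Lemma \ref{lem:modeverything} gives $\int_\Gamma f^{-s}x^\nu\,\phi = 0$ for every twisted cycle $\Gamma \in H_n(X,-\omega)$. Equivalently, one can simply note that the pairing \eqref{eq:perpairing} is well-defined on cohomology classes by Theorem \ref{thm:perfectpairing}, so in particular it vanishes identically on the zero class.

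For the ``if'' direction, suppose $\langle \Gamma, \phi \rangle = 0$ for all $\Gamma \in H_n(X,-\omega)$. The image of $[\phi]$ under the map $H^n(X,\omega) \rightarrow H_n(X,-\omega)^\vee$ from Theorem \ref{thm:perfectpairing} is precisely the linear functional $[\Gamma] \mapsto \langle \Gamma, \phi \rangle$, which by hypothesis is the zero functional. Since that map is an isomorphism, hence injective, we conclude $[\phi] = 0$ in $H^n(X,\omega)$. The final remark, that it suffices to test $\phi$ against a $\mathbb{C}$-basis $[\Gamma_1], \ldots, [\Gamma_\chi]$ of $H_n(X,-\omega)$, follows from $\mathbb{C}$-bilinearity of $\langle \cdot, \cdot\rangle$: an arbitrary class is a linear combination $[\Gamma] = \sum_i \lambda_i [\Gamma_i]$, so $\langle \Gamma, \phi\rangle = \sum_i \lambda_i \langle \Gamma_i, \phi\rangle$ vanishes as soon as each $\langle \Gamma_i, \phi\rangle$ does.

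I do not expect any genuine obstacle in this corollary: the entire content lies in Theorem \ref{thm:perfectpairing}, whose proof (through \cite{aomoto2011theory} together with the Deligne--Grothendieck comparison theorem \cite{deligne2006equations}) carries the analytic weight. The one point worth emphasizing when writing it up is that Theorem \ref{thm:perfectpairing} imposes \emph{no} genericity hypothesis on $(s,\nu)$, so the corollary is valid in that same generality, even though the basis description in \eqref{eq:expansion} of course presumes $(s,\nu)\in U$.
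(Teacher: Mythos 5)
Your argument is correct and is exactly how the paper intends the corollary to be read: vanishing on cycles for exact forms via Lemma \ref{lem:modeverything}, injectivity of the map $H^n(X,\omega) \rightarrow H_n(X,-\omega)^\vee$ from Theorem \ref{thm:perfectpairing} for the converse, and bilinearity for the reduction to a basis. Nothing further is needed.
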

Corollary \ref{cor:relations} says that relations in cohomology like \eqref{eq:betarelation} are equivalent to relations between Euler integrals which hold \emph{for any twisted cycle}. That is, Equation \eqref{eq:betarelation} implies 
\[ \int_\Gamma \frac{x^{\nu + b} }{(1-x)^{s+a}} \frac{{\rm d}x}{x} \, = \,   \frac{(1-s)_{-a}(\nu)_b}{(1+\nu-s)_{b-a}} \int_\Gamma \frac{x^{\nu } }{(1-x)^{s}}  \frac{{\rm d}x}{x}.   \]
for any $[\Gamma] \in H_1(\mathbb{C} \setminus \{0,1\},-\omega)$. More generally, the expansion \eqref{eq:expansion} in terms of a basis gives 
\[ \int_\Gamma \frac{x^{\nu + b} }{f^{s+a}} \frac{{\rm d}x}{x} \, = \, c_1^{a,b} \int_\Gamma \frac{x^{\nu } }{f^{s}} \, \phi_1 \,  + \,  \cdots \,  + \,  c_\chi^{a,b} \int_\Gamma \frac{x^{\nu } }{f^{s}} \, \phi_\chi, \quad \text{for all } [\Gamma] \in H_n(X,-\omega).   \]
The integrals on the right-hand side are called a set of \emph{master integrals} in physics, see \cite{weinzierl2022feynman}.
\begin{example} \label{ex:gammanonzero}
    In Example \ref{ex:3.10}, we promised to show that $\Gamma$ from \eqref{eq:twisted-cycle} is nonzero in twisted homology. For this, let us fix values of $s, \nu$ such that $0 < \nu < \nu - s + 1$. These are precisely the convergence conditions derived in Example \ref{ex:betaextend}. We will come back to this later. The twisted cycle $\Gamma$ in \eqref{eq:twisted-cycle} depends on $\varepsilon$, but by the Cauchy-Goursat theorem, the value of the integral $\langle \Gamma, \frac{{\rm d}x}{x} \rangle$ is independent of $\varepsilon \in (0,1/2)$.
    According to the three terms in \eqref{eq:twisted-cycle}, we split the integral up into three parts: $\langle \Gamma, \frac{{\rm d}x}{x} \rangle = I_0(\varepsilon) + I_-(\varepsilon) + I_1(\varepsilon)$. The first summand is 
    \[I_0(\varepsilon) \, = \, \frac{1}{e^{2 \pi\sqrt{-1}\nu }-1} \int_{0}^{2\pi} (1-\varepsilon e^{\sqrt{-1}\theta})^{-s} \, \varepsilon^\nu \, e^{\sqrt{-1} \theta \nu} {\rm d} \theta. \]
    Because of the assumption $\nu >0$, we have $\lim_{\varepsilon \rightarrow 0^+} I_0(\varepsilon) = 0$. Analogously, one shows $\lim_{\varepsilon \rightarrow 0^+} I_1(\varepsilon) = 0$. Finally, by \eqref{eq:betafunc}, we have 
    \[ \left \langle \Gamma, \frac{{\rm d}x}{x}  \right \rangle \, = \, \lim_{\varepsilon \rightarrow 0^+} I_-(\varepsilon) \, = \, \lim_{\varepsilon \rightarrow 0^+}  \int_{\varepsilon}^{1-\varepsilon} \frac{x^\nu}{(1-x)^s} \, \frac{{\rm d}x}{x} \, = \, B(\nu,1-s). \]
    Since the result is nonzero, Corollary \ref{cor:relations} implies $[\Gamma] \neq 0$.
\end{example}

Our final goal in this section is to connect the Euler integrals $\langle \Gamma, \phi \rangle$ obtained from the pairing in Theorem \ref{thm:perfectpairing} with the Euler-Mellin integrals from Section \ref{sec:1}. For that, we first need to reinterpret $\langle \Gamma, \phi \rangle$ as a function of $s, \nu$. We write $\omega = \omega(s,\nu)$ to emphasize the dependence on these parameters. On the cohomology side, the natural thing to do is to fix $\phi \in \Omega^n(X)$, and regard it as an element in the varying cohomology vector space $H^n(X,\omega(s,\nu))$. On the homology side, we need to take into account the fact that the line bundle ${\cal L}_{-\omega(s,\nu)}$ depends on $s, \nu$. Let $\Gamma(s,\nu) = \Delta \otimes \tau(s,\nu)$ be the singular $n$-simplex $\Delta$ loaded with $\tau(s,\nu) \in {\cal L}_{-\omega(s,\nu)}(\Delta)$. To see how $\tau$ varies with $s, \nu$, note that it is a $\mathbb{C}$-linear combination of the branches of 
\[ {\rm exp}( -s_1 \log f_1 - \cdots - s_\ell \log f_\ell + \nu_1 \log x_1 + \cdots + \nu_n \log x_n) \]
restricted to $\Delta$. Such a branch is fixed after fixing the branches of the logarithms $\log f_i, \log x_j$, which are independent of $s$ and $\nu$. Our integral is the following function of $s, \nu$:
\begin{equation} \label{eq:intsnusimplex}
(s,\nu) \longmapsto \int_{\Delta \otimes \tau(s,\nu)} f^{-s}x^\nu \, \phi \, = \, \int_\Delta \tau(s,\nu)(x) \, \phi \, = \, \langle \Delta \otimes \tau(s,\nu),\phi \rangle. 
\end{equation}
Taking derivatives of \eqref{eq:intsnusimplex} in $s,\nu$ can be done under the integration sign \cite[Chapter XVII, Theorem 8.2]{lang2013undergraduate}. This implies the following Proposition.
\begin{proposition} \label{prop:loesersabbah}
    The function $(s,\nu) \mapsto \langle \Delta \otimes \tau(s,\nu), \phi \rangle$ from \eqref{eq:intsnusimplex} is holomorphic.
\end{proposition}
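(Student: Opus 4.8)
The plan is to reduce the statement to a single loaded simplex and then recognize \eqref{eq:intsnusimplex} as a parameter integral of the type handled by the standard holomorphy-under-the-integral-sign criterion. By $\mathbb{C}$-linearity it suffices to treat $\Gamma(s,\nu) = \Delta \otimes \tau(s,\nu)$ for a single (smooth) singular $n$-simplex $\Delta$, realized by a $C^1$ map $\varphi : \sigma \to X$ from the standard simplex $\sigma \subset \mathbb{R}^n$. Writing $\phi = g(x)\,\mathrm{d}x$ with $g \in \mathcal{O}(X)$ and pulling back along $\varphi$, the integral in \eqref{eq:intsnusimplex} becomes an integral over the \emph{compact} set $\sigma$,
\[
\int_\sigma G(u;s,\nu)\,\mathrm{d}u, \qquad G(u;s,\nu) \, = \, g(\varphi(u))\,J(u)\, \exp\!\Big( -\sum_{i=1}^\ell s_i \log f_i(\varphi(u)) + \sum_{j=1}^n \nu_j \log x_j(\varphi(u)) \Big),
\]
where $J(u)$ is the Jacobian factor coming from $\varphi^*(\mathrm{d}x)$ and the branches of $\log f_i$ and $\log x_j$ are fixed once and for all, independently of $(s,\nu)$, as explained in the paragraph preceding the statement.

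First I would record the two facts that make the criterion apply. For each fixed $u \in \sigma$, the function $(s,\nu) \mapsto G(u;s,\nu)$ is entire, being the constant $g(\varphi(u))J(u)$ times the exponential of a $\mathbb{C}$-affine-linear function of $(s,\nu)$. Second, $G$ is locally uniformly bounded on $\sigma$: because $\Delta \subset X$ and $\sigma$ is compact, $\varphi(\sigma)$ is a compact subset of the very affine variety $X$, so each $f_i$ and each $x_j$ stays bounded and bounded away from $0$ on $\varphi(\sigma)$; hence the chosen branches $\log f_i(\varphi(\cdot))$, $\log x_j(\varphi(\cdot))$, the regular function $g(\varphi(\cdot))$, and the continuous function $J$ are all bounded on $\sigma$, and for $(s,\nu)$ in any compact $K \subset \mathbb{C}^{\ell+n}$ there is a constant $C_K$ with $|G(u;s,\nu)| \le C_K$ for all $u \in \sigma$. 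Since $\sigma$ has finite measure, the constant $C_K$ serves as an integrable dominating function.

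With $G(\cdot\,;s,\nu)$ continuous (hence measurable) on $\sigma$, holomorphic in $(s,\nu)$ for each $u$, and locally uniformly dominated by an integrable function, the integral $(s,\nu) \mapsto \int_\sigma G(u;s,\nu)\,\mathrm{d}u$ is holomorphic: continuity comes from dominated convergence, holomorphy in each variable separately follows from Morera's theorem after interchanging a short contour integral with $\int_\sigma$ (legitimate by the uniform bound over the compact product region) and applying Cauchy's theorem to the entire functions $G(u;\cdot)$, and Hartogs' theorem promotes separate to joint holomorphy. Equivalently, one directly invokes differentiation under the integral sign, \cite[Chapter XVII, Theorem 8.2]{lang2013undergraduate}, as foreshadowed in the text; this also produces the explicit formulas for the derivatives $\partial_{s_i}$ and $\partial_{\nu_j}$ of the integral.

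The only step that genuinely uses the hypotheses, and the one I expect to require the most care, is the domination estimate in the second paragraph: it is exactly the condition $\Delta \subset X$, i.e. that $\varphi$ avoids the divisor $x_1\cdots x_n f_1\cdots f_\ell = 0$, that keeps the logarithms and the coefficient $g$ bounded on the compact image, so that the interchange of the contour integral with $\int_\sigma$ is valid. Once this is in place, the remaining arguments are routine.
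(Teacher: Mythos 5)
Your proof is correct and follows essentially the same route as the paper, which simply observes that derivatives in $(s,\nu)$ can be taken under the integral sign, citing \cite[Chapter XVII, Theorem 8.2]{lang2013undergraduate}. You supply the details the paper leaves implicit — pulling back to the compact standard simplex, noting the integrand is entire in $(s,\nu)$ for fixed $u$ with branches independent of the parameters, and obtaining the dominating bound from compactness of the image in $X$ — but the underlying argument (holomorphy of a parameter integral over a compact domain) is the same.
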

It is straightforward to extend this to the case where $\Gamma(s,\nu) = \sum_{i} d_p(s,\nu) \cdot \Delta_p \otimes \tau_p(s,\nu)$, where the coefficients $d_p(s,\nu)$ are meromorphic functions. Similar to Definition \ref{def:integral}, we set
\begin{equation} \label{eq:intsnu} \langle \Gamma(s,\nu), \phi \rangle \, = \, \sum_{p} d_p(s,\nu) \cdot \int_{\Delta_p} \tau_p(s,\nu)(x) \, \phi. 
\end{equation}
By Proposition \ref{prop:loesersabbah}, this is meromorphic in $s, \nu$. Notice that we can also view this as the sum $\sum_p \langle \Delta_p \otimes \tau_p(s,\nu), d_p(s,\nu) \phi \rangle$, allowing meromorphic coefficients in cohomology. This will be useful in Section \ref{sec:difference}. It turns out we have seen \eqref{eq:intsnu} before.
\begin{theorem*} \label{thm:backtosec1}
    Let $f_1, \ldots, f_\ell$ satisfy Assumption \ref{assum:poscoeffs}, and suppose that the Minkowski sum $\Delta(f_1) + \cdots + \Delta(f_\ell)$ has dimension $n$. Let $X$ be as in \eqref{eq:veryaffinevariety}. There exist finite sets of meromorphic functions $d_p(s, \nu)$, singular $n$-chains $\Delta_p$ on $X$ and sections $\tau_p(s,\nu) \in {\cal L}_{-\omega(s,\nu)}(\Delta_p)$ with the following property. For $\Gamma(s,\nu) = \sum_{p} d_p(s,\nu) \cdot \Delta_p \otimes \tau_p(s,\nu)$, the function $(s, \nu) \mapsto \langle \Gamma(s,\nu), \frac{{\rm d} x}{x} \rangle$ form \eqref{eq:intsnu} is the meromorphic continuation \eqref{eq:meromcont} from Theorem \ref{thm:meromorphiccontinuation}.
\end{theorem*}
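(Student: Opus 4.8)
The plan is to realize $\Gamma(s,\nu)$ as a \emph{regularization} of the positive chamber $\mathbb{R}^n_+$, loaded with the positive branch of $f^{-s}x^\nu$, to check that its period equals ${\cal I}(s,\nu)$ on the convergence domain of Theorem~\ref{thm:convergence}, and then to conclude by the identity theorem for meromorphic functions.

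\emph{Step 1: constructing the cycle.} I would pass to $y=\log x$ as in the proof of Theorem~\ref{thm:convergence}, so that $\mathbb{R}^n_+$ becomes $\mathbb{R}^n\subset X$. Under Assumption~\ref{assum:poscoeffs} each $f_i$, and each of its face polynomials, has positive coefficients, so $f_i$ stays positive on the closure of $\mathbb{R}^n_+$ inside the toric variety $X_\Sigma$ attached to any smooth complete fan $\Sigma$ refining the normal fan of $P(s)$. Thus that closure is a compact manifold with corners meeting the divisor $\{x_1\cdots x_n f_1\cdots f_\ell=0\}$ only along the toric boundary. Triangulating $\mathbb{R}^n_+$ and thickening the chain near each toric boundary stratum by a collar built from small circles and arcs — exactly as the semicircles $\Delta_{a,\theta}(\varepsilon)$ replace a full circle in Examples~\ref{ex:3.10} and~\ref{ex:gammanonzero} — one produces a chain $\Gamma(s,\nu)=\sum_p d_p(s,\nu)\,\Delta_p\otimes\tau_p(s,\nu)$ with $\partial_\omega\Gamma(s,\nu)=0$, where the $\tau_p(s,\nu)$ are branches of $f^{-s}x^\nu$ as in Proposition~\ref{prop:loesersabbah}, and the weights $d_p$ are built from geometric-series factors $\big(e^{2\pi\sqrt{-1}(r_i\cdot\nu-w_i\cdot s)}-1\big)^{-1}$ coming from the local monodromy of $f^{-s}x^\nu$ around the boundary divisors (here $r_i,w_i$ are the data of Theorem~\ref{thm:meromorphiccontinuation}); the $d_p$ are therefore meromorphic in $(s,\nu)$ with polar set contained in the resonance locus $\bigcup_i\{r_i\cdot\nu-w_i\cdot s\in\mathbb{Z}\}$. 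This is the standard construction of a twisted cycle from a chamber; see \cite[Section~3.2.4]{aomoto2011theory} for the hyperplane-arrangement case.

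\emph{Step 2: the period on the convergence domain.} Fix $(s,\nu)$ with ${\rm Re}(s_i)>0$ and ${\rm Re}(\nu)\in{\rm int}(P(s))$. By Theorem~\ref{thm:perfectpairing} the number $\langle\Gamma(s,\nu),\frac{{\rm d}x}{x}\rangle$ depends only on the homology class, hence is independent of the collar size $\varepsilon$. Letting $\varepsilon\to0^+$, the collar contributions vanish: each behaves like $\varepsilon^{c}$ with $c$ the real part of the exponent of $f^{-s}x^\nu$ along the corresponding boundary stratum, and these exponents are positive in precisely the directions guaranteed by ${\rm Re}(\nu)\in{\rm int}(P(s))$ — this is the estimate $\lim_{\varepsilon\to0^+}I_0(\varepsilon)=\lim_{\varepsilon\to0^+}I_1(\varepsilon)=0$ of Example~\ref{ex:gammanonzero}, and the sign bookkeeping is Lemmas~\ref{lem:allneg} and~\ref{lem:volB}. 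Meanwhile the bulk term converges, by dominated convergence and Theorem~\ref{thm:convergence}, to $\int_{\mathbb{R}^n_+}f^{-s}x^\nu\,\frac{{\rm d}x}{x}={\cal I}(s,\nu)$. Hence $\langle\Gamma(s,\nu),\frac{{\rm d}x}{x}\rangle={\cal I}(s,\nu)$ on this nonempty open subset of $\mathbb{C}^{\ell+n}$.

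\emph{Step 3: analytic continuation.} The function $(s,\nu)\mapsto\langle\Gamma(s,\nu),\frac{{\rm d}x}{x}\rangle=\sum_p d_p(s,\nu)\int_{\Delta_p}\tau_p(s,\nu)(x)\,\frac{{\rm d}x}{x}$ is a finite sum of meromorphic functions times entire functions, the latter by Proposition~\ref{prop:loesersabbah}, hence meromorphic on all of $\mathbb{C}^{\ell+n}$; the expression $\Phi_f(s,\nu)\prod_{i=1}^N\Gamma(r_i\cdot\nu-w_i\cdot s)$ of Theorem~\ref{thm:meromorphiccontinuation} is meromorphic on $\mathbb{C}^{\ell+n}$ as well. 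By Step~2 they coincide on a nonempty open set, so the identity theorem for meromorphic functions of several variables forces them to agree everywhere, which is the claim. The genuinely nontrivial part is Step~1 for general $f_i$ rather than hyperplane arrangements: one must carry out the collar/regularization procedure in the toric compactification and verify that the local pieces glue to a chain with vanishing twisted boundary and with weights that are meromorphic in $(s,\nu)$, the non-resonance condition being exactly what ensures each local monodromy eigenvalue is $\neq 1$, so that the geometric-series factor is finite away from its polar locus. Steps~2 and~3 then follow the template already worked out in Example~\ref{ex:gammanonzero}.
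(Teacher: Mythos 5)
Your proposal follows essentially the same route as the paper's own sketch: you regularize $\mathbb{R}^n_+$ inside a toric compactification built from (a smooth refinement of) the normal fan of $\Delta(f_1)+\cdots+\Delta(f_\ell)$ and load it with monodromy-factor weights, which under the moment map is exactly the paper's regularization $\Gamma(P)$ of the polytope \`a la Aomoto--Kita (your fan refinement playing the role of the paper's blow-up), and you then match the period with ${\cal I}(s,\nu)$ on the Nilsson--Passare domain and conclude by meromorphic continuation, steps the paper leaves implicit. Like the paper, you leave the actual construction and gluing of the regularized cycle (your Step 1) at the level of a sketch, which is precisely why the statement carries an asterisk and is listed among the open problems; within that caveat your argument is consistent with the paper's.
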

This statement is labeled Theorem* \ref{thm:backtosec1} (with an asterisk) because, to the best of our knowledge, there exists no rigorous proof in the literature. Yet, it is widely accepted and used. Our sketch of proof below makes it more than a conjecture. Providing full details is among the proposed problems in Section \ref{sec:5}. This requires tools beyond our present scope. 
\begin{proof}[Sketch of proof of Theorem* \ref{thm:backtosec1}]
    Consider the algebraic moment map in \eqref{eq:momentmap} with $s_i = 1$. By Lemma \ref{lem:moment_map}, $\mu: \mathbb{R}^n_+ \rightarrow {\rm int}(P)$ is a diffeomorphism, with $P = \sum_{i=1}^\ell \Delta(f_i)$. This gives 
    \[ \int_{\mathbb{R}^n_+} f(y)^{-s} y^\nu \, \frac{{\rm d}y}{y} \, = \, \int_P f(\mu^{-1}(x))^{-s} \mu^{-1}(x)^\nu J_{\mu^{-1}} \frac{{\rm d} x}{\prod_j \mu^{-1}(x)_j},\]
    where $J_{\mu^{-1}}$ is the Jacobian determinant of $\mu^{-1}$, and the last denominator is the product of the entries of $\mu^{-1}$. Like in Example \ref{ex:gammanonzero}, for values of $s, \nu$ where the integral on the right converges, we replace $P$ by a twisted cycle $\Gamma(P)$, called the \emph{regularization} of $P$, such that
    \[ \int_{\mathbb{R}^n_+} f(y)^{-s} y^\nu \, \frac{{\rm d}y}{y}  \, = \, \int_{\Gamma(P)} f(\mu^{-1}(x))^{-s} \mu^{-1}(x)^\nu J_{\mu^{-1}} \frac{{\rm d} x}{\prod_j \mu^{-1}(x)_j} \, = \, \left \langle \Gamma(P), J_{\mu^{-1}} \frac{{\rm d} x}{\prod_j \mu^{-1}(x)_j} \right \rangle. \]
    The integral on the right side is the pairing of a holomorphic $n$-form with the twisted cycle $\Gamma(P)$. Here we need to use the analytic version of twisted (co)homology. The manifold is $\tilde{X} = \{x \in  \mathbb{C}^n \, : \, f_1(\mu^{-1}(x)) \cdots f_\ell(\mu^{-1}(x)) \mu^{-1}(x)_1 \cdots \mu^{-1}(x)_n \neq 0 \}$, and the twist is $\tilde{\omega} = {\rm dlog}(f(\mu^{-1}(x))^{-s} \mu^{-1}(x)^\nu)$. The construction of the regularization $\Gamma(P)$ is that in \cite[Sections 3.2.4 and 3.2.5]{aomoto2011theory}. For this, when $P$ is not smooth, one needs to replace the moment map $\mu$ by that of a different toric variety, obtained by blowing up the toric variety of $P$ in its singular locus. The cycle $\Gamma(s,\nu)$ in the Theorem* is the pullback $\mu^*(\Gamma(P))$ of $\Gamma(P)$ under $\mu$.
\end{proof}

Theorem* \ref{thm:backtosec1} replaces \emph{integrating over $\mathbb{R}^n_+$} by \emph{integrating over $\Gamma(s,\nu)$}. The twisted cycle $\Gamma(s,\nu)$ is called the \emph{regularization} of $\mathbb{R}^n_+$ \cite[Sections 3.2.4 and 3.2.5]{aomoto2011theory}. Here is an example.

\begin{example}
    We have seen two integral formulas for $B(\nu,1-s)$ in Example \ref{ex:betaextend}:
    \begin{equation} \label{eq:2betaints}
    \int_{0}^1 \frac{x^\nu}{(1-x)^s} \, \frac{{\rm d}x}{x} \, = \, \int_{0}^1 \frac{x^\nu}{(1-x)^{s-1}} \, \frac{{\rm d}x}{x(1-x)} \, = \,  \int_{\mathbb{R}_+} \frac{y^\nu}{(1+y)^{\tilde{s}}} \, \frac{{\rm d}y}{y}.
    \end{equation}
    The coordinate transformation is $\mu : \mathbb{R}_+ \rightarrow (0,1)$, with $x=\mu(y) = y(1+y)^{-1}$. This is the moment map from Lemma \ref{lem:moment_map} up to scaling by $\tilde{s}$. Its complexification $\mu_{\mathbb{C}}$ is an isomorphism 
    \[ X_y \, = \,  \mathbb{C} \setminus \{0,-1\} \, \overset{\mu_{\mathbb{C}}}{\longrightarrow} \, \mathbb{C} \setminus \{0,1\} \, = \, X_x.\] 
    Let $\omega_x = {\rm dlog}((1-x)^{1-s}x^\nu) \in \Omega^1(X)$ be the regular one-form corresponding to the middle integral of \eqref{eq:2betaints}. We define $\omega_y$ to be the pullback of $\omega_x$ along $\mu_{\mathbb{C}}$. That is, explicitly,
    \[ \omega_y = \mu^*(\omega_x) \, = \,  {\rm dlog}((1-\mu_\mathbb{C}(y))^{1-s}\mu_{\mathbb{C}}(y)^\nu) \,  = \,  {\rm dlog}((1+y)^{-\tilde{s}}y^\nu) .\] 
    The cocycle ${\rm d}x/(x(1-x))$ pulls back to ${\rm d}y/y$ under $\mu$.
    A twisted chain
    $\Gamma = \Delta \otimes \tau \in C_1(X_x, - \omega_x)$ is naturally pulled back to a cycle $\mu^*(\Gamma) \in C_1(X_y,-\omega_y)$ via \[\mu^*(\Gamma) \, = \,  \mu^{-1}(\Delta) \otimes (\tau \circ \mu).\]
    As usual, this definition for simplices is extended linearly to $C_1(X_x,-\omega_x)$. With this notation in place, it is easy to check that for any $\Gamma \in C_1(X_x,-\omega_x)$, we have
    \[ \left \langle \Gamma, \frac{{\rm d}x}{x(1-x)} \right \rangle \, = \, \int_{\Gamma} \frac{x^\nu}{(1-x)^s} \, \frac{{\rm d}x}{x} \, = \, \int_{\mu^*(\Gamma)} \frac{y^\nu}{(1+y)^{\tilde{s}}} \, \frac{{\rm d}y}{y} \, = \, \left \langle \mu^*(\Gamma), \frac{{\rm d} y}{y} \right \rangle. \]
    By Example \ref{ex:gammanonzero}, if we pick $\Gamma$ as in \eqref{eq:twisted-cycle} (with $s$ replaced by $s-1$), the left integral is a meromorphic function in $(s,\nu)$ which agrees with $B(\nu,1-s)$ if $0 < \nu < \nu -s +1$. Hence, the regularization of $\mathbb{R}_+$ is $\mu^*(\Gamma)$. It depends on $\tilde{s}, \nu$ as explained above.
\end{example}

\section{Differential and difference equations} \label{sec:4}

It is common practice to describe a function $F$ via the differential or difference equations it satisfies. More precisely, one attempts to find \emph{differential operators} $P$ such that $P \bullet F = 0$, or \emph{difference/shift operators} $S$ such that $S \bullet F = 0$. Here $P \bullet F$ reads as \emph{$P$ applied to $F$}, and similarly for $S \bullet F$. In this section, $F$ is an Euler integral \eqref{eq:integralintro}, seen as a pairing between a twisted $n$-cycle $[\Gamma]$ and the twisted $n$-cocycle $[{\rm d} x/x]$, see Section \ref{sec:3}. Such an integral satisfies difference equations when seen as a meromorphic function of $s$ and $\nu$, as in \eqref{eq:intsnu}.

\begin{example}\label{ex:4.1}
    We have seen in Example \ref{ex:gammanonzero} that the beta function $B(\nu, 1-s)$ is given by 
    \begin{equation}
        {\cal I}(s,\nu) \, = \, B(\nu,1-s) \, = \, \int_\Gamma \frac{x^\nu}{(1-x)^s} \, \frac{{\rm d}x}{x}, 
    \end{equation}
    where $\Gamma$ is the twisted $1$-cycle from \eqref{eq:twisted-cycle}. This agrees with the integral over $(0,1)$ in \eqref{eq:betafunc} when ${\rm Re}(s)>0$ and $0 < {\rm Re}(\nu) < {\rm Re}(\nu) -{\rm Re}(s) + 1$. The \emph{shift operator in $s$}, denoted $\sigma_s$, acts by
    \[ \sigma_s \bullet {\cal I}(s,\nu) \, = \, {\cal I}(s + 1, \nu).\]
    Similarly, the action of $\sigma_\nu$ is $\sigma_\nu \bullet {\cal I}(s,\nu) = {\cal I}(s,\nu+1)$. We claim that the shift operators
    \begin{equation} \label{eq:shiftbeta}
    S_1 \, = \, 1-\sigma_s\, (1-\sigma_\nu) \quad \text{and} \quad S_2 \, = \, \nu + s \, \sigma_\nu \sigma_s 
    \end{equation}
    annihilate ${\cal I}$, i.e., $S_1 \bullet {\cal I} = 0$ and $S_2 \bullet {\cal I} = 0$. Here coefficients that are rational functions in $s, \nu$ simply act by multiplication. The identity $S_1 \bullet {\cal I} = 0$ is easy to verify: 
    \[ \sigma_s \, (1-\sigma_\nu) \bullet {\cal I } \, = \, \sigma_s \bullet \left ( \int_\Gamma \frac{x^\nu}{(1-x)^s} \, \frac{{\rm d}x}{x} - \int_\Gamma \frac{x\cdot x^\nu}{(1-x)^s} \, \frac{{\rm d}x}{x} \right) \, = \, \int_\Gamma \frac{(1-x)\cdot x^\nu}{(1-x)^{s+1}} \, \frac{{\rm d}x}{x}.\]
    The rightmost integral equals ${\cal I} = 1 \bullet {\cal I}$. To see that $S_2 \bullet {\cal I} = 0$, we apply Lemma \ref{lem:modeverything}:
    \[ S_2 \bullet {\cal I} \, = \, \int_\Gamma \frac{x^\nu}{(1-x)^s} \, \left ( \frac{\nu}{x} + \frac{s}{1-x} \right) {\rm d}x \, = \, \int_\Gamma \frac{x^\nu}{(1-x)^s} \, \nabla_\omega(1) \, = \, 0.\]
    Here $\nabla_\omega$ is as in \eqref{eq:nablabeta}. We reiterate that, in order to view ${\cal I}$ as a meromorphic function of $s,\nu$, it is important to keep in mind that $\omega = \omega(s,\nu)$ varies. Hence, so does the local system ${\cal L}_{-\omega} = {\cal L}_{-\omega(s,\nu)} $, and the twisted cycle $\Gamma = \Gamma(s,\nu)$. This was explained in Section \ref{sec:3.3}.
\end{example}

The goal of Section \ref{sec:difference} is to derive operators like \eqref{eq:shiftbeta} for general Euler integrals. These operators appeared in \cite[Section 3.1]{arkani2021stringy}, \cite[Section 3]{Agostini:2022cgv} and \cite{matsubara2023twisted}. Shift operators for Feynman integrals were studied in \cite{bitoun2019feynman}. Section \ref{sec:differential} discusses differential operators. For that, we must view our Euler integrals as functions of a new set of parameters: the coefficients of $f_i$. 

\begin{example}\label{ex:4.2}
    Fix two generic complex numbers $s, \nu \in \mathbb{C}$. We modify the beta integral \eqref{eq:betafunc} by introducing complex valued parameters $z_1, z_2$ for the coefficients of the denominator $f$: 
    \[ {\cal I}(z_1, z_2) \, = \, \int_\Gamma \frac{x^\nu}{(z_1 + z_2 \, x)^s} \, \frac{{\rm d}x}{x}.\]
    The dependence on $z = (z_1,z_2)$ is subtle. For instance, the very affine variety $X$ from \eqref{eq:veryaffinevariety} depends on $z$ and, necessarily, so does $\Gamma$. In this example, one can think about ${\cal I}(z_1,z_2)$ as a function on a small neighborhood of $(z_1,z_2) = (1,-1)$, which corresponds to our original beta integral. In that neighborhood, one can modify $\Gamma$ by keeping the $1$-simplices in Example \ref{ex:chains} fixed, and varying the sections in Example \ref{ex:3.2} with $z_1,z_2$. For instance, the initial condition $\tau_{0,0}(\varepsilon; z_1, z_2) = \zeta(z_1,z_2)$ from \eqref{eq:IVP} is given by $\zeta(z_1,z_2)={\rm exp}(-s \log (z_1+z_2 \, \varepsilon) + \nu \log \varepsilon)$, and for the first $\log$ we use the analytic continuation of the positive branch near $1-\varepsilon$. 

    The differential operator $\partial_{z_i}$ acts by partial derivation in $z_i$, for $i = 1, 2$, and rational functions in $z$ act by multiplication. Here are two annihilating operators for ${\cal I}(z_1,z_2)$:
    \begin{equation} \label{eq:diffbeta}
    P_1 \, = \, z_1 \, \partial_{z_1} + z_2 \, \partial_{z_2} + s \quad \text{and} \quad P_2 \, = \, z_2 \, \partial_{z_2} + \nu.
    \end{equation}
    The derivatives can be taken under the integration sign. We verify $P_1 \bullet {\cal I} = 0$: 
    \[ (z_1 \, \partial_{z_1} + z_2 \, \partial_{z_2}) \bullet {\cal I }(z) \, = \, \int_\Gamma \frac{-s z_1 \cdot x^\nu}{(z_1+z_2 \, x)^{s+1}} \, \frac{{\rm d}x}{x} + \int_\Gamma \frac{-s z_2x \cdot x^\nu}{(z_1+z_2 \, x)^{s+1}} \, \frac{{\rm d}x}{x} \, = \, -s \cdot {\cal I}(z).\]
    To see that $P_2 \bullet {\cal I}(z) = 0$, we again need Lemma \ref{lem:modeverything}. We compute 
    \[ 0 \, = \, P_2 \bullet {\cal I}(z) \, = \, \int_\Gamma \frac{x^\nu}{(z_1+z_2 \, x)^{s}} \, \nabla_{\omega(z)}(1) \quad \text{with} \quad \omega(z) \, = \, \left( \frac{\nu}{x} - \frac{sz_2}{z_1+z_2 \, x} \right) {\rm d}x. \qedhere\]
\end{example}
The differential operators \eqref{eq:diffbeta} form an \emph{$A$-hypergeometric system} or \emph{GKZ system} of linear partial differential equations. Such systems were introduced by Gelfand, Kapranov and Zelevinsky to study $A$-hypergeometric functions \cite{gelfand1986general,gelfand1990generalized}. We will introduce these systems and recall their relation to Euler integrals in Section \ref{sec:differential}. For a recent survey, see \cite{reichelt2021algebraic}.

\subsection{Difference equations} \label{sec:difference}

We start with difference/shift operators in $(s,\nu)$. In analogy with Example \ref{ex:4.1}, we call these operators $\sigma_{s_i}$ for $i=1,\ldots, \ell$ and $\sigma_{\nu_j}$ for $j=1,\ldots,n$. They act on the integral ${\cal I}= \langle \Gamma, \phi \rangle$ from \eqref{eq:intsnu} as follows. We view \eqref{eq:intsnu} as the sum of pairings
\[ {\cal I}(s,\nu) = \, \sum_{p} d_p(s,\nu) \cdot \int_{\Delta_p} \tau_p(s,\nu)(x) \, \phi \, = \, \sum_p \, \langle\, \Delta_p \otimes \tau_p(s,\nu), \,d_p(s,\nu) \, \phi \, \rangle. \]
That is, the cocycle now depends meromorphically on $s, \nu$. We set
\begin{align}
\sigma_{s_i} \bullet {\cal I}(s,\nu)   \, = \,  {\cal I}(s + e_i, \nu) &\, = \,  \sum_p \, \langle\, \Delta_p \otimes \tau_p(s+e_i,\nu), \,d_p(s+e_i,\nu) \, \phi \, \rangle \label{eq:s-shift1}\\ 
&\, = \, \sum_p \, \langle\, \Delta_p \otimes \tau_p(s,\nu), \, \, d_p(s+e_i,\nu) \,  f_i^{-1} \phi \, \rangle, \label{eq:s-shift2} \\
\sigma_{\nu_j} \bullet {\cal I}(s,\nu) \, = \, {\cal I}(s, \nu + e_j) &\, = \,  \sum_p \, \langle\, \Delta_p \otimes \tau_p(s,\nu+e_j), \,d_p(s,\nu+e_j) \, \phi \, \rangle \label{eq:nu-shift1}  \\
&\, = \, \sum_p \, \langle\, \Delta_p \otimes \tau_p(s,\nu), \, \, d_p(s+e_i,\nu) \,  x_j \, \phi \, \rangle. \label{eq:nu-shift2} 
\end{align}
Here $e_j$ is the $j$-th standard basis vector. The reader should check the passages from \eqref{eq:s-shift1} to \eqref{eq:s-shift2} and \eqref{eq:nu-shift1} to \eqref{eq:nu-shift2} carefully. The expressions \eqref{eq:s-shift2} and \eqref{eq:nu-shift2} show that the action of the shift operators can be viewed as an action on cohomology $H^n(X,\omega(s,\nu))$. For instance, 
\begin{equation} \label{eq:shiftcohom}
    \sigma_{s_i} \bullet [ \phi(s,\nu) ] \, = \,  [f_i^{-1} \phi(s+e_i, \nu) ], \quad \sigma_{\nu_j} \bullet [ \phi(s,\nu) ] \, = \, [x_j \phi(s,\nu+e_j)].
\end{equation}
In \cite{matsubara2023twisted}, this action is defined on a cohomology vector space with coefficients in $\mathbb{C}(s,\nu)$.
We will also use the inverses $\sigma_{s_i}^{-1}, \sigma_{\nu_j}^{-1}$ of these shift operators. Their action is straightforward to define. The variables $s_i$ and $\nu_j$ act on ${\cal I}(s,\nu)$ by multiplication: $s_i \bullet {\cal I}(s,\nu) = s_i{\cal I}(s,\nu)$, and $\nu_j \bullet {\cal I}(s,\nu) = \nu_j {\cal I}(s,\nu)$. Notice that the operators $s_i$ and $\sigma_{s_i}$ do not commute: 
\[ \sigma_{s_i} s_i \bullet {\cal I}(s,\nu) \, = \, (s_i+1){\cal I}(s+e_i,\nu) \, \neq \, s_i {\cal I}(s+e_i,\nu) \, = \, s_i \sigma_{s_i} \bullet {\cal I}(s,\nu). \]
Such commutator relations naturally lead to the following definition. 
\begin{definition}
    The \emph{ring of difference operators} $R = \mathbb{C}(s,\nu) \langle \sigma_{s_1}^{\pm 1}, \ldots, \sigma_{s_\ell}^{\pm 1}, \sigma_{\nu_1}^{\pm 1}, \ldots, \sigma_{\nu_n}^{\pm 1} \rangle$ is the $\mathbb{C}(s,\nu)$-vector space with basis $\sigma_{s}^a\sigma_\nu^b$, where $(a,b) \in \mathbb{Z}^{\ell + n}$. I.e., it consists of finite sums
    \[ \sum_{(a,b) \in \mathbb{Z}^{\ell+n}} g_{a,b}(s,\nu) \, \sigma_{s_1}^{a_1} \cdots \sigma_{s_\ell}^{a_\ell} \, \sigma_{\nu_1}^{b_1} \cdots \sigma_{\nu_n}^{b_n} \, = \, \sum_{(a,b) \in \mathbb{Z}^{\ell+n}} g_{a,b}(s,\nu) \, \sigma_s^a\, \sigma_\nu^b. \]
    The product is subject to the following relations. For any rational function $g(s,\nu) \in \mathbb{C}(s,\nu)$,
    \[ [\sigma_{s_i}^{\pm 1}, g(s,\nu)]  \, = \, (g(s\pm e_i,\nu) - g(s,\nu)) \sigma_{s_i}  \quad \text{and} \quad  [\sigma_{\nu_j}^{\pm 1}, g(s,\nu)] = (g(s,\nu \pm e_j) - g(s,\nu)) \sigma_{\nu_j}.\]
    Here $[A,B] = AB - BA$ denotes the commutator. 
\end{definition}

The ring $R$ acts on meromorphic functions in $s,\nu$ as explained above. The \emph{annihilator} ${\rm Ann}_R({\cal I})$ of a meromorphic function ${\cal I}(s,\nu)$ consists of all shift operators annihilating ${\cal I}$:
\begin{equation} \label{eq:annihilator}
{\rm Ann}_R({\cal I}(s,\nu))  \, = \, \{ S \in R \, : \, S \bullet {\cal I}(s,\nu) = 0 \}. 
\end{equation}
Clearly, if $S_1, S_2 \in {\rm Ann}_R({\cal I})$, then $S_1 + S_2 \in {\rm Ann}_R({\cal I})$ as well. Also, if $S_1 \in {\rm Ann}_R({\cal I})$, then $S_2S_1 \in {\rm Ann}_R({\cal I})$ for any $S_2 \in R$. In other words, ${\rm Ann}_R({\cal I})$ is a \emph{left ideal} of $R$. 

To describe the annihilator of our integral, we introduce the notation $f_i(\sigma_\nu) \in R$ for the difference operator obtained by replacing $x_j \to \sigma_{\nu_j}$ in $f_i(x)$. This is well-defined, since all $\sigma_{\nu_j}$ commute. For instance, for $f(x) = 1 - x$ from Example~\ref{ex:4.1}, we write $f(\sigma_\nu) = 1 - \sigma_\nu$. 

\begin{proposition} \label{prop:annR}
Let $J \subset R$ be the left ideal generated by the following $\ell + n$ operators:
\begin{align}\label{eq:J1}
1 - \sigma_{s_i} f_i(\sigma_\nu) \qquad &\text{for}\; i=1,\ldots,\ell,\\
\label{eq:J2} \sigma_{\nu_j}^{-1} \nu_j - \sum_{i=1}^{\ell} s_i \cdot \sigma_{s_i}  \frac{\partial f_i}{\partial x_j} (\sigma_\nu) \qquad&\text{for}\; j=1,\ldots,n.
\end{align} 
For any cycle $\Gamma$, the annihilator of the Euler integral ${\cal I}_\Gamma(s,\nu) = \langle \Gamma, \frac{{\rm d}x}{x} \rangle$ contains $J$.
\end{proposition}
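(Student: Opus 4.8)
The plan is to verify directly that each of the $\ell+n$ generating operators in \eqref{eq:J1} and \eqref{eq:J2} annihilates ${\cal I}_\Gamma(s,\nu) = \langle \Gamma, \tfrac{{\rm d}x}{x}\rangle$; since ${\rm Ann}_R({\cal I}_\Gamma)$ is a left ideal, this immediately gives $J \subseteq {\rm Ann}_R({\cal I}_\Gamma)$. The key computational input is the description \eqref{eq:shiftcohom} of how the shift operators act on cohomology classes, together with the fact that the integral only depends on the cohomology class of the cocycle (Corollary \ref{cor:relations} / Lemma \ref{lem:modeverything}), so that any cocycle of the form $\nabla_\omega(\psi)$ can be discarded.

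\textbf{The operators in \eqref{eq:J1}.} For fixed $i$, I would compute $\sigma_{s_i} f_i(\sigma_\nu) \bullet {\cal I}_\Gamma$. Writing $f_i(x) = \sum_{\alpha \in {\rm supp}(f_i)} c_{i,\alpha} x^\alpha$, the operator $f_i(\sigma_\nu) = \sum_\alpha c_{i,\alpha} \sigma_\nu^\alpha$ acts on $[\tfrac{{\rm d}x}{x}]$ by $\sigma_\nu^\alpha \bullet [\tfrac{{\rm d}x}{x}] = [x^\alpha \tfrac{{\rm d}x}{x}]$ (iterating the second formula in \eqref{eq:shiftcohom}, noting the cocycle $\tfrac{{\rm d}x}{x}$ is $(s,\nu)$-independent so no coefficient shows up). Hence $f_i(\sigma_\nu) \bullet {\cal I}_\Gamma = \langle \Gamma, f_i(x)\tfrac{{\rm d}x}{x}\rangle$, and then applying $\sigma_{s_i}$ via the first formula in \eqref{eq:shiftcohom} multiplies the cocycle by $f_i^{-1}$, giving $\sigma_{s_i} f_i(\sigma_\nu)\bullet {\cal I}_\Gamma = \langle \Gamma, f_i^{-1} f_i \tfrac{{\rm d}x}{x}\rangle = \langle\Gamma,\tfrac{{\rm d}x}{x}\rangle = {\cal I}_\Gamma$. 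Therefore $(1 - \sigma_{s_i} f_i(\sigma_\nu))\bullet {\cal I}_\Gamma = 0$. One has to be a little careful that the shifts in $\nu$ commute with the shift in $s_i$ on the nose, which they do in $R$ up to the coefficient relations, but since $\tfrac{{\rm d}x}{x}$ carries no rational coefficient in $(s,\nu)$ there is nothing extra to track.

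\textbf{The operators in \eqref{eq:J2}.} This is the step I expect to be the main obstacle, because it is where the twisted differential enters and one must be careful about the order of operators and the $\sigma_{\nu_j}^{-1}$. The idea is to apply Lemma \ref{lem:modeverything} to the $0$-form $\psi = 1 \in \Omega^0(X)$, whose twisted differential is $\nabla_{\omega(s,\nu)}(1) = \omega(s,\nu) = \big(\sum_j g_j(x)\,{\rm d}x_j\big)$ with $g_j = \tfrac{\nu_j}{x_j} - \sum_i s_i \tfrac{\partial f_i/\partial x_j}{f_i}$, so that $\langle \Gamma, \tfrac{{\rm d}x}{x}\rangle$-type pairings with $g_j \,{\rm d}x_j \wedge (\text{rest})$ vanish in cohomology. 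Concretely, for each $j$ one shows $0 = \langle \Gamma, g_j(x)\,\tfrac{{\rm d}x}{x}\rangle$ (interpreting $\tfrac{{\rm d}x}{x}$ appropriately: $g_j\,{\rm d}x_j \wedge \bigwedge_{k\ne j}\tfrac{{\rm d}x_k}{x_k}$, which up to sign is the image of $\bigwedge_{k\ne j}\tfrac{{\rm d}x_k}{x_k}$ under $\nabla_\omega$). Expanding $g_j$: the term $\tfrac{\nu_j}{x_j}\tfrac{{\rm d}x}{x}$ gives $\nu_j \langle \Gamma, x_j^{-1}\tfrac{{\rm d}x}{x}\rangle$, which by the cohomology action is $\nu_j \,\sigma_{\nu_j}^{-1}\bullet {\cal I}_\Gamma$ (shifting $\nu_j \mapsto \nu_j - 1$ corresponds to multiplying the cocycle by $x_j^{-1}$, reading \eqref{eq:shiftcohom} backwards). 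The term $-s_i \tfrac{\partial_{x_j} f_i}{f_i}\tfrac{{\rm d}x}{x}$: writing $\partial_{x_j} f_i = \sum_\alpha c_{i,\alpha}\alpha_j x^{\alpha - e_j}$, one recognizes $-s_i f_i^{-1}\big(\tfrac{\partial f_i}{\partial x_j}\big)(x)\tfrac{{\rm d}x}{x}$ as $-s_i \cdot \sigma_{s_i}\big(\tfrac{\partial f_i}{\partial x_j}\big)(\sigma_\nu)\bullet {\cal I}_\Gamma$, because $\big(\tfrac{\partial f_i}{\partial x_j}\big)(\sigma_\nu)$ produces the cocycle $\big(\tfrac{\partial f_i}{\partial x_j}\big)(x)\tfrac{{\rm d}x}{x}$ and $\sigma_{s_i}$ divides by $f_i$. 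Summing, $0 = \big(\sigma_{\nu_j}^{-1}\nu_j - \sum_i s_i \sigma_{s_i}\big(\tfrac{\partial f_i}{\partial x_j}\big)(\sigma_\nu)\big)\bullet {\cal I}_\Gamma$, which is exactly the generator \eqref{eq:J2}. The one genuinely delicate point is bookkeeping the non-commutativity: $s_i$ and $\sigma_{s_i}$ must appear in the order written, and likewise $\nu_j$ sits to the right of $\sigma_{\nu_j}^{-1}$ — one should double-check these by tracing through \eqref{eq:s-shift1}--\eqref{eq:nu-shift2} that the rational coefficients $d_p(s,\nu)$ get shifted consistently, which is precisely the content of the commutator relations defining $R$. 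Once both families of generators are shown to annihilate ${\cal I}_\Gamma$, the left-ideal property of ${\rm Ann}_R({\cal I}_\Gamma)$ finishes the proof.
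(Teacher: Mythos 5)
Your verification of \eqref{eq:J1} is correct and coincides with the paper's argument, and your overall strategy for \eqref{eq:J2} (exhibit an exact $n$-form and invoke Lemma \ref{lem:modeverything}) is also the paper's. But the delicate point you flagged is exactly where the argument as written breaks. The form whose exactness you actually establish is $\nabla_\omega\bigl(\bigwedge_{k\ne j}\tfrac{{\rm d}x_k}{x_k}\bigr)$; since ${\rm d}$ kills that potential, this equals $g_j\,{\rm d}x_j\wedge\bigwedge_{k\ne j}\tfrac{{\rm d}x_k}{x_k}=(-1)^{j-1}\,x_j g_j\,\tfrac{{\rm d}x}{x}=(-1)^{j-1}\bigl(\nu_j-\sum_i s_i\,x_j\tfrac{\partial f_i/\partial x_j}{f_i}\bigr)\tfrac{{\rm d}x}{x}$, i.e.\ it carries an extra factor $x_j$ relative to the form $g_j\tfrac{{\rm d}x}{x}$ that you then expand. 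The form $g_j\tfrac{{\rm d}x}{x}=\bigl(\tfrac{\nu_j}{x_j}-\sum_i s_i\tfrac{\partial f_i/\partial x_j}{f_i}\bigr)\tfrac{{\rm d}x}{x}$ is in general \emph{not} exact: modulo exact forms it equals $\tfrac{1}{x_j}\tfrac{{\rm d}x}{x}$, which is a nonzero class for generic $(s,\nu)$ (for the beta integral, \eqref{eq:betarelation} with $(a,b)=(0,-1)$ gives $[\tfrac{1}{x}\tfrac{{\rm d}x}{x}]=\tfrac{\nu-s}{\nu-1}[\tfrac{{\rm d}x}{x}]\neq 0$). Correspondingly, the operator your expansion produces is $\nu_j\sigma_{\nu_j}^{-1}-\sum_i s_i\sigma_{s_i}\tfrac{\partial f_i}{\partial x_j}(\sigma_\nu)$, which does \emph{not} annihilate ${\cal I}_\Gamma$: the generator \eqref{eq:J2} is $\sigma_{\nu_j}^{-1}\nu_j-\cdots=(\nu_j-1)\sigma_{\nu_j}^{-1}-\cdots$, and the discrepancy $\sigma_{\nu_j}^{-1}\bullet{\cal I}_\Gamma=\langle\Gamma,x_j^{-1}\tfrac{{\rm d}x}{x}\rangle$ is generically nonzero. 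So the "Summing" step asserts an identity that your intermediate formulas do not yield, and one of the claims along the way ($\langle\Gamma,g_j\tfrac{{\rm d}x}{x}\rangle=0$) is false.

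The repair is local, in either of two ways. The paper's choice of potential is $\psi=(-1)^{j-1}\tfrac{{\rm d}x_{\hat j}}{x_1\cdots x_n}$, i.e.\ your form divided by $x_j$; now ${\rm d}\psi=-\tfrac{1}{x_j}\tfrac{{\rm d}x}{x}$ no longer vanishes and supplies precisely the shift $\nu_j\mapsto\nu_j-1$, so $\nabla_\omega(\psi)=\bigl(\tfrac{\nu_j-1}{x_j}-\sum_i s_i\tfrac{\partial f_i/\partial x_j}{f_i}\bigr)\tfrac{{\rm d}x}{x}$, which is exactly what $\bigl(\sigma_{\nu_j}^{-1}\nu_j-\sum_i s_i\sigma_{s_i}\tfrac{\partial f_i}{\partial x_j}(\sigma_\nu)\bigr)\bullet{\cal I}_\Gamma$ pairs $\Gamma$ against; this is where the commutator $[\sigma_{\nu_j}^{-1},\nu_j]$ is accounted for. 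Alternatively, keep your potential: exactness of $\nabla_\omega\bigl(\bigwedge_{k\ne j}\tfrac{{\rm d}x_k}{x_k}\bigr)$ shows that $\nu_j-\sum_i s_i\sigma_{s_i}\,\bigl(x_j\tfrac{\partial f_i}{\partial x_j}\bigr)(\sigma_\nu)$ annihilates ${\cal I}_\Gamma$, and left-multiplying by $\sigma_{\nu_j}^{-1}$ (legitimate since the annihilator is a left ideal, and $\bigl(x_j\tfrac{\partial f_i}{\partial x_j}\bigr)(\sigma_\nu)=\sigma_{\nu_j}\tfrac{\partial f_i}{\partial x_j}(\sigma_\nu)$) recovers \eqref{eq:J2}. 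Note also that, as in the paper, this second family of generators is the only place the hypothesis that $\Gamma$ is a cycle is used.
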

\begin{proof}
We need to show that the operators \eqref{eq:J1} and \eqref{eq:J2} annihilate ${\cal I}_\Gamma(s,\nu)$. Let $f_i(x) = \sum_{\alpha} c_{i,\alpha} \cdot x^\alpha$, where $c_{i,\alpha} \in \mathbb{C}$ for $i=1,2,\ldots,\ell$. Hence, using \eqref{eq:nu-shift2}, we find
\begin{equation}
f_i(\sigma_\nu) \bullet {\cal I}_\Gamma(s,\nu) =  \left  \langle \Gamma, \,   \sum_{\alpha} c_{i,\alpha} \cdot x^{\alpha} \frac{{\rm d}x}{x} \right \rangle = \left \langle \Gamma, f_i(x) \frac{{\rm d} x}{x} \right \rangle.
\end{equation}
Together with \eqref{eq:s-shift2} we find
$\sigma_{s_i} f_i(\sigma_\nu) \bullet {\cal I}_\Gamma(s,\nu) = \langle \Gamma, \tfrac{{\rm d}x}{x}\rangle = {\cal I}_\Gamma(s,\nu)$, which shows that \eqref{eq:J1} annihilates ${\cal I}_\Gamma(s,\nu)$. Notice that, for this, we do not use the fact that $\Gamma$ is a cycle. To show that the operators in \eqref{eq:J2} annihilate ${\cal I}_\Gamma(s,\nu)$ as well, we compute
\[
\frac{\partial f_i}{\partial x_j}(\sigma_\nu) \bullet {\cal I}_\Gamma(s,\nu) \,= \, \left \langle \Gamma, \frac{\partial f_i}{\partial x_j} \frac{{\rm d}x}{x} \right \rangle \quad \text{and} \quad 
\sigma^{-1}_{\nu_j}  \nu_j \bullet {\cal I}_\Gamma(s,\nu) \, = \, \left \langle \Gamma, \frac{\nu_j - 1}{x_j} \frac{{\rm d}x}{x} \right \rangle .
\]
We combine these identities to get
\[
\left(\sigma_{\nu_j}^{-1}  \nu_j - \sum_{i=1}^{\ell} s_i \cdot \sigma_{s_i}   \frac{\partial f_i}{\partial x_j}(\sigma_\nu)  \right) \bullet {\cal I}_\Gamma(s,\nu) \, = \,  \left \langle \Gamma, \left( \frac{\nu_j-1}{x_j} - \sum_{i=1}^{\ell} s_i \frac{\frac{\partial f_i}{\partial x_j}(x) }{f_i(x)}\right) \frac{{\rm d}x}{x} \right \rangle.
\]
To show that this is zero for any cycle $\Gamma$, we need to show that the $n$-form on the right is exact (Lemma \ref{lem:modeverything}). That is, we need to find $\psi$ such that it equals $\nabla_\omega(\psi)$. The solution is
\[ \left( \frac{\nu_j-1}{x_j} - \sum_{i=1}^{\ell} s_i \frac{\frac{\partial f_i}{\partial x_j}(x) }{f_i(x)}\right) \frac{{\rm d}x}{x} \, = \, \nabla_{\omega} \left((-1)^{j-1} \frac{{\rm d}x_{\hat{j}}}{x_1 \cdots x_n} \right) ,\]
where ${\rm d}x_{\hat{j}}$ is the $(n-1)$-form ${\rm d} x_1 \wedge \cdots \wedge {\rm d}x_{j-1} \wedge {\rm d}x_{j+1} \wedge \cdots \wedge {\rm d}x_n$.
\end{proof}

\begin{example} \label{ex:contigbeta}
The operators $S_1$ and $\sigma_{\nu}^{-1} S_2$ from \eqref{eq:shiftbeta} are \eqref{eq:J1} and \eqref{eq:J2} for $n = \ell = 1$ and $f = 1-x$. We can use these operators to obtain the relation \eqref{eq:betarelation}. Observe that 
\[ s \cdot (1-\sigma_s + \sigma_s \sigma_\nu) - (\nu + s \sigma_\nu \sigma_s) \, = \, s - s \sigma_s - \nu \, \in \, J. \]
Since $J$ annihilates ${\cal I}_\Gamma = \langle \Gamma, {\rm d}x/x \rangle$  for all $\Gamma$ (Proposition \ref{prop:annR}), this means $\sigma_s \bullet [{\rm d}x/x] = (s-\nu)s^{-1} \bullet [{\rm d}x/x]$. Applying $\sigma_s^{-1}$ from the left (necessarily, because $J$ is a \emph{left} ideal), we get 
\[ \sigma_s^{-1} \left( \frac{s-\nu}{s}\right) \, = \, \left (\frac{s-1-\nu}{s-1} \right)\sigma_s^{-1} \, = \, 1 \, \, {\rm mod} \, \, J, \quad \text{i.e.,} \quad \sigma_s^{-1} \bullet \left [\frac{{\rm d}x}{x} \right] \, = \, \left ( \frac{1-s}{1+\nu-s} \right) \bullet \left [\frac{{\rm d}x}{x} \right] .\]
Similarly since $\sigma_s^{-1}(1-\sigma_s + \sigma_s\sigma_\nu) \in J$, we derive that 
\[ \sigma_\nu \bullet \left [\frac{{\rm d}x}{x} \right] \, = \, (1-\sigma_s^{-1}) \bullet \left [\frac{{\rm d}x}{x} \right]  \, = \, \left(\frac{\nu}{1 + \nu - s} \right) \bullet \left [\frac{{\rm d}x}{x} \right] .\]
Now, we use these identities to write $\sigma_s^a\sigma_\nu^b \bullet [{\rm d} x/x] = [x^b/(1-x)^a \cdot {\rm d} x/x]$ in terms of $[{\rm d} x/x]$:
\[\sigma_s^a \sigma_\nu^b  \bullet \left [\frac{{\rm d}x}{x} \right] \, = \, \sigma_s^a \sigma_\nu^{b-1} \left(\tfrac{\nu}{1 + \nu - s} \right ) \bullet \left [\frac{{\rm d}x}{x} \right] \, = \, \sigma_s^a \sigma_\nu^{b-2} \left(\tfrac{\nu+1}{2 + \nu - s} \right ) \left( \tfrac{\nu}{1 + \nu - s} \right) \bullet \left [\frac{{\rm d}x}{x} \right] \, = \, \ldots. \]
Here the second equality uses the commutation rules. Suppose $b, a\in \mathbb{Z}_{>0}$ are positive. After repeating this step $b$ times, we begin expanding $\sigma_s^a$:
\begin{align*}
    \sigma_s^a \sigma_\nu^b  \bullet \left [\frac{{\rm d}x}{x} \right]\, &= \, \sigma_s^a \frac{(\nu)_b}{(1+\nu-s)_b} \bullet \left [\frac{{\rm d}x}{x} \right] \, = \, \sigma_s^{a-1} \frac{(\nu)_b}{(\nu-s)(1+\nu-s)_{b-1}} \left( \tfrac{s-\nu}{s} \right ) \bullet \left [\frac{{\rm d}x}{x} \right] \\
    &= \, \sigma_s^{a-2} \frac{(\nu)_b}{(\nu - s - 1)(\nu-s)(1+\nu-s)_{b-2}} \left( \tfrac{s-\nu+1}{s+1} \right ) \left( \tfrac{s-\nu}{s} \right ) \bullet \left [\frac{{\rm d}x}{x} \right] \\
    &= \, \cdots \, = \, \frac{(\nu)_b}{(1+\nu-s)_{b-a}} \left (\frac{1}{(-s)(-s-1) \cdots (-s-a+1)} \right ) \bullet \left [\frac{{\rm d}x}{x} \right].
\end{align*}
The rational function we obtain is precisely that of \eqref{eq:betarelation}. Notice that the numerator factors $(s-\nu)$, $(s-\nu+1)$, \ldots cancel with the denominators $(\nu-s)$, $(\nu-s-1)$, \ldots, and the minus signs are absorbed in the denominator factors $s, s+1, \ldots$. The reader should check that if $a,b$ satisfy different sign conditions, we arrive at the same formula. 
\end{example}

Example \ref{ex:contigbeta} illustrates the concept of \emph{contiguity relations} for the Euler beta integral. In general, the action of shift operators can be captured by \emph{contiguity matrices}. Let $[\phi_1], \ldots, [\phi_\chi]$ be a basis for the twisted cohomology $H^n(X,\omega(s,\nu))$ (for generic $s,\nu$). There are $\chi = (-1)^n \cdot \chi(X)$ such basis elements by Corollary \ref{cor:dimchi}. We assume the $[\phi_k]$ have coefficients that are rational functions in $s,\nu$. There are $\chi \times \chi$-matrices $C_{s_i}$, $C_{\nu_j}$ such that 
\[ \sigma_{s_i} \bullet \begin{pmatrix}
    \langle \Gamma, \phi_1 \rangle \\ \vdots \\ \langle \Gamma, \phi_\chi \rangle
\end{pmatrix} \, = \, C_{s_i} \cdot \begin{pmatrix}
    \langle \Gamma, \phi_1 \rangle \\ \vdots \\ \langle \Gamma, \phi_\chi \rangle
\end{pmatrix}, \quad \quad \sigma_{\nu_j} \bullet \begin{pmatrix}
    \langle \Gamma, \phi_1 \rangle \\ \vdots \\ \langle \Gamma, \phi_\chi \rangle
\end{pmatrix} \, = \, C_{\nu_j} \cdot \begin{pmatrix}
    \langle \Gamma, \phi_1 \rangle \\ \vdots \\ \langle \Gamma, \phi_\chi \rangle
\end{pmatrix} . \]
Here $\sigma_{s_i}$ and $\sigma_{\nu_j}$ act entry-wise on vectors, and the contiguity matrices $C_{s_i}$, $C_{\nu_j}$ have entries in $\mathbb{C}(s,\nu)$. For more on these matrices and how to compute them, see \cite[Section 5]{matsubara2023twisted}.

\begin{remark}
In the context of Feynman integrals, special choices of contiguity relations are known as \emph{dimension-shift identities}, because they relate Feynman integrals evaluated in different space-time dimensions. We refer to \cite[Section 6.2]{weinzierl2022feynman} for more details.
\end{remark}

\begin{remark}
Shift relations \eqref{eq:shiftcohom} give a practical way of enlarging the Nilsson-Passare domain of convergence of Euler integrals in $(s,\nu)$, as explained in Section~\ref{sec:convergence}. For instance, the shift operator $\sigma_\nu^{-1}\nu - \tilde{s} \sigma_{\tilde{s}}$ annihilates the beta integral $\int_0^\infty y^\nu/(1+y)^{\tilde{s}} {\rm d}y/y$. Hence 
\begin{equation} \label{eq:extendNP} (\nu-1) \cdot \int_{\mathbb{R}_+} \frac{y^{\nu-1}}{(1+y)^{\tilde{s}}} \, \frac{{\rm d}y}{y} \, = \, \tilde{s} \cdot \int_{\mathbb{R}_+} \frac{y^\nu}{(1+y)^{\tilde{s}+1}} \, \frac{{\rm d}y}{y}.
\end{equation}
This is an equality of meromorphic functions after replacing $\mathbb{R}_+$ with its regularizer from Theorem* \ref{thm:backtosec1}. Suppose $\tilde{s} \in \mathbb{R}_+$. The convergence region from Theorem \ref{thm:convergence} for the left hand side is ${\rm Re}(\nu) \in (1,\tilde{s}+1)$, while that of the right hand side is ${\rm Re}(\nu) \in (0,\tilde{s}+1)$. Hence, \eqref{eq:extendNP} is used to evaluate the meromorphic continuation from Theorem \ref{thm:meromorphiccontinuation} for ${\rm Re}(\nu) \in (0,1)$.
\end{remark}

\subsection{Differential equations} \label{sec:differential}

As illustrated in Example \ref{ex:4.2}, Euler integrals are annihilated by differential operators in the coefficients of $f_i$. These are new parameters denoted by $z_{i,\alpha}$, i.e.,
\begin{equation} \label{eq:fiz} 
f_i \, = \, \sum_{\alpha \,  \in \, A_i} z_{i,\alpha} \cdot x^\alpha, \quad i = 1, \ldots, \ell 
\end{equation}
Here $A_i = {\rm supp}(f_i) \in \mathbb{Z}^n$ is the support of $f_i$, in the sense of Definition \ref{def:newton}. We fix complex parameters 
$s = (s_1, \ldots, s_\ell) \in \mathbb{C}^\ell$ and $\nu = (\nu_1, \ldots, \nu_n) \in \mathbb{C}^n$.
The variety $X$ defined in \eqref{eq:veryaffinevariety} is also dependent on $z$. 
For this reason, we write $X_z$ instead of $X$.
The Euler integral \eqref{eq:integralintro}, seen as a function of the coefficients $z=(z_{i,\alpha})_{i,\alpha}$, defines a holomorphic function on an open subset $U$ of coefficient space $\mathbb{C}^{A} = \mathbb{C}^{A_1} \times \cdots \times \mathbb{C}^{A_\ell}$. To define this function, we need to specify how the twisted integration cycle $\Gamma$ varies with $z$. For a fixed set of coefficients $z^*=(z^*_{i,\alpha})_{i,\alpha} \in \mathbb{C}^A$,
let $[\Gamma(z^*)]\in H_n(X_{z^*};-\omega(z^*))$ be a twisted cycle.
The open set $U$ is a \emph{sufficiently small} neighborhood $U$ of $z^*$. The choice $[\Gamma(z^*)]$ gives rise to a family of cycles $[\Gamma(z)]\in H_n(X_z;-\omega(z))$ defined for $z\in U$. For this, we fix the singular $n$-simplices, and vary the sections of ${\cal L}_{-\omega(z)}$ in the only sensible way.
That is, $\Gamma(z^*) = \sum_{p} d_p \cdot \Delta_p \otimes \tau_p(z^*)$, where $\tau_p(z^*) \in {\cal L}_{-\omega(z^*)}$ is a branch of $f(x;z^*)^{-s}x^\nu$ which depends holomorphically on $z$, and
\begin{equation} \label{eq:gammaofz}
\Gamma(z) \, = \, \sum_{p} d_p \cdot \Delta_p \otimes \tau_p(z) \, \in \, H^n(X_z,-\omega(z)). 
\end{equation}
Here $d_p \in \mathbb{C}$ are constants. This was illustrated for the beta integral in Example \ref{ex:4.2}. It is crucial that, with this construction, the twisted boundary $\partial_{\omega(z)}(\Gamma(z))$ is zero for all $z \in U$.

\begin{proposition}
    Let $U \ni z^*$ and $[\Gamma(z)]\in H_n(X_z;-\omega)$ be as above. The function 
    \begin{equation}\label{eq:I_Gamma}
    \mathcal{I}_\Gamma: U \longrightarrow \mathbb{C}, \quad  z\longmapsto \int_{\Gamma(z)}f(x;z)^{-s}x^\nu\frac{{\rm d}x}{x}\in\mathbb{C} 
    \end{equation}
    is holomorphic on $U$. 
\end{proposition}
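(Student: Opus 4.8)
The plan is to reduce the statement to a finite collection of ordinary parameter integrals over compact simplices and then to quote the standard theorem on holomorphy of integrals depending holomorphically on a parameter. Writing $\Gamma(z) = \sum_p d_p \cdot \Delta_p \otimes \tau_p(z)$ as in \eqref{eq:gammaofz}, Definition \ref{def:integral} gives
\[
\mathcal{I}_\Gamma(z) \, = \, \sum_p d_p \int_{\Delta_p} \tau_p(z)(x) \, \frac{{\rm d}x}{x},
\]
a finite sum with constant coefficients $d_p$, so it suffices to treat one summand $z \mapsto \int_{\Delta_p} \tau_p(z)(x)\,\frac{{\rm d}x}{x}$.

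First I would make precise the assertion, implicit in the construction of $\Gamma(z)$, that $\tau_p(z)(x)$ extends to a function jointly holomorphic in $(z,x)$ on a neighborhood of $U \times \Delta_p$. The image $K_p$ of the singular simplex $\Delta_p$ is a compact subset of $X_{z^*}$, so the polynomials $f_i(x;z^*)$ and the coordinates $x_j$ are nonvanishing on $K_p$; by continuity there is an open neighborhood $V_p \supset K_p$ in $\mathbb{C}^n$ and a (polydisk) neighborhood $U$ of $z^*$ such that $f_i(x;z) \neq 0$ and $x_j \neq 0$ for all $z \in U$ and $x \in V_p$. Shrinking $U$ if necessary, the branch of $f(x;z)^{-s}x^\nu = \exp\big(-\sum_i s_i \log f_i(x;z) + \sum_j \nu_j \log x_j\big)$ chosen at $z^*$ continues analytically over $U \times V_p$ (here simple connectedness of the polydisk $U$ and the monodromy theorem are used), and this continuation is holomorphic in $(z,x)$ because the $f_i$ and the $x_j$ are polynomial and $\log$ is holomorphic off its cut. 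Its restriction to $\{z\} \times \Delta_p$ is exactly $\tau_p(z)$. One must choose a single $U$ valid for all finitely many $p$ simultaneously, which is possible since the intersection of finitely many neighborhoods of $z^*$ is again a neighborhood; this common $U$ is the one in the statement, and the boundary cancellations that make $\Gamma(z^*)$ a cycle persist for every $z \in U$ precisely because the $\tau_p(z)$ are obtained by this consistent analytic continuation.

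Next I would invoke differentiation under the integral sign, exactly as in the discussion following \eqref{eq:intsnusimplex} (see \cite[Chapter XVII, Theorem 8.2]{lang2013undergraduate}, or Morera's theorem combined with Fubini). On the compact set $\Delta_p$ the integrand $\tau_p(z)(x)\, g(x)$, where $\tfrac{{\rm d}x}{x} = g(x)\,{\rm d}x$, is continuous in $(z,x)$, holomorphic in each coordinate $z_{i,\alpha}$ for fixed $x$, and uniformly bounded for $z$ ranging over any compact subset of $U$. Hence $\int_{\Delta_p} \tau_p(z)(x)\, g(x)\,{\rm d}x$ is separately holomorphic in the $z_{i,\alpha}$ and locally bounded, so by Osgood's lemma it is jointly holomorphic on $U$. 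Summing over $p$ gives that $\mathcal{I}_\Gamma$ is holomorphic on $U$.

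The only genuinely subtle point, and hence the main obstacle, is the first step: showing that the branches $\tau_p(z)$ really can be chosen to depend holomorphically on $z$ and in a way compatible across all the simplices $\Delta_p$, so that $\Gamma(z)$ remains a twisted cycle throughout $U$. This is the fibrewise monodromy theorem together with the bookkeeping needed to pick one neighborhood $U$ of $z^*$ valid for every $\Delta_p$ at once; once that is settled, the holomorphy of $\mathcal{I}_\Gamma$ is the textbook statement about parameter integrals.
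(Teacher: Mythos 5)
Your proposal is correct and follows essentially the same route as the paper: decompose $\mathcal{I}_\Gamma(z)$ via Definition \ref{def:integral} into a finite sum $\sum_p d_p \int_{\Delta_p} \tau_p(z)(x)\,\frac{{\rm d}x}{x}$ over fixed simplices and conclude by differentiation under the integral sign, citing the same theorem of Lang. The extra work you do on the holomorphic dependence of the branches $\tau_p(z)$ on $z$ is material the paper absorbs into the construction of $\Gamma(z)$ preceding the proposition (``as above''), so it is a welcome elaboration rather than a different argument.
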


\begin{proof}
    Since $\tau(z)$ is holomorphic in $z$, it suffices to observe that for $z \in U$,
    \[ {\cal I}_{\Gamma}(z) \, =\, \sum_p d_p \cdot \int_{\Delta_p} \tau(z)(x) \, \frac{{\rm d}x}{x}.\]
    The theorem follows from the definition of a holomorphic function and differentiation under the integral sign \cite[Chapter XVII, Theorem 8.2]{lang2013undergraduate}.
\end{proof}

Our goal is to derive a system of differential equations satisfied by ${\cal I}_{\Gamma}(z)$. This is an example of a class of such systems, called \emph{GKZ systems} (after Gelfand, Kapranov and Zelevinsky) or \emph{$A$-hypergeometric systems}. We introduce these in general, and then specialize to our integrals. 
The proof of the main theorem in this section (Theorem \ref{thm:maingkz}) uses the theory of \emph{$D$-modules}. Here the ring $D$ is the \emph{Weyl algebra}, which plays an analogous role as that of the ring $R$ of difference operators (see Section \ref{sec:difference}). To state Theorem \ref{thm:maingkz}, it is unnecessary to introduce $D$-modules. We refer the interested reader to \cite{coutinho1995primer} for a nice introduction.

Let $d$ be a positive integer and let $A\subset\mathbb{Z}^d$ be a finite subset.
To each $\alpha\in A$, we associate a complex variable $z_\alpha$ and a partial derivative operator $\partial_\alpha =\frac{\partial}{\partial z_\alpha}$.
For a function $f(z)$ of $z=(z_\alpha)_{\alpha\in A}$, its partial derivative $\frac{\partial f}{\partial z_\alpha}(z)$ with respect to $z_\alpha$ is denoted by $\partial_\alpha f(z)$.
The {\it toric ideal} $I_A\subset\mathbb{C}[\partial_\alpha,\, \alpha\in A]$ is an ideal generated by all binomials 
\[
\prod_{\alpha\in A}\partial_\alpha^{u_\alpha}-\prod_{\alpha\in A}\partial_\alpha^{v_\alpha},
\]
where $u=(u_\alpha)_{\alpha\in A},\, v=(v_\alpha)_{\alpha\in A}\in\mathbb{N}^A$ are such that $A \cdot (u-v) = 0$. That is,
\begin{equation}\label{eq:toric_constraint}
 \sum_{\alpha\in A}u_\alpha\alpha=\sum_{\alpha\in A}v_\alpha\alpha.
\end{equation}
Of course, there are infinitely many integer vectors $u-v$ satisfying $A \cdot (u-v)$, but finitely many suffice to generate the ideal $I_A$. Fix a vector $\beta\in\mathbb{C}^d$. The \emph{GKZ system} $H_A(\beta)$ associated to $A$ and $\beta$ is the following system of partial differential equations in $f(z)$: 
\begin{equation}\label{eq:GKZ}
H_A(\beta): \quad \sum_{\alpha\in A}z_\alpha\partial_\alpha\bullet f(z)\alpha- f(z)\beta \, = \, 0 \quad \text{and} \quad 
    P(\partial) \bullet f(z)\, = \, 0 \quad \text{ for } P(\partial)\in I_A.
\end{equation}
Note that the first equation of \eqref{eq:GKZ} is an identity of vectors in $\mathbb{C}^d$, and it is enough to check that $P(\partial)f = 0$ for a finite set of generators of $I_A$.
On an open subset $U\subset \mathbb{C}^A$, we define the space of solutions $\sol_{H_A(\beta)}(U)$ of $H_A(\beta)$ as the complex vector space 
\[\sol_{H_A(\beta)}(U) \, = \, \{ \,  f : U \rightarrow \mathbb{C} \text{ holomorphic } \, : \, f \text{ satisfies \eqref{eq:GKZ} } \}. \]
Our Euler integral \eqref{eq:I_Gamma} satisfies the GKZ system specified by the following parameters. Set $d = \ell + n$.
The \emph{Cayley configuration} $A \subset \mathbb{Z}^{d}$ of $A_1, \ldots, A_\ell$ is
    \begin{equation} \label{eq:cayley2}
        A \, = \, \{ (e_i,\alpha) \, :\, \alpha \in A_i, \,  i = 1, \ldots, \ell \} \quad \subset \mathbb{Z}^{d}.
    \end{equation} 
We have seen this in \eqref{eq:CayleyConfiguration}. The vector $\beta$ is $-(s,\nu) \in \mathbb{C}^d$.

\begin{proposition} \label{prop:gkz}
For $A, \beta$ as above, and $I_{\Gamma}(z)$ as in \eqref{eq:I_Gamma}, we have ${\cal I}_{\Gamma}(z) \in \sol_{H_A(\beta)}(U)$.
 \end{proposition}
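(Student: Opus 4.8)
The proof is a direct verification that the integral $\mathcal{I}_\Gamma(z)$ satisfies the two families of equations comprising $H_A(\beta)$ for $A$ the Cayley configuration \eqref{eq:cayley2} and $\beta = -(s,\nu)$. Throughout I will differentiate under the integral sign, which is justified by \cite[Chapter XVII, Theorem 8.2]{lang2013undergraduate} exactly as in the preceding proposition, and I will use that $\partial_{i,\alpha} \bullet (f_i^{-s}) = -s_i f_i^{-s-1} x^\alpha$ while $\partial_{i,\alpha}$ kills $f_{i'}$ for $i' \neq i$ and kills $x^\nu$.

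\emph{Step 1: the binomial (toric) equations.} Let $u = (u_{i,\alpha})$, $v = (v_{i,\alpha}) \in \mathbb{N}^A$ satisfy $A\cdot(u-v) = 0$. Writing out the Cayley structure, the first $\ell$ coordinates of $A \cdot (u-v) = 0$ say that $\sum_\alpha u_{i,\alpha} = \sum_\alpha v_{i,\alpha} =: m_i$ for each $i$, and the last $n$ coordinates say $\sum_{i,\alpha} u_{i,\alpha}\, \alpha = \sum_{i,\alpha} v_{i,\alpha}\, \alpha$. Applying $\prod_{i,\alpha}\partial_{i,\alpha}^{u_{i,\alpha}}$ to $f^{-s}x^\nu = \prod_i f_i^{-s_i} \cdot x^\nu$ and using that distinct $f_i$ are killed by $\partial_{i',\alpha}$, one computes
\[
\prod_{i,\alpha}\partial_{i,\alpha}^{u_{i,\alpha}} \bullet \Big( \prod_i f_i^{-s_i}\, x^\nu\Big) \; = \; \prod_i \Big( (-s_i)(-s_i-1)\cdots(-s_i - m_i + 1)\, f_i^{-s_i - m_i} \Big)\; x^{\nu + \sum_{i,\alpha} u_{i,\alpha}\alpha}.
\]
The scalar prefactor depends only on the $m_i$, hence is the same for $u$ and for $v$; the power of $x$ is the same by the last $n$ coordinates of the constraint; and $f_i^{-s_i - m_i}$ is the same since $m_i = \sum_\alpha u_{i,\alpha} = \sum_\alpha v_{i,\alpha}$. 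Therefore $\big(\prod\partial^{u} - \prod\partial^{v}\big) \bullet (f^{-s}x^\nu) = 0$ pointwise on $X_z$, and integrating $\tfrac{\mathrm{d}x}{x}$ over $\Gamma(z)$ gives $P(\partial)\bullet \mathcal{I}_\Gamma(z) = 0$ for all generators $P \in I_A$. Note this step does not use that $\Gamma$ is a cycle.

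\emph{Step 2: the Euler (homogeneity) equations.} For the operator $\sum_{(e_i,\alpha)\in A} z_{i,\alpha}\partial_{i,\alpha}$, split into the first-$\ell$ block and the last-$n$ block. For a fixed $i$, $\sum_\alpha z_{i,\alpha}\partial_{i,\alpha} \bullet f_i^{-s_i} = -s_i f_i^{-s_i-1}\sum_\alpha z_{i,\alpha}x^\alpha = -s_i f_i^{-s_i}$ (Euler's identity for $f_i$), so $\sum_{i,\alpha} z_{i,\alpha}\partial_{i,\alpha} \bullet (f^{-s}x^\nu) = (-s_1 - \cdots - s_\ell + 0)\,(f^{-s}x^\nu)$ in the block grading; more precisely, the $(e_i,\alpha)$ coordinate of the vector identity \eqref{eq:GKZ} reads $\big(\sum_\alpha z_{i,\alpha}\partial_{i,\alpha}\big)\bullet(f^{-s}x^\nu) = -s_i\,(f^{-s}x^\nu)$, matching $\beta_i = -s_i$. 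For the $n$ remaining coordinates, one needs $\sum_{i,\alpha} z_{i,\alpha}\,\alpha_j\,\partial_{i,\alpha} \bullet (f^{-s}x^\nu) = -\nu_j\,(f^{-s}x^\nu)$. Here I would integrate by parts / use exactness: the left side equals $\big(\sum_i x_j \tfrac{\partial f_i}{\partial x_j} \cdot (-s_i) f_i^{-1}\big)f^{-s}x^\nu = x_j\,\omega_j \cdot f^{-s}x^\nu$ where $\omega = \sum_j g_j\,\mathrm{d}x_j$ as in \eqref{eq:criteq}, up to the $\nu_j$ term coming from the torus action $x_j\tfrac{\partial}{\partial x_j} x^\nu = \nu_j x^\nu$. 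Assembling, the required identity follows from the fact that $(g_j - \tfrac{\nu_j}{x_j})\,\tfrac{\mathrm{d}x}{x}$ is $\nabla_\omega$-exact — this is exactly the computation in the proof of Proposition \ref{prop:annR}, where $\big(\tfrac{\nu_j-1}{x_j} - \sum_i s_i \tfrac{\partial_{x_j}f_i}{f_i}\big)\tfrac{\mathrm{d}x}{x} = \nabla_\omega\big((-1)^{j-1}\tfrac{\mathrm{d}x_{\hat j}}{x_1\cdots x_n}\big)$. Then Lemma \ref{lem:modeverything} (using that $\partial_{\omega(z)}\Gamma(z) = 0$, which is built into the construction \eqref{eq:gammaofz}) kills the exact piece after integration.

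\emph{The main obstacle.} The genuinely delicate point is the $n$-coordinate block of the Euler operator: a naive pointwise computation does \emph{not} make $\sum_{i,\alpha} z_{i,\alpha}\alpha_j\partial_{i,\alpha}\bullet(f^{-s}x^\nu)$ equal to $-\nu_j (f^{-s}x^\nu)$, because the derivative in $z$ produces $x_j\tfrac{\partial f_i}{\partial x_j}$ in the numerator rather than $\nu_j$. One must convert this into the target expression by recognizing the discrepancy as a total twisted derivative and then invoking that $\Gamma(z)$ is a twisted cycle for every $z\in U$ — this is precisely why the excerpt emphasizes that $\partial_{\omega(z)}(\Gamma(z)) = 0$. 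So the real content is bookkeeping: matching the $z$-derivative of the integrand against the torus-Euler operator $x_j\partial_{x_j}$ modulo $\nabla_\omega$-exact forms, and then appealing to Lemma \ref{lem:modeverything}. Everything else (Step 1, and the first-$\ell$ block of Step 2) is a routine algebraic identity requiring no properties of $\Gamma$ at all.
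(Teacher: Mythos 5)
Your overall route is the same as the paper's: you check the toric binomials pointwise on the integrand (correctly noting no cycle condition is needed there), you get the first $\ell$ Euler equations from the homogeneity identity $\sum_{\alpha} z_{i,\alpha}\partial_{(e_i,\alpha)}\bullet f_i^{-s_i}=-s_if_i^{-s_i}$, and you handle the last $n$ Euler equations by recognizing the leftover integrand as a $\nabla_\omega$-exact form and invoking $\partial_{\omega(z)}\Gamma(z)=0$ together with Lemma \ref{lem:modeverything}. That is exactly the structure of the paper's proof.

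However, the exactness claim you make in the final step is misstated, and as written that step does not go through. The $(\ell+j)$-th Euler equation requires the vanishing of $\int_{\Gamma(z)} f^{-s}x^\nu\bigl(\nu_j-\sum_i s_i f_i^{-1}x_j\tfrac{\partial f_i}{\partial x_j}\bigr)\tfrac{{\rm d}x}{x}$, i.e. of $\int_{\Gamma(z)} f^{-s}x^\nu\, x_jg_j\,\tfrac{{\rm d}x}{x}$ with $g_j$ as in \eqref{eq:criteq}; so the form that must be exact is $x_jg_j\,\tfrac{{\rm d}x}{x}$, not $\bigl(g_j-\tfrac{\nu_j}{x_j}\bigr)\tfrac{{\rm d}x}{x}$. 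The latter equals $-\sum_i s_i\tfrac{\partial_{x_j}f_i}{f_i}\tfrac{{\rm d}x}{x}$ and is in general not exact, and the identity you quote from the proof of Proposition \ref{prop:annR} is not the one needed here: its primitive carries an extra factor $1/x_j$, which is why it produces $\tfrac{\nu_j-1}{x_j}$ and $\tfrac{\partial_{x_j}f_i}{f_i}$ rather than $\nu_j$ and $\tfrac{x_j\partial_{x_j}f_i}{f_i}$. The correct primitive is the one without that factor:
\begin{equation*}
\nabla_\omega\left((-1)^{j-1}\bigwedge_{k\neq j}\frac{{\rm d}x_k}{x_k}\right)\;=\;x_jg_j\,\frac{{\rm d}x}{x}\;=\;\left(\nu_j-\sum_{i=1}^\ell s_i\,\frac{x_j\frac{\partial f_i}{\partial x_j}}{f_i}\right)\frac{{\rm d}x}{x},
\end{equation*}
which is precisely the coefficient produced by differentiating under the integral sign (this is the form used in the paper's proof, up to an apparent sign typo on $\nu_j$ in the displayed formula there; the $n=\ell=1$ case of Example \ref{ex:4.2} confirms the sign above). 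With this one-line replacement your argument coincides with the paper's; everything else in your write-up — Step 1, the first $\ell$-block of Step 2, and the emphasis that $\Gamma(z)$ is a twisted cycle for every $z\in U$ — is correct.
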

 Before proving Proposition \ref{prop:gkz}, we encourage the reader to check that \eqref{eq:diffbeta} is the GKZ system for the beta integral. In that example, the toric ideal $I_A$ is 0. Notice that Proposition \ref{prop:gkz} is independent of the choice of cycle $\Gamma(z)$.
\begin{proof}[Proof of Proposition \ref{prop:gkz}]
    By the definition of the Cayley configuration \eqref{eq:cayley2}, the constraint \eqref{eq:toric_constraint} for $u=\left(u_{i,\alpha}\right)_{\substack{i=1,\dots,\ell\\ \alpha\in A_i}},v=(v_{i,\alpha})_{\substack{i=1,\dots,\ell\\ \alpha\in A_i}}$ takes the following form :
    \begin{equation}\label{eq:76}
        u_i =\sum_{\alpha\in A_i}u_{i,\alpha}=\sum_{\alpha\in A_i}v_{i,\alpha}\,\, (i=1,\dots,\ell)\,\,\,\,\text{and}\,\,\,\, \sum_{i=1}^\ell\sum_{\alpha\in A_i}u_{i,\alpha}\alpha \, = \, \sum_{i=1}^\ell\sum_{\alpha\in A_i}v_{i,\alpha}\alpha. 
    \end{equation}
    By our construction of $\Gamma(z)$ in \eqref{eq:gammaofz}, the integration contours $\Delta_p \subset X$ are independent of $z$, and we may apply the operators $\partial_{(e_i,\alpha)}$ to ${\cal I}_\Gamma(z)$ by differentiating under the integration sign. The sections $\tau_p(z): \Delta_p \rightarrow \mathbb{C}$ differentiate to the corresponding branches of  
    \begin{equation}\label{eq:77}
        \partial_{(e_i,\alpha)}\bullet f^{-s}=-s_if_1^{-s_1}\cdots f_i^{-s_i-1}\cdots f_\ell^{-s_\ell}x^{\alpha}.
    \end{equation} 
    The operators generating $I_A$ are $\prod_{i=1}^\ell \prod_{\alpha\in A_i}\partial_{(e_i,\alpha)}^{u_{i,\alpha}} - \prod_{i=1}^\ell \prod_{\alpha\in A_i}\partial_{(e_i,\alpha)}^{v_{i,\alpha}}$. We calculate
    \[
\prod_{i=1}^\ell
\prod_{\alpha\in A_i}\partial_{(e_i,\alpha)}^{u_{i,\alpha}}\bullet \, {\cal I}_\Gamma(z)
\, = \, \prod_{i=1}^\ell\prod_{j=0}^{u_i-1}(-s_i-j)\int_{\Gamma(z)}f_1^{-s_1-u_1}\cdots f_\ell^{-s_\ell-u_\ell}x^{\nu+\sum_{i=1}^\ell\sum_{\alpha\in A_i}u_{i,\alpha}\alpha}\frac{{\rm d}x}{x}
.
    \]
    Doing the same for $\prod_{i=1}^\ell
\prod_{\alpha\in A_i}\partial_{(e_i,\alpha)}^{v_{i,\alpha}}$ and applying \eqref{eq:76} we see that $I_A$ annihilates ${\cal I}_\Gamma(z)$. 

    It remains to verify that the other operators in the GKZ system annihilate ${\cal I}_\Gamma(z)$ as well: 
    \begin{equation} \label{eq:gkzagain}
    \sum_{\alpha\in A}z_\alpha\partial_\alpha\bullet {\cal I}_\Gamma(z)\alpha- {\cal I}_\Gamma(z)\beta \, = \, 0.
    \end{equation}
    The first $\ell$-entries come from the homogeneity relation
    \[
    -s_if_i^{-s_i} \, = \, -s_i\left(\sum_{\alpha\in A_i}z_{i,\alpha}x^\alpha\right)f_i^{-s_i-1} \, = \, \sum_{\alpha\in A_i}z_{i,\alpha}\partial_{(e_i,\alpha)}\bullet f_i^{-s_i}.
    \]
    For any $j=1,\dots,n$, the $(\ell+j)$-th entry of \eqref{eq:gkzagain} is derived by differentiating under the integral sign and observing that the result is the pairing of $\Gamma(z)$ with the exact $n$-form
    \[
    \nabla_\omega\left( (-1)^{j-1}\frac{{\rm d}x_1\wedge\cdots {\rm d}x_{j-1}\wedge {\rm d}x_{j+1}\cdots \wedge {\rm d}x_n}{x_1\cdots x_{j-1}x_{j+1}\cdots x_n}\right)
    =-\left(\sum_{i=1}^\ell s_if_i^{-1}\sum_{\alpha\in A_i}z_{i,\alpha}x^\alpha \alpha_j+\nu_j\right)\frac{{\rm d} x}{x}.
    \]
    Here, $\alpha_j$ is the $j$-th entry of $\alpha$. This generalizes what happened for $P_2$ in Example \ref{ex:4.2}.
\end{proof}

Proposition \ref{prop:gkz} implies the following homogeneity property for the function ${\cal I}_\Gamma(z)$. 

\begin{lemma} \label{lem:homogeneity}
    Let $A = \{ \alpha_1, \ldots, \alpha_N \} \subset \mathbb{Z}^d$, $\beta \in \mathbb{C}^d$ and let $U\subset \mathbb{C}^N$ be an open subset. If $f: U \rightarrow \mathbb{C}$ is holomorphic, then $(A\theta-\beta)\bullet f(z)=0$ if and only if, for all $z \in U$ and for all $u \in (\mathbb{C}^*)^d$ such that $(u^{\alpha_1} z_1, \ldots, u^{\alpha_N} z_N) \in U$, we have 
    \begin{equation}\label{eq:94}
        f(u^{\alpha_1} z_1, \ldots, u^{\alpha_N} z_N)\, = \, u^{\beta} \, f(z).
    \end{equation}
\end{lemma}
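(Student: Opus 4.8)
The plan is to deduce both implications from the fact that the $d$ scalar operators making up $A\theta-\beta$ are precisely the infinitesimal generators, at the identity, of the torus action $u\cdot z=(u^{\alpha_1}z_1,\dots,u^{\alpha_N}z_N)$ of $(\mathbb{C}^\ast)^d$ on coefficient space. Here I write $\theta_j=z_j\partial_j$ and let $(A\theta)_k=\sum_{j=1}^N \alpha_{j,k}\,\theta_j$ denote the $k$-th component of $A\theta$, for $k=1,\dots,d$, so that $(A\theta-\beta)\bullet f=0$ unpacks into the $d$ equations $(A\theta)_k\bullet f=\beta_k f$. I also fix, once and for all, the branch of the multivalued function $u^\beta=\exp\big(\sum_k\beta_k\log u_k\big)$ normalized by $\mathbf{1}^\beta=1$, where $\mathbf{1}=(1,\dots,1)$; this is the only point where the complex nature of $\beta$ needs attention.

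For the implication \eqref{eq:94}$\Rightarrow(A\theta-\beta)\bullet f=0$: fix $z\in U$. Since $U$ is open, \eqref{eq:94} holds for all $u$ in a neighborhood of $\mathbf{1}$, so I may apply the Euler operator $u_k\partial_{u_k}$ to both sides. On the right one gets $u_k\partial_{u_k}\big(u^\beta f(z)\big)=\beta_k u^\beta f(z)$, and on the left the chain rule gives $u_k\partial_{u_k}f(u^{\alpha_1}z_1,\dots)=\sum_{j=1}^N \alpha_{j,k}\,u^{\alpha_j}z_j\,(\partial_j f)(u^{\alpha_1}z_1,\dots)$. Evaluating at $u=\mathbf{1}$ yields $(A\theta)_k\bullet f(z)=\beta_k f(z)$, which is the $k$-th component of the desired identity.

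For the converse, suppose $(A\theta-\beta)\bullet f=0$ on $U$. Fix $z\in U$ and, on the connected component of $\mathbf{1}$ inside the open set $\{\,u\in(\mathbb{C}^\ast)^d:(u^{\alpha_1}z_1,\dots,u^{\alpha_N}z_N)\in U\,\}$, introduce $F(u)=u^{-\beta}f\big(u^{\alpha_1}z_1,\dots,u^{\alpha_N}z_N\big)$. Writing $w=w(u)=(u^{\alpha_1}z_1,\dots,u^{\alpha_N}z_N)\in U$, the same chain-rule computation shows $u_k\partial_{u_k}f(w)=\sum_j\alpha_{j,k}w_j(\partial_j f)(w)=\big((A\theta)_k\bullet f\big)(w)=\beta_k f(w)$ by hypothesis, so the product rule gives $u_k\partial_{u_k}F(u)=-\beta_k u^{-\beta}f(w)+u^{-\beta}\beta_k f(w)=0$. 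Since $u_k\neq 0$, all ordinary partials $\partial_{u_k}F$ vanish, hence $F$ is constant on this connected open set, equal to $F(\mathbf{1})=f(z)$; rearranging gives $f(w(u))=u^\beta f(z)$, which is \eqref{eq:94} on that component.

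I do not anticipate a genuine obstacle: everything reduces to the chain rule and the explicit shape of the GKZ homogeneity operators. The two items needing care are the ones already flagged — keeping the branch of $u^\beta$ fixed by $\mathbf{1}^\beta=1$, and the fact that $\{u:u\cdot z\in U\}$ may fail to be connected, so the scaling relation is first obtained on the component of the identity and, if one wants it globally, propagated to adjacent chambers by analytic continuation of the holomorphic function $f$ along paths staying within that set.
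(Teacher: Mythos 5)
Your proof is correct and follows essentially the same route as the paper: the forward implication is obtained in both cases by applying $u_k\partial_{u_k}$ to \eqref{eq:94} and evaluating at $u=\mathbf{1}$, and the converse is the same integration of the infinitesimal relation, which the paper phrases as uniqueness for the ODE $u_i\partial_{u_i}\phi=\beta_i\phi$ in each coordinate separately while you phrase it as constancy of $u^{-\beta}f(u^{\alpha_1}z_1,\ldots,u^{\alpha_N}z_N)$ on the identity component. Your explicit attention to the branch of $u^{\beta}$ and to possible disconnectedness of $\{u: (u^{\alpha_1}z_1,\ldots,u^{\alpha_N}z_N)\in U\}$ is a welcome refinement that the paper's proof leaves implicit, but it does not change the substance of the argument.
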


\begin{proof}
    Suppose $f(z)$ satisfies \eqref{eq:94}. We fix any $1\leq i\leq d$.
    Taking the derivative of \eqref{eq:94} with respect to $u_i$ and substituting $u=(1,\dots,1)$, we obtain the $i$-th entry of the identity $(A\theta-\beta)\bullet f(z)=0$. For the other direction, suppose $f(z)$ is annihilated by $A\theta-\beta$.
    To prove \eqref{eq:94}, it is enough to prove it for $u(i)=u_i \cdot e_i$ with $u_i\in\mathbb{C}^*$, and $e_i$ the standard basis vector.
    For any $z\in U$, the functions $\phi_1: u_i \mapsto f(u(i)^{\alpha_1}z_1,\dots,u(i)^{\alpha_N}z_N)$ and $\phi_2:u_i \mapsto u(i)^\beta f(z)$ are both annihilated by $u_i\frac{\partial}{\partial u_i}-\beta_i$.
    We also have $\phi_1(1) = \phi_2(1)$.
    Unique solvability of the initial value problem of an O.D.E. implies $\phi_1 = \phi_2$, so \eqref{eq:94} holds for $u = u(i)$.
\end{proof}

Often, in applications, one is interested in differential equations satisfied by ${\cal I}_\Gamma(z)$ after specializing the parameters $z$. This is the case for Feynman integrals, in which the coefficients $z_{i,\alpha}$ depend linearly on the \emph{Mandelstam invariants}. For the case of Example \ref{ex:0.2}, we now show how the equations in these new variables can be derived from the GKZ system.

\begin{example}
    We revisit the Euler integral \eqref{eq:Itriangle} of Example \ref{ex:triangle}. The differential equations satisfied by this integral as a function of $t_1, t_2, t_3$ are the running example of \cite{henn2023d}. In this case, they can be derived in an easy way from the GKZ differential equations for 
    \[{\cal I}_{\Gamma}(z) \, = \, \int_{\Gamma} \frac{x_1^{\nu_1} x_2^{\nu_2} x_3^{\nu_3}}{( z_1 \, x_1 + z_2 \, x_2 + z_3 \, x_3 + z_4  \, x_2 x_3 + z_5 \, x_1 x_3 + z_6 \,  x_1 x_2)^s} \, \frac{{\rm d}x_1{\rm d}x_2{\rm d}x_3}{x_1x_2x_3}. \]
    Here $\ell=1$, $n=3$, $\beta=-(s,\nu_1,\nu_2,\nu_3)$ and $A$ consists of the columns of
    \[
    \begin{pmatrix}
        1&1&1&1&1&1\\
        1&0&0&0&1&1\\
        0&1&0&1&0&1\\
        0&0&1&1&1&0
    \end{pmatrix}
    .
    \]
    We denote these columns by  $\alpha_1,\dots,\alpha_6 \in \mathbb{Z}^4$, and set $\partial_i=\partial_{\alpha_i}$ for brevity.
    The toric ideal $I_A$ is generated by $\Tilde{P}_1=\partial_{1}\partial_{4}-\partial_{3}\partial_{6}$ and $\Tilde{P}_2=\partial_{2}\partial_{5}-\partial_{3}\partial_{6}$.
    These operators can be found using any computer algebra software. For instance, in \texttt{Macaulay2} \cite{M2}, the commands are
    \begin{minted}{julia}
needsPackage "Quasidegrees"
A = matrix{{1,1,1,1,1,1},{1,0,0,0,1,1},{0,1,0,1,0,1},{0,0,1,1,1,0}}
D = QQ[d_1..d_6]
T = toricIdeal(A,D)
    \end{minted}
    The remaining operators constituting the GKZ system are
    \begin{equation} \label{eq:euler} 
    \textstyle \sum_{k=1}^6 \theta_k + s,  \quad   \theta_1 + \theta_5 + \theta_6+ \nu_1, \quad 
    \theta_2 + \theta_4 + \theta_6 + \nu_2, \quad 
    \theta_3 + \theta_4 + \theta_5 + \nu_3,
    \end{equation}
    where $\theta_k = z_k \partial_k$.
    Up to replacing $\mathbb{R}^3_+$ with $\Gamma$, the integral ${\cal I}_G$ in \eqref{eq:Itriangle} is ${\cal I}_{\Gamma}(1,1,1,-t_1,-t_2,-t_3)$. Notice that here one could take $\Gamma$ to be the \emph{regularization} of $\mathbb{R}^3_+$, see Theorem* \ref{thm:backtosec1}.
    To turn our operators in $z_1, \ldots, z_6$ into differential operators in $t_1,t_2,t_3$, we use the homogeneity condition from Lemma \ref{lem:homogeneity}:
    \begin{equation*}
{\cal I}_\Gamma(u^{\alpha_1}z_1,\dots,u^{\alpha_6}z_6)\, = \, u^{\beta} \, {\cal I}_\Gamma(z).
    \end{equation*}
    We adopted the usual notation $u^\beta=u_1^{-s}u_2^{-\nu_1}u_3^{-\nu_2}u_4^{-\nu_3}$.
    For $u=(1,z_1^{-1},z_2^{-1},z_3^{-1})$ this reads
    \begin{equation}\label{eq:homogeneity}
        {\cal I}_\Gamma\left(1,1,1,\frac{z_4}{z_2z_3},\frac{z_5}{z_1z_3},\frac{z_6}{z_1z_2}\right)
        \, = \,
    z_1^{\nu_1}z_2^{\nu_2}z_3^{\nu_3}{\cal I}_\Gamma(z).
    \end{equation}
    We differentiate \eqref{eq:homogeneity} with respect to $z_1$, and \emph{afterwards} we substitute $z = (\mathbbm{1},-t) = (1,1,1,-t_1,-t_2,-t_3)$. Using $(\partial_5{\cal I}_\Gamma)(\mathbbm{1},-t) = - \partial_{t_2}({\cal I}_\Gamma(\mathbbm{1},-t))$ and similarly for $\partial_6{\cal I}_\Gamma$, we~get
    \begin{align*}
    (\partial_1 {\cal I}_\Gamma)(\mathbbm{1},-t)&=-\left(\nu_1+t_2\partial_{t_2}+t_3\partial_{t_3}\right)\bullet {\cal I}_\Gamma(\mathbbm{1},-t)\\
    (\partial_2 {\cal I}_\Gamma)(\mathbbm{1},-t)&=-\left(\nu_2+t_1\partial_{t_1}+t_3\partial_{t_3}\right)\bullet {\cal I}_\Gamma(\mathbbm{1},-t)\\
    (\partial_3 {\cal I}_\Gamma)(\mathbbm{1},-t)&=-\left(\nu_3+t_1\partial_{t_1}+t_2\partial_{t_2}\right)\bullet {\cal I}_\Gamma(\mathbbm{1},-t).    
    \end{align*}
    We now eliminate $\partial_1{\cal I}_{\Gamma},\partial_2{\cal I}_{\Gamma},\partial_3{\cal I}_{\Gamma}$ in terms of $\partial_{t_1},\partial_{t_2},\partial_{t_3}$.
    E.g., the first equation of~\eqref{eq:GKZ} is
    \[
    \Tilde{P}_3\bullet {\cal I}_{\Gamma}(z) =\left( z_1\partial_1+\cdots+z_6\partial_6+s\right)\bullet {\cal I}_{\Gamma}(z)=0.
    \]
    This is equivalent to the following relation for ${\cal I}_\Gamma(\mathbbm{1},-t)= {\cal I}_\Gamma(1,1,1,-t_1,-t_2,-t_3)$:
    \begin{equation}\label{eq:97}
P_3\bullet {\cal I}_\Gamma(\mathbbm{1},-t)=0,\,\,\,P_3 =t_1\partial_{t_1}+t_2\partial_{t_2}+t_3\partial_{t_3}+\nu_1+\nu_2+\nu_3-s.
    \end{equation}
    Likewise, we obtain a relation 
    \begin{align}
        (\partial_4\partial_1{\cal I}_\Gamma)(\mathbbm{1},-t)& \, = \, -\partial_{t_1}\bullet(\partial_1{\cal I}_\Gamma)(\mathbbm{1},-t) \, = \, \partial_{t_1}\bullet\left(\nu_1+t_2\partial_{t_2}+t_3\partial_{t_3}\right)\bullet {\cal I}_\Gamma(\mathbbm{1},-t)\label{eq:98}\\
        & \, =-\partial_{t_1}\bullet\left(\nu_2+\nu_3-s+t_1\partial_{t_1}\right)\bullet {\cal I}_\Gamma(\mathbbm{1},-t)\label{eq:99}.
    \end{align}
    Here, we used \eqref{eq:97} when passing from \eqref{eq:98} to \eqref{eq:99}.
    Repeating this for the monomials in $\tilde{P}_1$, $\tilde{P}_2$, we find that ${\cal I}_{\Gamma}(1,1,1,-t_1,-t_2,-t_3)$ is annihilated by the following operators:
    \begin{align*}
P_1&=t_1\partial_{t_1}^2-t_3\partial_{t_3}^2+(1-s+\nu_2+\nu_3)\partial_{t_1}-(1-s+\nu_1+\nu_2)\partial_{t_3}\\
P_2&=t_2\partial_{t_2}^2-t_3\partial_{t_3}^2+(1-s+\nu_1+\nu_3)\partial_{t_2}-(1-s+\nu_1+\nu_2)\partial_{t_3}\\
P_3&=t_1\partial_{t_1}+t_2\partial_{t_2}+t_3\partial_{t_3}+\nu_1+\nu_2+\nu_3-s.
    \end{align*}
    These operators agree with the ones in \cite[Equation (2.6)]{henn2023d} after setting $s={\rm D}/2$.
\end{example}

The \emph{local solutions} of $H_A(\beta)$ at a point $z^*\in\mathbb{C}^A$ are given by the direct limit
\[
\sol_{H_A(\beta),z^*} \, = \, \lim_{\substack{\longrightarrow \\ z^*\in U}}\sol_{H_A(\beta)}(U).
\]
Elements of $\sol_{H_A(\beta),z^*}$ are represented by holomorphic solutions of $H_{A}(\beta)$, defined on a sufficiently small open neighborhood $U$ of $z^*$. Theorem \ref{thm:maingkz} describes it in terms of integrals. 
For a facet $Q$ of the polyhedral cone ${\rm pos}(A)$, we write $r_Q$ for the primitive ray generator of the dual ray $\{ y \in (\mathbb{R}^d)^\vee \, :\, y \cdot q \geq 0 \text{ for all } q \in Q \}$.
A complex vector $\beta\in\mathbb{C}^{n+\ell}$ is said to be {\it non-resonant} if $r_Q\cdot \beta\notin\mathbb{Z}$ for any facet $Q$ of ${\rm pos}(A)$.
The following is \cite[Theorem 2.10]{gelfand1990generalized}:
 \begin{theorem} \label{thm:maingkz}
     Let $\beta=-(s,\nu)\in\mathbb{C}^{\ell + n}$ be non-resonant.
     For any $z^*\in \mathbb{C}^A$, the map $H_n(X_{z^*},-\omega(z^*)) \rightarrow \sol_{H_A(\beta),z^*}$ given by $[\Gamma(z^*)]\mapsto{\cal I}_\Gamma(z) $ is a vector~space isomorphism.
 \end{theorem}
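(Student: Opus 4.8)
The plan is to establish the statement in three steps: (i) the map $[\Gamma(z^*)]\mapsto\mathcal{I}_\Gamma(z)$ is well defined and $\mathbb{C}$-linear; (ii) it is injective; (iii) it is surjective. Step (i) is essentially formal. Linearity is immediate from the definition of the pairing. For well-definedness I would note that, once a family $\Gamma(z)$ extending $[\Gamma(z^*)]$ is fixed as in \eqref{eq:gammaofz}, the function $\mathcal{I}_\Gamma$ is holomorphic near $z^*$ and lies in $\sol_{H_A(\beta)}(U)$ by Proposition~\ref{prop:gkz}; and the germ obtained depends neither on the representative of the homology class nor on the choice of extension, because exact $n$-forms and twisted boundaries pair to zero (Lemma~\ref{lem:modeverything}) and the spaces $H_n(X_z,-\omega(z))$ form a local system over a neighbourhood of $z^*$ whose flat sections are precisely the families built by fixing the singular simplices $\Delta_p$ and varying the loaded branches $\tau_p(z)$ holomorphically. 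So the assignment is canonical.

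For step (ii) I would use the perfect period pairing of Theorem~\ref{thm:perfectpairing}. Suppose $\mathcal{I}_\Gamma\equiv 0$ on a neighbourhood of $z^*$; then every partial derivative $\partial^u\mathcal{I}_\Gamma$ vanishes at $z^*$. Differentiating under the integral sign exactly as in the proof of Proposition~\ref{prop:gkz}, but now keeping $\Gamma(z^*)$ fixed and evaluating at $z=z^*$, yields
\[
\partial^u\mathcal{I}_\Gamma(z^*)\;=\;c_u\cdot\Bigl\langle\,\Gamma(z^*),\;\frac{x^{m(u)}}{f_1^{u_1}\cdots f_\ell^{u_\ell}}\,\frac{{\rm d}x}{x}\,\Bigr\rangle,\qquad c_u\;=\;\prod_{i=1}^{\ell}\prod_{j=0}^{u_i-1}(-s_i-j),
\]
with $u_i=\sum_\alpha u_{i,\alpha}$ and $m(u)=\sum_{i,\alpha}u_{i,\alpha}\,\alpha\in\mathbb{Z}^n$; the forms on the right genuinely lie in $\Omega^n(X_{z^*})$, since each $f_i$ is a unit there. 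Non-resonance of $\beta=-(s,\nu)$ forces $s_i\notin\mathbb{Z}$, hence $c_u\neq 0$, so $\langle\Gamma(z^*),\phi\rangle=0$ for every form $\phi=\prod_i f_i^{-u_i}\,x^{m(u)}\,{\rm d}x/x$ with $u\in\mathbb{N}^A$. The remaining point is that the classes of these forms span $H^n(X_{z^*},\omega(z^*))$; I would prove this by a Griffiths--Dwork pole-order reduction, which under non-resonance rewrites any $x^c f_1^{-a_1}\cdots f_\ell^{-a_\ell}\,{\rm d}x/x$ modulo $\nabla_\omega$-exact forms in terms of such building blocks, together with the Newton-polytope bounds that make $H^n$ finite-dimensional. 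Perfectness of the pairing then gives $[\Gamma(z^*)]=0$.

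For step (iii) I would argue by dimension count, using that an injective linear map between finite-dimensional spaces of equal dimension is bijective. On the homology side $\dim_{\mathbb{C}}H_n(X_{z^*},-\omega(z^*))=|\chi(X_{z^*})|$ by Corollary~\ref{cor:dimchi} (non-resonance of $\beta$ places $(s,\nu)$ in the generic locus where the vanishing theorem applies), and $|\chi(X_{z^*})|$ equals the normalized volume of the Cayley polytope $\mathrm{conv}(A)$ from \eqref{eq:cayley2} by a standard computation of Euler characteristics of very affine varieties via mixed volumes (the Cayley trick). On the solution side the GKZ system $H_A(\beta)$ is holonomic, and for non-resonant $\beta$ its holonomic rank equals the same normalized volume; hence $\dim_{\mathbb{C}}\sol_{H_A(\beta),z^*}$ equals it for $z^*$ off the singular locus, and non-resonance rules out a rank drop at special $z^*$. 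Alternatively, surjectivity can be made explicit: at a nonsingular $z^*$ the solution space is spanned by the $\Gamma$-series attached to a unimodular triangulation of $A$, each of which is visibly $\mathcal{I}_\Gamma$ for an explicit small-torus cycle $\Gamma$, and one then analytically continues.

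The main obstacle is the pair of ``rank'' inputs feeding step (iii) together with the spanning claim in step (ii): both are true but genuinely non-trivial and resonance-sensitive. The equality of the holonomic rank of $H_A(\beta)$ with the normalized volume for non-resonant $\beta$, uniformly in $z^*$, is the hard theorem of $A$-hypergeometric systems (Adolphson, with the behaviour under specialization due to Saito--Sturmfels--Takayama and Matusevich--Miller--Walther), and the spanning of $H^n(X_{z^*},\omega)$ by derivatives of the integrand is the assertion that, under non-resonance, twisted de Rham cohomology is a cyclic module generated by $[{\rm d}x/x]$ over the algebra generated by the shift/derivative actions. Everything else --- holomorphy, the GKZ equations themselves, and the reduction of injectivity to the period pairing --- is formal given the results already available in the excerpt.
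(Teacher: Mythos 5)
The first thing to say is that the paper does not prove Theorem \ref{thm:maingkz} at all: it is quoted verbatim from \cite[Theorem 2.10]{gelfand1990generalized}. So your proposal is not an alternative to a proof in the text, but an attempted reconstruction of GKZ's theorem. Your skeleton (well-definedness via Proposition \ref{prop:gkz}, injectivity via the period pairing of Theorem \ref{thm:perfectpairing}, surjectivity by a dimension count) is the standard strategy, but the two items you yourself flag as ``true but genuinely non-trivial'' are essentially the whole content of the theorem, and two concrete steps fail as written.

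First, the claim that non-resonance of $\beta=-(s,\nu)$ forces $s_i\notin\mathbb{Z}$ is not correct in general. The coordinate functional reading off $s_i$ is a facet normal of ${\rm pos}(A)$ only when the remaining Cayley blocks $\{(e_j,\alpha):j\neq i\}$ span a hyperplane, which typically requires $\ell\geq 2$; for $\ell=1$ (already the beta example, where $A=\{(1,0),(1,1)\}$) non-resonance only constrains $\nu$ and $s-\nu$, and $s$ may be a nonpositive integer, so your constants $c_u$ can vanish and the derivative argument no longer produces pairings against the forms you need. Moreover, even when all $c_u\neq 0$, the assertion that the classes $[x^{m(u)}f^{-u}\,{\rm d}x/x]$, $u\in\mathbb{N}^A$, span $H^n(X_{z^*},\omega)$ is exactly the cyclicity/reduction statement under non-resonance; invoking ``Griffiths--Dwork'' does not discharge it, and under Assumption-free coefficients and arbitrary $z^*$ this is a hard step. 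Second, the surjectivity count does not cover the statement as given, which is for \emph{any} $z^*\in\mathbb{C}^A$: at points of the principal $A$-determinant both $|\chi(X_{z^*})|$ and the space of holomorphic solution \emph{germs} at $z^*$ generally drop below ${\rm vol}({\rm conv}(A))$, so ``non-resonance rules out a rank drop at special $z^*$'' conflates the generic holonomic rank (Adolphson, SST, MMW) with the dimension of germs at a singular point; proving that these two smaller dimensions coincide at such $z^*$ is precisely where GKZ use non-resonance seriously. Relatedly, Corollary \ref{cor:dimchi} applies only for $(s,\nu)$ in the open set $U$ of Theorem \ref{thm:3.3}, and you would still need to argue that non-resonance places $(s,\nu)$ in $U$ (or replace that input). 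So the outline is sensible, but as a proof it has genuine gaps at the injectivity constants, the spanning claim, and the dimension comparison at non-generic $z^*$.
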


\begin{remark}
    Theorem \ref{thm:maingkz} implies that, for generic $s,\nu$ (in the sense of the Vanishing Theorem \ref{thm:3.3}), the dimension of the local solution space $\sol_{H_A(\beta),z^*}$ equals the signed Euler characteristic of $X_{z^*}$. For $z^*$ outside an algebraic hypersurface $\{E_A = 0\} \subset \mathbb{C}^A$, this number equals the normalized volume of the convex hull of $A$ \cite[Theorem 5.15]{Adolphson}. The polynomial $E_A$ is the \emph{principal $A$-determinant}, as introduced by Gelfand, Kapranov and Zelevinsky \cite[Chapter 10]{gelfand2008discriminants}. 
\end{remark}
 
While the function ${\cal I}_{\Gamma}$ is an integral over $\Gamma(z)$ against a {\it particular} cohomology class $[\frac{dx}{x}]$, the pairing $\langle\Gamma(z),\phi\rangle$ is well-defined for {\it any} twisted cocycle $[\phi]\in H^n(X_z,\omega)$.
Let $[\phi_1], \ldots, [\phi_\chi]$ be a basis for the twisted cohomology $H^n(X_z,\omega)$ for $z \in U$. Again, by Corollary \ref{cor:dimchi}, there are $\chi = (-1)^n \cdot \chi(X_{z^*})$ basis elements. 
We assume the $[\phi_k]$ have coefficients that are rational functions in $z$.
There exist $\chi \times \chi$-matrices $P_{\alpha}$ $(\alpha\in A)$ such that 
\begin{equation}\label{eq:Pfaffian}
\partial_{\alpha} \bullet \begin{pmatrix}
    \langle \Gamma(z), \phi_1 \rangle \\ \vdots \\ \langle \Gamma(z), \phi_\chi \rangle
\end{pmatrix} \, = \, P_\alpha \cdot \begin{pmatrix}
    \langle \Gamma(z), \phi_1 \rangle \\ \vdots \\ \langle \Gamma(z), \phi_\chi \rangle
\end{pmatrix}.
\end{equation}
Here $\partial_\alpha$ acts entry-wise on vectors. These expressions form the so-called {\it Pfaffian system}. The Pfaffian system can be derived from a system of differential operators, like a GKZ system. The general procedure is explained in \cite[Section 3]{chestnov2022macaulay}.

\begin{remark}
    Pfaffian systems lead to one of the most efficient ways of evaluating Feynman integrals \cite{Henn:2013pwa}. In practice, \eqref{eq:Pfaffian} can be solved by providing boundary conditions $\langle \Gamma(z^\ast), \phi_i \rangle$ for $i=1,\ldots,\chi$ at some $z=z^\ast$ and using path-ordered exponentiation of the matrices $C_\alpha$ to evaluate the Pfaffian system at other values of $z$. See \cite{Henn:2014qga} for a pedagogical introduction.
\end{remark}

\section{Open problems} \label{sec:5}

The previous sections provide an overview of the basics of Euler integrals. While this is a classical topic, the theory is currently still very much in development. We conclude with a list of open research problems, hoping that the reader will join this effort.  

\begin{enumerate}[leftmargin=*]
    \item \textbf{Evaluating integrals numerically.} When the integral \eqref{eq:Eulermellin} converges, it can be evaluated numerically using sector decomposition and Monte Carlo integration, see Remark \ref{rem:tropical}. When the real part of the exponents is large, it becomes increasingly important to concentrate the Monte Carlo samples in the close neighborhood of the critical point $a$ from Section \ref{sec:2.3}. This could lead to effective numerical algorithms for evaluating convergent Euler integrals, with applications in Bayesian statistics \cite{borinsky2023bayesian}. Likewise, in physics applications one often has to analytically continue in the parameters $(s,\nu)$ before numerical evaluation, which can be achieved using the results of Sections~\ref{sec:3.3} and \ref{sec:difference}.

    \item \textbf{Generic parameters.} Theorems \ref{thm:huh} and \ref{thm:3.3} make genericity assumptions on the exponents $s, \nu$. In Theorem \ref{thm:3.3} that means \emph{outside a closed algebraic subvariety}, in Theorem \ref{thm:3.3} it means \emph{outside a countable union of hyperplanes}. The former can be seen as a limit of the latter, by driving the parameter $\delta$ from Section \ref{sec:2} to zero \cite{matsubara2023twisted}. It is interesting to describe these hyperplanes explicitly, and investigate this problem in more detail. 

    \item \textbf{Non-generic parameters.}
    In physics applications, one often encounters Euler integrals with \emph{special} coefficients and parameters. In particular, these are \emph{not} generic in the sense of Theorems \ref{thm:huh} and \ref{thm:3.3}. It would be interesting to develop the analogous theory applicable to such cases, perhaps along the lines of \cite{Caron-Huot:2021xqj,matsumoto2019relative}.

    \item \textbf{Regularized integration cycles.} As mentioned in Section \ref{sec:3}, a rigorous proof of Theorem* \ref{thm:backtosec1} for general Euler integrals is currently still missing. 

    \item \textbf{Nice bases of cohomology.} There are several reasons for which it is favorable to use basis element for cohomology which are represented by \emph{${\rm dlog}$ forms} \cite{saito1980theory}. These are regular $n$-forms obtained as ${\rm dlog}$ of a rational function. Another notion of a nice basis is related to so-called \emph{canonical} differentials equations for Feynman integrals \cite{Henn:2013pwa}. In both cases, it would be interesting to find criteria for such bases to exist.

    \item \textbf{Beyond Euler integrals.} While our framework deals with Euler integral defined by \eqref{eq:integralintro}, there are other types of integrals that resemble it.
    The list includes {\it exponential integrals} \cite{fresan2020quadratic,majima2000quadratic}, {\it matrix hypergeometric integrals} \cite{hashiguchi2013holonomic}, and  integrals over ${\cal M}_{g,n}$ \cite{Felder:1995iv,Witten:2012bh}.
    Theorems stated in this article mostly remain unsolved for these integrals.

    \item \textbf{Intersection pairing.}
    The \emph{intersection pairing} is a canonically defined operation on a twisted cohomology, which can be used to reduce twisted cocycles to a basis \cite{cho1995intersection,matsumoto1998intersection,Mizera:2017rqa}. Efficient evaluation of intersection pairing remains an important computational challenge.

    \item \textbf{$\chi$-Stratification.} In Section \ref{sec:differential}, the very affine variety $X_z$ depends on the coefficients $z$ of the Laurent polynomials. We propose to study the loci in coefficient space on which the Euler characteristic $\chi(X_z)$ is constant. E.g., for which $z \in (\mathbb{C}^*)^A$ is $|\chi(X_z)|$ minimal?
\end{enumerate}

\section*{Acknowledgements}
This article served as accompanying notes for a series of four lectures given by the third named author at the Max Planck Institute for Mathematics in the Sciences in Leipzig. We thank J\"org Lehnert for organizing these lectures, and
Raluca Vlad for pointing out several typos in a previous version. 

This material is based upon work supported by the Sivian Fund and the U.S. Department of Energy, Office of Science, Office of High Energy Physics under Award Number DE-SC0009988.
It is also supported by JSPS KAKENHI Grant Number 22K13930 and partially supported by JST CREST Grant Number JP19209317.

\addcontentsline{toc}{section}{References}

\bibliographystyle{JHEPalph}
\bibliography{references.bib}

\subsection*{Authors' addresses:}

\noindent Saiei-Jaeyeong Matsubara-Heo, Kumamoto University \hfill \href{mailto:saiei@educ.kumamoto-u.ac.jp}{\tt saiei@educ.kumamoto-u.ac.jp}

\noindent Sebastian Mizera, IAS Princeton \hfill \href{mailto:smizera@ias.edu}{\tt smizera@ias.edu}

\noindent Simon Telen, MPI-MiS Leipzig \hfill \href{mailto:simon.telen@mis.mpg.de}{\tt simon.telen@mis.mpg.de}

\end{document}